\documentclass[11pt,a4paper]{amsart}

\usepackage[left=1in,right=1in,top=1in,bottom=1in,foot=0.5in]{geometry}

\usepackage{amsmath,amsfonts,amssymb}
\usepackage[T1]{fontenc}
\usepackage{color}
\usepackage{epsfig}
\usepackage{graphicx}
\usepackage{cite,url}
\usepackage{hyperref}
\usepackage{cleveref}
\usepackage{xspace}
\usepackage{enumerate}

\usepackage{tikz}
\usepackage{multicol}
\usepackage[linesnumbered]{algorithm2e}

\usepackage{caption}
\usepackage{subcaption}

\newtheorem{thml}{Theorem}

\newtheorem{mytheorem}{Theorem}[section]
\newtheorem{mylemma}[mytheorem]{Lemma}
\newtheorem{myproposition}[mytheorem]{Proposition}
\newtheorem{mycorollary}[mytheorem]{Corollary}
\newtheorem{myclaim}[mytheorem]{Claim}

\theoremstyle{definition}

\newcommand{\cO}{\ensuremath{\mathcal{O}}\xspace}

\DeclareMathOperator{\ecc}{ecc}

\DeclareMathOperator{\proj}{pr}

\DeclareMathOperator{\diam}{diam}
\DeclareMathOperator{\rad}{rad}

\begin{document}

\title{A fine-grained dichotomy for the center problem on Gromov hyperbolic graphs}

\author[G.\ Ducoffe]{Guillaume Ducoffe} \address{National Institute
 for Research and Development in Informatics and University of
  Bucharest, Bucureşti, Rom\^{a}nia} \email{guillaume.ducoffe@ici.ro}

\date{}

\begin{abstract}
A vertex in a graph is called {\em central} if it minimizes its maximum distance to the other vertices. The {\em radius} of a graph $G$ is the largest distance between a central vertex and the other vertices, and it is denoted by $\rad(G)$. In the center problem, we are asked to find a central vertex.
 We study the fine-grained complexity of the center problem on graphs with small Gromov hyperbolicity. Roughly, the Gromov hyperbolicity of a graph represents how close, locally, it is to a tree, from a metric point of view. It has applications in the design of approximation algorithms. In particular, there is a linear-time algorithm that for every $\delta$-hyperbolic graph $G$ outputs some vertex at distance at most $\rad(G) + 5\delta$ to the other vertices [Chepoi et al, {\it SoCG}'08]. However, a linear-time algorithm for computing a central vertex is known only for $0$-hyperbolic graphs, whereas its existence was ruled out for $2$-hyperbolic graphs under the Hitting Set Conjecture of [Abboud et al, {\it SODA}'16]. Our main contribution in the paper is a linear-time algorithm for computing a central vertex in the class of $\frac 1 2$-hyperbolic graphs. Furthermore, we rule out the existence of such an algorithm for $1$-hyperbolic graphs, under the Hitting Set Conjecture, thus completely settling all the cases left open. 
%
\end{abstract}

\maketitle

\section{Introduction}\label{sec:intro}
We study the finer-grained complexity of facility location problems on graphs with respect to some properties of their distance function. 
This is in contrast with structural parameterizations for these problems, such as treewidth~\cite{AVW16} and clique-width~\cite{jDuc22e}.
The structure of graphs sharing distance properties with the classical metric and geometric spaces (\textit{e.g.}, Euclidean spaces, $\ell_p$-spaces, Hyperbolic spaces, etc.) has long been studied~\cite{BaCh-survey}. These prior works are mostly devoted to graphs that are undirected, unweighted, connected, and simple (\textit{i.e.}, with no loops nor multiple edges). Such is also the setting considered in the paper. We refer to~\cite{BSCW+-weighted-hamming,Cab-convex} for other related works on edge-weighted undirected graphs. For standard graph terminology, see also~\cite{BoMu08}. 
Algorithmic applications of some of these properties to distance computation problems can be found in~\cite{BCC+22,VCDim,DrDuGu_Helly_hyp}. 
We focus on the \textsc{Center} problem, the definition of which is outlined in what follows.

\medskip
\noindent
{\bf The center problem.}
Let $G$ be a graph. For every two vertices $u$ and $v$, their distance, denoted by $d_G(u,v)$, equals the minimum number of edges on a $uv$-path. Let $\ecc_G(v) = \max\{ d_G(u,v) : u \in V(G) \}$ be the \emph{eccentricity} of vertex $v$. The \emph{radius} and the \emph{diameter} of $G$ are defined as $\rad(G) = \min\{ \ecc_G(v) : v \in V(G) \}$ and $\diam(G) = \max\{ \ecc_G(v) : v \in V(G) \}$, respectively. We will omit the subscript if the graph $G$ is clear from the context. A vertex is called \emph{central} if its eccentricity equals the radius. The \textsc{Center} problem asks to find a central vertex. It is a fundamental problem in Network Analysis, and in Location Theory. Note that a naive algorithm that first computes the distance matrix allows one to solve the \textsc{Center} problem in $\cO(nm)$ time on $n$-vertex $m$-edge graphs. This runtime is essentially optimal assuming the Hitting Set Conjecture~\cite{AVW16}, the definition of which is recalled in Sec.~\ref{sec:hard}.
In particular, for the design of linear-time algorithms, we are required to consider more restricted classes of graphs.
Our focus in the paper is on graphs with small Gromov hyperbolicity.

\medskip
\noindent
{\bf Gromov hyperbolic graphs.}
The \emph{Gromov hyperbolicity} of a connected graph $G$ is the least value $\delta$ such that for every four vertices $u,v,x,y$, it holds that $d(u,v) + d(x,y) \le \max\{d(u,x)+d(v,y), d(u,y)+d(v,x)\} + 2\delta$~\cite{Gr}. The definition applies to general metric spaces. It is a relaxation of the four-point characterization of tree metrics~\cite{Bun-trees}.  Note that for graphs, the Gromov hyperbolicity is either a natural number or a positive half-integer (of the form $k/2$, where $k$ is a natural odd number). It has been experimentally verified that the Gromov hyperbolicity is small in practice on some biological networks and communication networks~\cite{AlDr,CCL}, therefore suggesting the use of this parameter in order to better classify complex networks~\cite{AbDr16,KSN16}, and to explain some of their properties~\cite{BoChCa15,Chepoi:2017:CCI:3039686.3039835,JoLo04}. Furthermore, unlike many graph width parameters that are NP-hard to compute, or even to approximate, the Gromov hyperbolicity of a graph can be computed in polynomial time. The naive $\cO(n^4)$-time algorithm has been improved in~\cite{FIV15}. There exist parameterized algorithms~\cite{FKMN+19}, as well as approximation algorithms~\cite{HypApprox,Duan14}, that are even faster. See also~\cite{CNV22}, and the references therein for practical algorithms.

On the algorithmic side, every $\delta$-hyperbolic graph can be embedded in a tree with additive distortion in $\cO(\delta \log{n})$~\cite{Gr}. The latter result can be used in the design of compact labeling schemes for approximate distance computation~\cite{DLS}. For other algorithmic applications, see~\cite{HypCuts,PTAS-TSP}. Our starting point for this work is that for every $\delta$-hyperbolic graph $G$, an almost central vertex: of eccentricity at most $\rad(G)+5\delta$, can be computed in $\cO(n+m)$ time~\cite{ChDrEsHaVa}. This seminal result was later generalized to the approximate computation of all the vertex eccentricities~\cite{Chepoi2018FastEA}, and to an approximation algorithm for the more general $k$-{\sc Center} problem~\cite{ChEs,EKS}. In the paper, we consider \underline{exact} algorithms for the \textsc{Center} problem. 

The combination of Gromov hyperbolicity with other structural and metric properties can be used in the design of linear-time algorithms for the \textsc{Center} problem on various graph classes~\cite{Gpunimodal-ecc,DrDuGu_Helly_hyp}. However, Gromov hyperbolicity on its own is not a strong enough parameter for subquadratic-time center computation; in particular, assuming the Hitting Set conjecture, the radius of $2$-hyperbolic graphs cannot be computed in truly subquadratic-time~\cite{Chepoi2018FastEA}. Insofar, the only positive result in this direction is the linear-time algorithm for computing a central vertex of a $0$-hyperbolic graph. More specifically, the $0$-hyperbolic graphs are exactly the block graphs~\cite{How}, a proper subclass of chordal graphs, and a linear-time algorithm for the \textsc{Center} problem on chordal graphs is known~\cite{ChDr}. (Recall that a graph is chordal if it has no induced cycle of length more than three.) Therefore, the complexity of the \textsc{Center} problem has been left open for the following hyperbolicity values: $\frac 1 2, 1 \ \text{and} \ \frac 3 2$. We stress that for some well-studied classes of graphs, including chordal graphs, weakly chordal graphs, AT-free graphs, distance-hereditary graphs, cocomparability graphs, the link graphs of simple polygons, and the graphs of $7$-systolic simplicial complexes, the hyperbolicity is at most one~\cite{hypChordal,ChDrEsHaVa,WuZh01}. Therefore, bridging the complexity gap for the {\sc Center} problem between $0$-hyperbolic graphs and $2$-hyperbolic graphs is a legitimate research direction. A similar question could be asked for other distance-related problems, such as diameter computation. In this work, we completely settle all of the cases left open for the {\sc Center} problem.

 \subsection{Contributions}\label{ssec:contributions}
 Our main result in the paper is as follows:

 \begin{thml}\label{thm:center:half-hyp}
    The \textsc{Center} problem can be solved in linear time on $\frac 1 2$-hyperbolic graphs.
 \end{thml}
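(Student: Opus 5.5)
Theorem~\ref{thm:center:half-hyp} will be proved with a ``compute an almost central vertex, then repair locally'' strategy. By the linear-time algorithm of~\cite{ChDrEsHaVa}, in a $\delta$-hyperbolic graph we can compute in $\cO(n+m)$ time a vertex $c$ with $\ecc(c) \le \rad(G)+5\delta$. For $\delta=\frac12$ this already gives additive error $\frac52$, hence at most $2$. We will sharpen this using the structure of $\frac12$-hyperbolic graphs. These graphs are close to chordal graphs: they have no isometric cycles of length greater than five, plus some further forbidden isometric subgraphs. For them we aim for the following metric statements, analogous to those known for chordal graphs~\cite{ChDr}.

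\textbf{(1) Eccentricity is almost unimodal.} Every vertex $v$ with $\ecc(v) > \rad(G)+1$ has a neighbour $w$ with $\ecc(w) < \ecc(v)$. Moreover, the neighbours $w$ of $v$ lying on a shortest path towards all of the farthest vertices of $v$ can be characterized via a single furthest vertex $u$ from $v$, using the four-point condition with $\delta=\frac12$.

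\textbf{(2) The center is metrically close to the output.} The center $C(G)$ lies within distance at most $2$ of the output of~\cite{ChDrEsHaVa}. More precisely, if $u$ is the last vertex of a BFS started from an arbitrary vertex and $v$ is the last vertex of a BFS started from $u$, then central vertices lie in a small window around the middle of a shortest $uv$-path. This is Gromov's ``thin triangles'' argument made precise for $\delta=\frac12$.

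\textbf{(3) Exact selection.} Take a candidate set $S$ of vertices at bounded distance from a middle vertex $c$ of a shortest $uv$-path. We cannot compute all eccentricities in $S$, since $S$ may be huge. Instead we use a local-search procedure in the spirit of the chordal-graph algorithm of Chepoi and Dragan. Repeatedly compute a furthest vertex $x$ of the current vertex $c$ by BFS, and move $c$ to a neighbour on a shortest $cx$-path chosen so that it is closest to the projection of all furthest vertices. We then prove, using (1) and hyperbolicity, that this chosen neighbour strictly decreases the eccentricity whenever $c$ is not central. Since the initial error is at most $2$, only $\cO(1)$ moves, each costing one BFS, are needed, so the total time is linear.

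The main obstacle is the exact selection step. A naive descent picks a neighbour towards one furthest vertex, but that may increase the distance to other furthest vertices. For chordal graphs the fix uses the fact that the metric projections of $c$ onto the set of furthest vertices form a clique. For $\frac12$-hyperbolic graphs I expect the projection to have diameter at most $2$ rather than $1$. So I would first try to show the following: let $F$ be the set of vertices at distance $\ecc(c)$ from $c$, and let $P$ be the set of neighbours of $c$ at distance $\ecc(c)-1$ from some vertex of $F$. Then some vertex of $P$ is within distance $\ecc(c)-1$ of all of $F$ unless $\ecc(c)\le\rad(G)+1$, and such a vertex can be found in linear time via a multi-source BFS from $F$. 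This requires a four-point case analysis with $\delta=\frac12$ on pairs $x,y\in F$ and pairs of neighbours $w,w'\in P$.

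Finally, distinguishing $\rad(G)$ from $\rad(G)+1$ among vertices at the end of the descent needs an extra argument. Here I would show that the set of vertices with eccentricity at most $\rad(G)+1$ has bounded diameter. A single additional BFS round from a furthest pair then certifies a vertex of minimum eccentricity.
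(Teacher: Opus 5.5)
There is a genuine gap, and it sits exactly where the difficulty of the theorem lies.

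Reaching a vertex of eccentricity at most $\rad(G)+1$ is the easy part. By Lemma~\ref{lem:ecc-ub}, every vertex of a middle slice of a mutually distant pair already has this property, with no descent needed. Your step (3), however, asserts that the chosen neighbour strictly decreases the eccentricity whenever $c$ is not central. Your statement (1) does not give this, since it only covers $\ecc(c)>\rad(G)+1$, and the assertion is false. In the graph of Fig.~\ref{fig:odd-case}, $\rad(G)=r$ (attained at $y'$ and $z'$), whereas $x'$ has eccentricity $r+1$ and every neighbour of $x'$ has eccentricity at least $r+1$. So a strictly descending local search can stall at a non-central vertex.

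Your last step does not repair this. Even granting that the vertices of eccentricity at most $\rad(G)+1$ form a set of bounded diameter, such a set may contain $\Theta(n)$ vertices. One more BFS from a furthest pair evaluates only $\cO(1)$ eccentricities, so it neither produces a vertex of eccentricity $\rad(G)$ nor certifies that none exists. Similarly, a multi-source BFS from $F$ computes $\min_{x\in F} d(w,x)$, not $\max_{x\in F} d(w,x)$. It therefore cannot find a neighbour $w$ of $c$ with $F\subseteq B_{\ecc(c)-1}(w)$.

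The missing idea is a linear-time test, for many candidates at once, of whether each lies within distance $r$ of \emph{all} far vertices, together with a fallback when this test fails. The paper does this as follows.
\begin{itemize}
\item With a mutually distant pair $u,v$, the even case $d(u,v)=2r$ is settled by Corollary~\ref{cor:even-case}.
\item In the odd case $d(u,v)=2r-1$, Lemma~\ref{lem:odd-case:locate-center} confines every vertex of eccentricity $r$ to $B_1(S_{r-1}(u,v))\cup B_1(S_r(u,v))$.
\item The slices are cliques. The shadows, hence the eccentricities, of all vertices of a clique are computed in linear time using that cliques of CB-graphs are $2$-outergated (Lemma~\ref{lem:intermediate-computations}).
\item The neighbourhood of a slice is not a clique. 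It is handled by the \texttt{process-clique} procedure of Proposition~\ref{prop:odd-case}, which relies on $B_2(K)$ being outergated, on $2$-potentials, and on the $\alpha_1$-metric property.
\item That procedure either returns a vertex of eccentricity at most $r$, or certifies that $B_1(K)$ contains none, or returns a vertex of eccentricity at least $2r$. In the last case $\diam(G)=2r$, and the algorithm restarts in the even case.
\end{itemize}
Your proposal has no counterpart to these ingredients: cliques as candidate sets, shadow counting, and the restart through a longer diametral pair. As it stands, it does not yield a linear-time algorithm.
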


 Before our work, the best-known algorithms for the {\sc Center} problem on (superclasses of) $\frac 1 2$-hyperbolic graphs were the deterministic $\cO(m^{1.71})$-time algorithm from~\cite{alpha-center}, and the recent randomized $\Tilde{\cO}(m^{\frac 3 2})$-time algorithm from~\cite{Gpunimodal-ecc}. These running-times were outmatching the naive $\cO(nm)$-time algorithm only for graphs of sufficiently low density. Furthermore, unlike~\cite{Gpunimodal-ecc}, all our algorithms in the paper are deterministic.

\smallskip
A full characterization of the $\frac 1 2$-hyperbolic graphs was given in~\cite{BaCh2003}.
Their structure is arguably more complex than that of block graphs.
In particular, we stress that every $C_4$-free graph is an induced subgraph of some $\frac 1 2$-hyperbolic graph~\cite{CoDu14}.
Furthermore, every $C_4$-free graph within the following classes of graphs is $\frac 1 2$-hyperbolic: AT-free graphs, cocomparability graphs, permutation graphs and distance-hereditary graphs~\cite{WuZh01}.
In contrast to Theorem~\ref{thm:center:half-hyp}, let us mention that the best-known algorithm for the \textsc{Center} problem on AT-free graphs and on cocomparability graphs runs in $\cO(m^{3/2})$ time~\cite{Du_ATfree}.

\medskip
\noindent
{\bf Overview of our approach.}
In order to sketch our strategy to prove Theorem~\ref{thm:center:half-hyp}, we need to introduce a few more notations and terminology.
For every two vertices $u$ and $v$, let the (metric) \emph{interval} $I(u,v)$ be defined as the set of all vertices $w$ on a shortest $uv$-path (\textit{i.e.}, such that $d(u,v) = d(u,w) + d(w,v)$).
For every $k$ such that $0 \le k \le d(u,v)$, let $S_k(u,v) = \{ w \in I(u,v) : d(u,w) = k\}$ be called a \emph{slice}.
Our basic strategy would be to compute a pair $u,v$ of mutually distant vertices (\textit{i.e.}, such that $\ecc(u) = \ecc(v) = d(u,v)$), using a few BFS, then to extract a central vertex from a middle slice.
This approach has been used in the design of linear-time algorithms for the \textsc{Center} problem on various graph classes~\cite{ChDr_HHD,DuDr,Ola}.
A refinement of it was presented in~\cite{ChDr} for chordal graphs: such that either we extract a central vertex from a middle slice \textit{or} we compute a new pair $u',v'$ of mutually distant vertices such that $d(u',v') > d(u,v)$.
Since for chordal graphs $d(u,v) \ge \diam(G)-2$, a central vertex is found after a constant number of iterations.
Our algorithm follows a similar strategy to these prior works, although we must tackle with new challenges that are specific to $\frac 1 2$-hyperbolic graphs.
More specifically, in some cases where $d(u,v) = 2r-1$ is odd, previous techniques can neither be used to extract a central vertex from the middle slices $S_{r-1}(u,v)$ and $S_r(u,v)$, nor to find a new pair $u',v'$ of distance larger than $d(u,v)$.
In this situation, we extend the search for a central vertex to the neighborhoods of both slices. 
The latter requires inspecting the vertices at a distance of up to $3$ from the slices in order to guide the search.
Hence, as a side contribution of this work, we bring new insights on the structure of the balls centered at a clique of a $\frac 1 2$-hyperbolic graph.

Another difference between~\cite{ChDr} and our Theorem~\ref{thm:center:half-hyp} is the implementation of the distance-to-clique procedure.
Roughly, the procedure must output some distance information for every vertex of a clique, which includes their respective eccentricities.
The linear-time implementation proposed in~\cite[Sec. 3]{ChDr} was based on a property of chordal graphs that does \emph{not} hold for $\frac 1 2$-hyperbolic graphs (namely, that cliques in a chordal graph are outergated, see Sec.~\ref{sec:basics}).
We propose a different implementation, which is based on recent results about cliques in a graph with convex balls (a superclass of chordal graphs and $\frac 1 2$-hyperbolic graphs)~\cite{Gpunimodal-ecc}. 

\medskip
\noindent
{\bf Additional results.}
Recall that the algorithm presented in this paper is proved to be correct under the unchecked assumption that the input graph is $\frac 1 2$-hyperbolic. 
If we run this algorithm on an arbitrary input, then it may fail, or it may output a non-central vertex.
The latter raises the question of deciding whether a graph is $\frac 1 2$-hyperbolic.
The problem was studied in~\cite{CoDu14}, where a subcubic equivalence was proved between the recognition of $\frac 1 2$-hyperbolic graphs and the detection of an induced $C_4$.
Combined with the results from~\cite{WWWY}, it implies that we can decide whether a graph is $\frac 1 2$-hyperbolic in $\cO(n^{\omega+o(1)})$ time, where $\omega < 2.371339$ denotes the square matrix multiplication exponent.
We complete these prior works with the following conditional lower bound (the definition of the Strong Exponential-Time Hypothesis is recalled in Sec.~\ref{sec:hard}):

\begin{thml}\label{thm:hardness:half-hyp}
    Under the Strong Exponential-Time Hypothesis, the recognition of $\frac 1 2$-hyperbolic graphs requires $\Omega(n^{2-o(1)})$ time, even on $n$-vertex split graphs with at most $n^{1+o(1)}$ edges.
\end{thml}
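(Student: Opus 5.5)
The plan is a fine-grained reduction from \textsc{Orthogonal Vectors} (OV), which under the Strong Exponential-Time Hypothesis cannot be solved in $\cO(N^{2-\epsilon})$ time for $N$ vectors in dimension $D = \omega(\log N)$ (with $D = N^{o(1)}$). Given sets $A, B \subseteq \{0,1\}^D$ of $N$ vectors each, I will build a split graph $G$ as follows. The clique $K$ consists of one vertex $k_i$ per coordinate $i \in \{1,\ldots,D\}$, together with four extra vertices $c_A, c_B, c'_A, c'_B$. The independent set $S$ consists of one vertex per vector $a \in A$ and one per vector $b \in B$, plus two extra vertices $x, y$. The adjacencies are:
\begin{itemize}
\item a vector vertex $a$ is adjacent to $k_i$ for every coordinate $i$ with $a_i = 1$, and to $c_A$ and $c'_A$;
\item a vector vertex $b$ is adjacent to $k_i$ for every $i$ with $b_i = 1$, and to $c_B$ and $c'_B$;
\item $x$ is adjacent to $c_A$ and $c_B$, and $y$ is adjacent to $c'_A$ and $c'_B$.
\end{itemize}
The graph has $n = 2N + D + 6$ vertices and $\cO(ND + D^2) = n^{1+o(1)}$ edges. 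It is connected because every vector vertex sees a clique vertex. Hence an $\cO(n^{2-\epsilon})$-time recognition algorithm would solve OV in truly subquadratic time.

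The key structural step is to understand which pairs are at distance $3$. In a split graph all distances are at most $3$, and two vertices of $S$ are at distance $3$ exactly when their neighbourhoods are disjoint. In $G$, any two vertices of $A$ share $c_A$, and any two of $B$ share $c_B$. Each $a$ shares $c_A$ with $x$ and $c'_A$ with $y$, and symmetrically for each $b$. Moreover $x$ and $y$ have disjoint neighbourhoods, and $a, b$ have disjoint neighbourhoods iff $a \perp b$. So the pairs at distance $3$ are exactly $\{x,y\}$ and the orthogonal pairs $\{a,b\}$.

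For the completeness direction, suppose $a \perp b$. Then
\[
d(a,b) + d(x,y) = 6 > 4 = d(a,x) + d(b,y) = d(a,y) + d(b,x),
\]
so $\delta(G) \ge 1$ and $G$ is not $\frac 1 2$-hyperbolic.

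For soundness, assume there is no orthogonal pair, and consider four vertices violating the $\frac 1 2$-condition. Then the largest of the three pair-sums exceeds the second largest by at least $2$. I will do a case analysis on the value $s$ of the largest sum, where $s \le 6$.
\begin{itemize}
\item If $s = 6$, both pairs are at distance $3$. Since the only such pair is $\{x,y\}$, this is impossible.
\item If $s = 5$, the pair at distance $3$ must be $\{x,y\}$ and the other pair $\{u,v\}$ has $d(u,v) = 2$. The two cross sums are then at most $3$, and they are at least $2$ each since $x \neq u$ etc. So one cross distance to $x$ (or $y$) is $1$. This forces the relevant vertices of $\{u,v\}$ into $K$: for instance, if $d(u,x) = d(v,x) = 1$ then both $u,v \in K$, hence $d(u,v) = 1$, a contradiction. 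I will enumerate the few possible $(1,2)/(2,1)$ patterns and check that each forces two adjacent clique vertices, or a vertex of $S$ adjacent to both $x$ and $y$, which does not exist.
\item If $s = 4$ with all other distances $1$, the four vertices induce a $C_4$. This is impossible in a split graph.
\item Smaller values of $s$ cannot yield a gap of $2$ among distinct vertices. In particular, $s = 3$ would need a cross sum of $1$, i.e.\ two coinciding vertices.
\end{itemize}
Hence $G$ is $\frac 1 2$-hyperbolic exactly when there is no orthogonal pair.

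I expect the main obstacle to be this exhaustive soundness case analysis, namely making sure that the gadget $x, y$ (needed so that a single distance-$3$ pair suffices to witness $\delta = 1$) does not itself create a violating quadruple together with vertices of $K$. If some case fails, my fallback is to give $x$ and $y$ more private clique neighbours (so that every vertex is at distance $\le 2$ from both), and redo the same enumeration.
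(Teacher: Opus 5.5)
Your proof is correct, but it takes a different route from the paper in both the gadget and the soundness argument.

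\textbf{The paper's route.} The paper doubles both families: $A_0,A_1,B_0,B_1$ form the independent side, $U\cup\{w_0,w_1,z_A,z_B\}$ forms the clique, and $w_i$ is adjacent to $A_i\cup B_{1-i}$. With this doubling, the only pairs at distance $3$ are the pairs $\{a_i,b_i\}$ with $a\cap b=\emptyset$. Soundness then takes one line from the cited characterization of $\frac12$-hyperbolic chordal graphs, which for split graphs says there is no isometric $H_2$: an isometric $H_2$ contains a pair at distance $3$, hence an orthogonal pair.

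\textbf{Your route.} Your fixed pair $x,y$ plays the role of the second diagonal of $H_2$. Indeed, when $a\perp b$, the vertices $a,b,x,y,c_A,c_B,c'_A,c'_B$ induce exactly $H_2$. This saves the doubling. However, since $d(x,y)=3$ always, the shortcut ``a distance-$3$ pair exists, hence an orthogonal pair exists'' is unavailable, so you must check the four-point condition directly. Your case analysis on the largest pair-sum does this correctly, using only two facts: split graphs have distances at most $3$, and they are $C_4$-free.

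Two observations make your analysis self-contained. The case $s=5$ is impossible in every split graph: the two cross sums being at most $3$ force either a common neighbour of the distance-$3$ pair, or both vertices of the distance-$2$ pair into the clique. The case $s=4$ requires an induced $C_4$. So you effectively reprove the split-graph case of the characterization the paper imports: a split graph is $\frac12$-hyperbolic if and only if it has no two disjoint distance-$3$ pairs with all four cross distances equal to $2$. The paper's argument is shorter given that citation; yours avoids it at the cost of a small case analysis.

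\textbf{Two small slips to fix when writing it up.}
\begin{enumerate}
\item In the $s=5$ case, the impossible configuration is a common neighbour of $x$ and $y$. Such a vertex would have to lie in $K$, not in $S$.
\item In the $s=4$ case, note explicitly that the decomposition $4=3+1$ cannot occur. All four cross distances being $1$ bounds both diagonals by $2$.
\end{enumerate}
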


A similar hardness result was proved in~\cite{BoCrHa16}, but only for the recognition of $1$-hyperbolic graphs.
The $1$-hyperbolic graphs are much less structured than the $\frac 1 2$-hyperbolic graphs.
For example, every graph with diameter at most three is $1$-hyperbolic~\cite{KoMo}.

\smallskip
Finally, we complete Theorem~\ref{thm:center:half-hyp} with the following hardness result on the {\sc Center} problem for $1$-hyperbolic graphs:

\begin{thml}\label{thm:hardness:1-hyp}
    Assuming the Hitting Set conjecture, the \textsc{Center} problem on $1$-hyperbolic graphs requires $\Omega(n^{2-o(1)})$ time, even on graphs with $n$ vertices and at most $n^{1+o(1)}$ edges.
\end{thml}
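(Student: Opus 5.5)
Our plan is a fine-grained reduction from the Hitting Set problem, adapting the standard reduction of~\cite{AVW16} for radius and making it $1$-hyperbolic by keeping the diameter at most three (every graph of diameter at most three is $1$-hyperbolic~\cite{KoMo}). Recall that in the Hitting Set problem we are given two families $\mathcal{A},\mathcal{B}$ of $n$ subsets of a universe $C$ with $|C| = n^{o(1)}$, and we ask whether some $a\in\mathcal{A}$ intersects every $b\in\mathcal{B}$. The conjecture states that no $\cO(n^{2-\epsilon})$-time algorithm exists. Since a central vertex yields the radius in linear time (one BFS), it suffices to show that deciding whether the radius is at most some threshold is hard.

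We build the following graph $G$. The vertices are $\mathcal{A}\cup C\cup\mathcal{B}$, and each $a\in\mathcal{A}$ is adjacent to every $c\in a$, while each $b\in\mathcal{B}$ is adjacent to every $c\in b$. We add a universal-like hub vertex $h$ adjacent to all of $C$, and make $C$ a clique; this is affordable because $|C|^2 = n^{o(1)}$ edges. Then $d(a,b) = 1+1=2$ when $a\cap b\neq\emptyset$ via $a-c-b$, and otherwise $d(a,b)=3$ via $a-c-c'-b$ (assuming every set is nonempty, which we can force). All other distances are bounded by $3$: $\mathcal{A}$-to-$\mathcal{A}$ and $\mathcal{B}$-to-$\mathcal{B}$ pairs are at distance at most $3$ through the clique $C$, and $C$ is at distance at most $2$ from everything. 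Hence $\diam(G)\le 3$, so $G$ is $1$-hyperbolic. The number of edges is at most $2n|C| + |C|^2 = n^{1+o(1)}$.

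The key step is to ensure that only vertices of $\mathcal{A}$ can have eccentricity $2$, and that $a$ has eccentricity $2$ iff $a$ hits all of $\mathcal{B}$. A vertex $c\in C$ reaches $\mathcal{A}\cup\mathcal{B}$ in at most $2$ steps and so has eccentricity $2$, which breaks the reduction. We would therefore attach pendant-like gadgets: for instance, we add two new vertices $x,y$ with $x$ adjacent to all of $\mathcal{A}$ and $y$ adjacent to $x$. Then $d(c,y)=3$ for $c\in C$, $d(a,y)=2$, and $d(b,y)\le 4$. This, however, could break the diameter bound. The main obstacle is thus a gadget design that keeps the diameter at most $3$ (or otherwise certifies $1$-hyperbolicity directly via the four-point condition) while penalizing non-$\mathcal{A}$ vertices. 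To handle it, we would first try replacing the bound "diameter $\le 3$" by a direct four-point check: $\mathcal{B}$ is far from $y$ only along paths through a bounded-diameter core, and we would verify $d(u,v)+d(x,w)\le \max(\cdot)+2$ by case analysis on how many of the four points lie in the gadget. Alternatively, we could make $y$ adjacent to all of $\mathcal{B}$ too, via distinct copies, keeping the distance to $\mathcal{B}$ at $2$ while $C$-vertices stay at distance $3$ from some private vertex. Finally, we conclude that $\rad(G)=2$ iff a hitting set exists. An $\cO(n^{2-\epsilon})$ algorithm for \textsc{Center} on such sparse $1$-hyperbolic graphs would then refute the conjecture.
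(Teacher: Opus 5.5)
There is a genuine gap. You never fix a final graph and prove that it is $1$-hyperbolic, and that is exactly what the statement adds to the classical radius reduction of~\cite{AVW16}.

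Your first gadget ($x$ adjacent to all of $\mathcal{A}$, $y$ adjacent only to $x$) does repair the radius encoding. A vertex of eccentricity $2$ must lie in $B_2(y)=\{x,y\}\cup\mathcal{A}$, and, with nonempty sets, $\ecc(a)=2$ iff $a$ meets every $b\in\mathcal{B}$. But $d(b,x)\ge 3$, so this graph always has diameter at least $4$. You leave its hyperbolicity open and offer two fallbacks, neither carried out:
\begin{enumerate}
\item The four-point case analysis is only announced.
\item The second option, read literally, breaks the reduction. If $y$ is also adjacent to $\mathcal{B}$, then every $c\in C$ lying in a set of $\mathcal{A}$ and in a set of $\mathcal{B}$ is within distance $2$ of $x$, of $y$ and of everything else. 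So $\ecc(c)=2$ whether or not a hitting set exists, and ``via distinct copies'' does not define a construction.
\end{enumerate}
Hence your concluding claim that $\rad(G)=2$ iff a hitting set exists is made for a graph whose hyperbolicity is never controlled.

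The missing idea is that the far vertex is harmless because $y$ is pendant (you even call it pendant-like). Hyperbolicity is the maximum over biconnected components. More elementarily, in any quadruple containing $y$ exactly once you can replace $y$ by $x$: this lowers each of the three pairwise sums in the four-point condition by exactly $1$. Quadruples containing $y$ at least twice satisfy the condition trivially. Hence $\delta(G)=\delta(G-y)$.

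It then suffices to have $\diam(G-y)\le 3$, which your first gadget achieves after trivial preprocessing:
\begin{enumerate}
\item Discard empty sets of $\mathcal{A}$.
\item Answer ``no'' outright if some $b\in\mathcal{B}$ misses $\bigcup\mathcal{A}$.
\end{enumerate}
Otherwise $d(x,b)\le 3$ along $x,a,c,b$ with $c\in a\cap b$. Through the clique $C$, $d(x,c)\le 3$ and $d(x,h)\le 3$. All remaining pairs lie in your core graph, where you already showed distances at most $3$; adding $x$ only shortens them.

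This is precisely the paper's route. It keeps the HS-graph of~\cite{AVW16}: a hub adjacent to $A\cup U$, a vertex adjacent to $A$, and a pendant vertex $z$ on it. It removes $z$ by the biconnected-component argument, and secures $\diam(G\setminus z)\le 3$ with two preprocessing rules: restricting the universe to elements occurring on both sides, and adding a fresh common element to all sets of $B$. Your clique on $C$, affordable since $|C|^2=n^{o(1)}$, serves the same purpose as their hub and their second rule.
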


\noindent
{\bf Perspectives: other definitions of hyperbolicity.}
There are other so-called ''negative curvature'' parameters on graphs, that can only differ from Gromov hyperbolicity by a small multiplicative factor; see~\cite{HypApprox}.
One of these parameters, the slimness, has received special attention~\cite{Slimness}.
Although the differences between these parameters and Gromov hyperbolicity are irrelevant in the design of approximation algorithms, they may lead to varying behaviors, for small values, when considering exact algorithms. For instance, we can prove that linear-time center computation can only be achieved for graphs with slimness $0$ ({\it a.k.a.}, the block graphs), whereas it essentially requires quadratic-time, under the Hitting Set conjecture, for any positive value of slimness. The latter result is a byproduct of our proof for Theorem~\ref{thm:hardness:1-hyp}.

\section{Preliminaries}\label{sec:basics}

In what follows, we introduce the notations, terminology, and prior results that are required for the presentation of the main algorithm and its analysis.
Basic concepts of neighborhoods, distances, eccentricities, and other related notions are both defined for vertices and vertex-subsets.
This will make easier the presentation of our algorithms in Sec.~\ref{sec:alg}.

Let $G=(V,E)$ be a graph.
The neighborhood of a vertex $x$, denoted by $N(x)$, is the set of all the vertices $y$ such that $xy \in E$.
The degree of $x$ is equal to $d(x) = |N(x)|$.
For a vertex-set $X$, we define similarly $N(X) = \{ y \in V \setminus X : N(y) \cap X \ne \emptyset \}$, and $d(X) = |N(X)|$.
We recall that the distance $d(x,y)$ between two vertices $x$ and $y$ is the minimum number of edges on a path between $x$ and $y$.
An $xy$-path with minimum number of edges is called a shortest $xy$-path.
The ball of center $x$ and radius $r$, hereafter denoted by $B_r(x)$, is the set of all the vertices $y$ such that $d(x,y) \le r$.
Let $\ecc(x) = \max\{ d(x,y) : y \in V \}$ be the eccentricity of $G$.
The set of vertices that are furthest from $x$ is denoted by $F(x) = \{ y \in V : d(x,y) = \ecc(x)\}$.
Two vertices $x$ and $y$ are mutually distant if $x \in F(y)$ and $y \in F(x)$.
Let $\rad(G) = \min\{\ecc(x) : x \in V\}$ and $\diam(G) = \max\{\ecc(x) : x \in V\}$ be called the radius and the diameter of $G$, respectively.
A vertex $x$ is called central if $\ecc(x) = \rad(G)$.

These notations can be extended from vertices $x$ to vertex-sets $X$ as follows.
For every vertex $y$ let $d(y,X) = \min\{ d(x,y) : x \in X \}$.
For every natural number $r$ let $B_r(X) = \{ y \in V : d(y,X) \le r \}$.
Let $\ecc(X) = \max\{ d(y,X) : y \in V \}$, and let $F(X) = \{ y \in V : d(y,X) = \ecc(X)\}$.
The (weak) diameter of $X$ is defined as $\diam(X) = \max\{ d(x,x') : x,x' \in X \}$.

Recall that for every vertices $x$ and $y$, $I(x,y) = \{z \in V : d(x,y) = d(x,z) + d(z,y)\}$.
Let also $I^o(x,y) = I(x,y) \setminus \{x,y\}$.
For every $k$ such that $0 \le k \le d(x,y)$, let $S_k(x,y) = \{z \in I(x,y) : d(x,z) = k\}$ be called a slice.
A set $X$ is called \emph{convex} if $I(x,x') \subseteq X$ for every $x,x' \in X$.
A \emph{metric triangle} consists of three different vertices $x',y',z'$ such that the sets $I^o(x',y')$, $I^o(y',z')$, and $I^o(z',x')$ are pairwise disjoint. 
The type of a metric triangle $x',y',z'$ is defined as the triple $(d(x',y'),d(y',z'),d(z',x'))$.
A \emph{median} of a triple of vertices $x,y,z$ is any vertex of $I(x,y) \cap I(y,z) \cap I(z,x)$.
A \emph{quasi-median} of $x,y,z$ is a metric triangle $x',y',z'$ such that the following distance equalities hold: $d(x,y) = d(x,x')+d(x',y')+d(y',y)$, $d(y,z) = d(y,y')+d(y',z')+d(z',z)$, and $d(z,x) = d(z,z')+d(z',x')+d(x',x)$. 
It was observed in~\cite{ChDrEsHaVa} that every triple of vertices $x,y,z$ has a median or a quasi-median.

\medskip
\noindent
{\bf Projections, shadows, and potentials.}
The following concepts are a cornerstone of our approach in the next section.
Let $X$ be a vertex set of $G$.
The \emph{projection} of a vertex $y$ on $X$ is defined as $\proj_y(X) = \{ x \in X : d(y,x) = d(y,X) \}$.
For every vertex-set $Y$, we further define the $Y$-\emph{shadow} of a vertex $x \in X$ as being $\Psi_X(x,Y) = \{ y \in Y : x \in \proj_y(X)\}$;
let $\psi_X(x,Y) = |\Psi_X(x,Y)|$.
Of particular interest is the special case where $Y = F(X)$; then, we simply write $\Psi_X(x)$ and $\psi_X(x)$.
Finally, for every natural number $r$, the $r$-{\em potential} of a vertex $x$ with respect to $Y$ is defined as $\Phi_r(x,Y) = B_r(x) \cap Y$; let $\phi_r(x,Y) = |\Phi_r(x,Y)|$.

\medskip
\noindent
{\bf Graph classes considered in our results.}
We recall that a graph $G=(V,E)$ is $\delta$-hyperbolic if for every $u,v,x,y \in V$, $d(u,v)+d(x,y) \le \max\{d(u,x)+d(v,y),d(u,y)+d(v,x)\}+2\delta$.
In what follows, we mostly consider the special case $\delta = \frac 1 2$.
We further consider two related metric properties on graphs.
The graph $G$ satisfies the {\em $\alpha_1$-metric property} if for every $u,v,w,x \in V$ such that $v \in I(u,w)$ and $w \in I(v,x)$ are adjacent, $d(u,x) \ge d(u,v)+d(w,x)$.
Then, we also call $G$ an $\alpha_1$-metric graph.
A \emph{CB-graph} is a graph with convex balls.
It was observed in~\cite{BaCh2003} that every $\frac 1 2$-hyperbolic graph is $\alpha_1$-metric, and in~\cite{YuCh1991} that every $\alpha_1$-metric graph is a CB-graph.
We will often use these relations in our proofs with no further mention.

\smallskip
The following properties of $\delta$-hyperbolic graphs are used:
\begin{mylemma}\label{lem:hyp-pties}
    If $G=(V,E)$ is a $\delta$-hyperbolic graph, then
    \begin{enumerate}
        \item\label{item:leanness} the diameter of every slice is at most $2\delta$;
        in particular, every slice of a $\frac 1 2$-hyperbolic graph is a clique~\cite{wordprocessing};
        \item\label{item:hyp:furthest-vertex} for an arbitrary vertex $x$, for any $y \in F(x)$, $\ecc(y) \ge \diam(G)-2\delta$~\cite{ChDrEsHaVa};
        \item\label{item:mutually-distant_vertices} a pair $u,v$ of mutually distant vertices can be computed in $\cO(\delta m)$ time~\cite{Chepoi2018FastEA}.
    \end{enumerate}
\end{mylemma}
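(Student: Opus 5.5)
The lemma lists three properties, each attributed to prior work. The plan is to recall short arguments for (1) and (2) directly from the four-point condition, and to treat (3) as an algorithmic consequence of (2).

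For item (1), take a slice $S_k(u,v)$ and two of its vertices $w,w'$. Apply the four-point condition to the quadruple $u,v,w,w'$. Since $w,w' \in I(u,v)$ and $d(u,w)=d(u,w')=k$, we get
\[
d(u,v)+d(w,w') \le \max\{d(u,w)+d(v,w'),\, d(u,w')+d(v,w)\}+2\delta .
\]
Both sums in the maximum equal $k+(d(u,v)-k)=d(u,v)$. Hence $d(w,w')\le 2\delta$. For $\delta=\frac12$ this gives $d(w,w')\le 1$, so the slice is a clique.

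For item (2), let $y\in F(x)$ and pick $a,b$ with $d(a,b)=\diam(G)$. Apply the four-point condition to $a,b,x,y$:
\[
d(a,b)+d(x,y)\le \max\{d(a,x)+d(b,y),\, d(a,y)+d(b,x)\}+2\delta .
\]
We have $d(a,x)\le \ecc(x)=d(x,y)$ and $d(b,y)\le \ecc(y)$, and symmetrically for the other sum. So the maximum is at most $d(x,y)+\ecc(y)$. Cancelling $d(x,y)$ yields $\diam(G)\le \ecc(y)+2\delta$.

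Item (3) is the algorithmic one, and it is the main obstacle.
\begin{itemize}
\item Start from an arbitrary vertex, run a BFS, and take $y\in F(x)$. Then iterate: run a BFS from the current vertex and move to a furthest vertex from it.
\item Each step either produces a mutually distant pair or strictly increases the current eccentricity.
\item By (2), after the first BFS the eccentricity is already at least $\diam(G)-2\delta$, and it never exceeds $\diam(G)$.
\item Since eccentricities are integers, there are at most $2\delta+1$ strict increases.
\item Each step costs one BFS, i.e.\ $\cO(m)$ time, for a total of $\cO(\delta m)$.
\end{itemize}

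The delicate point is to make sure the iteration really stops at a mutually distant pair. If $z\in F(y)$ satisfies $\ecc(z)=d(y,z)=\ecc(y)$, then $y\in F(z)$ provided we pick $y$ consistently, and otherwise $\ecc(z)>\ecc(y)$. One has to check that the chosen furthest vertices give this dichotomy. Since the lemma cites~\cite{Chepoi2018FastEA}, I would defer to that reference for the details.
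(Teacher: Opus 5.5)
Your proof is correct and follows the standard arguments behind the cited results; the paper gives no proof of its own, only references. You use the four-point condition on $u,v,w,w'$ for (1), the four-point condition on a diametral pair together with $x,y$ for (2), and iterated BFS bounded via (2) for (3). The ``delicate point'' you defer needs no reference: for $z\in F(y)$ one always has $\ecc(z)\ge d(y,z)=\ecc(y)$, with equality exactly when $y\in F(z)$, in which case $y,z$ are mutually distant. So no consistency condition on the choice of $y$ is needed, and the procedure stops after at most $2\delta+3$ BFS runs.
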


A distance-preserving subgraph in a graph $G$ is called an {\em isometric subgraph}.
A graph is called a {\em Helly graph} if every family of pairwise intersecting balls has a nonempty common intersection.
For every graph $G$ there is an inclusionwise minimal Helly graph $\mathcal{H}(G)$ in which $G$ isometrically embeds~\cite{Isbell}.
The graph $\mathcal{H}(G)$ is sometimes called the {\em injective hull} of $G$.

\begin{mylemma}[\cite{injectivehull}]\label{lem:injective-hull}
    A graph $G$ is $\delta$-hyperbolic if and only if its injective hull $\mathcal{H}(G)$ is $\delta$-hyperbolic.    
\end{mylemma}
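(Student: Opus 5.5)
The direction ``$\mathcal{H}(G)$ is $\delta$-hyperbolic $\Rightarrow$ $G$ is $\delta$-hyperbolic'' is immediate. $G$ embeds isometrically in $\mathcal{H}(G)$, and the four-point condition only involves the distances between the four chosen points, so it passes to isometric subspaces. The whole work is therefore in the converse. For it, I will use Isbell's explicit model of the injective hull:
\[
\mathcal{H}(G)=\{f:V\to\mathbb{R}\ :\ f(x)=\max_{y\in V}\bigl(d(x,y)-f(y)\bigr)\ \text{for all } x\in V\},
\]
the set of extremal functions, with the sup-norm distance. The vertices of $\mathcal{H}(G)$ as a graph are the integer-valued extremal functions, and $G$ is embedded via $x\mapsto d(x,\cdot)$.

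The first step is to record the standard identities in this model. For $x\in V$ and $f\in\mathcal{H}(G)$ one has $\|f-d(x,\cdot)\|_\infty=f(x)$. For $f,g\in\mathcal{H}(G)$ one has
\[
\|f-g\|_\infty=\max_{x,y\in V}\bigl(d(x,y)-f(x)-g(y)\bigr).
\]
This follows from $g(y)-f(y)=\max_x\bigl(d(x,y)-f(x)\bigr)-g(y)$ together with symmetry. Since $V$ is finite, every maximum is attained. Hence every distance in $\mathcal{H}(G)$ comes with witnessing vertices of $G$, and we always have the lower bounds $f(x)+f(y)\ge d(x,y)$.

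Next I pass to the Gromov-product form of hyperbolicity: $(p|q)_w\ge\min\{(p|r)_w,(r|q)_w\}-\delta$. I first prove it when the base point $w$ is a vertex $e_w=d(w,\cdot)$ of $G$ and the three other points $f_1,f_2,f_3$ lie in $\mathcal{H}(G)$. Write each $d(f_i,f_j)$ as $d(a_{ij},b_{ij})-f_i(a_{ij})-f_j(b_{ij})$ and each $d(e_w,f_i)$ as $f_i(w)$. Choosing the witnesses suitably, the products $(f_i|f_j)_w$ are bounded by Gromov products of vertices of $G$ based at $w$, corrected by terms $f_i(\cdot)$. One application of the four-point condition in $G$ then absorbs these corrections, because the same $f_i$-terms appear on both sides and cancel via extremality. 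Finally I remove the restriction on the base point. The same witness argument, applied to a base point $f_0$ with witness $x_0$ such that $f_0(x_0)+f_0(\cdot)$ is tight, transfers the inequality from base $e_{x_0}$ to base $f_0$. This works because changing the base point shifts all three Gromov products by comparable amounts.

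The main obstacle is the bookkeeping that keeps the constant exactly $\delta$ rather than $2\delta$ or $3\delta$. A naive approximation of each $f_i$ by a nearby vertex of $G$ loses an additive amount proportional to the distance to $G$, which is unbounded. A two-stage argument (base point in $G$, then general base point) loses $\delta$ at each stage. To avoid this, I will apply the four-point condition in $G$ only once, to a single quadruple of witnesses. These witnesses are selected simultaneously from the maximizers defining the pairwise distances among the four points. I will then use only the extremality equalities $f(x)=\max_y(d(x,y)-f(y))$, which are exact, to compare the two sides. If this single-quadruple choice does not directly yield all three sums, the fallback is Lang's approach: the injective hull is a geodesic space, so one can prove $\delta$-thinness of the relevant tripods via the witness description and convert it back to the four-point constant exactly.
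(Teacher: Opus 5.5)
\textbf{There is a gap: the nontrivial direction is never proved.} The paper gives no proof of this lemma; it only cites it. The standard argument behind it (Lang's, for metric injective hulls) is a short witness computation.

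What you have right: your easy direction is fine, and so is your toolkit. Isbell's model and the formula $\|f-g\|_\infty=\max_{x,y\in V}\bigl(d(x,y)-f(x)-g(y)\bigr)$ are exactly what is needed. There is a harmless slip in your derivation: $\max_x\bigl(d(x,y)-f(x)\bigr)-g(y)$ equals $f(y)-g(y)$, not $g(y)-f(y)$.

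Where it breaks down:
\begin{itemize}
\item The route you describe in detail (Gromov products with base point in $G$, then a change of base point) loses a factor, as you say yourself.
\item Its replacement is a single quadruple of witnesses ``selected simultaneously from the maximizers defining the pairwise distances among the four points''. This is never specified, and read literally it cannot work. The six distances have six unrelated pairs of maximizers, and these do not fit into one quadruple.
\item The fallback fails as well. You cannot derive $\delta$-thin tripods in $\mathcal{H}(G)$ from the four-point constant $\delta$ of $G$, because this already fails in $G$, which sits isometrically inside $\mathcal{H}(G)$ together with its geodesics. In $C_{12}$ the four-point constant is $3$. Yet in the geodesic triangle $0,4,8$, the internal points $2,6,10$ lie in one tripod fibre at pairwise distance $4$.
\end{itemize}

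\textbf{The missing observation.} You only need witnesses for the two distances on the left-hand side. The four distances on the right come for free: the formula is a maximum over all pairs, so it bounds every term $d(a,b)-f(a)-g(b)$ from above by $\|f-g\|_\infty$.

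Concretely, for $f_1,f_2,f_3,f_4\in\mathcal{H}(G)$, pick $p,q\in V$ with $d(f_1,f_2)=d(p,q)-f_1(p)-f_2(q)$. Pick $s,t\in V$ with $d(f_3,f_4)=d(s,t)-f_3(s)-f_4(t)$. Let $\Sigma=f_1(p)+f_2(q)+f_3(s)+f_4(t)$. One application of the four-point condition in $G$ to $p,q,s,t$ gives
\[
d(f_1,f_2)+d(f_3,f_4)\le\max\{d(p,s)+d(q,t),\,d(p,t)+d(q,s)\}+2\delta-\Sigma .
\]
Each cross pairing uses each summand of $\Sigma$ exactly once, so
\[
d(p,s)+d(q,t)-\Sigma\le d(f_1,f_3)+d(f_2,f_4),\qquad d(p,t)+d(q,s)-\Sigma\le d(f_1,f_4)+d(f_2,f_3).
\]
This is the four-point condition in $\mathcal{H}(G)$ with constant exactly $\delta$. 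No extremality equalities, Gromov products or base-point changes are needed. The argument applies verbatim to the integer-valued extremal functions, whose graph distance in $\mathcal{H}(G)$ is the sup-distance.
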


The following results on the $\alpha_1$-metric property are also used.
\begin{mylemma}\label{lem:alpha1-pties}
    If $G=(V,E)$ is an $\alpha_1$-metric graph, then
    \begin{enumerate}
        \item\label{item:alpha1:metric-triangle} every metric triangle is of type $(1, 1, 1)$, $(1, 2, 2)$, $(2, 1, 2)$, $(2, 2, 1)$ or $(2, 2, 2)$~\cite{BaCh2003};
        \item\label{item:triangle-pty} for every adjacent $x,y \in S_k(u,v)$, $x$ and $y$ have common neighbours in both $S_{k-1}(u,v)$ and $S_{k+1}(u,v)$~\cite{alpha-hyperb}.  
    \end{enumerate}
\end{mylemma}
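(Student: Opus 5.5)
Both items are quoted from the literature (\cite{BaCh2003} and \cite{alpha-hyperb}). If I had to reprove them, I would derive everything directly from the $\alpha_1$-metric property. I would first prove item~(\ref{item:alpha1:metric-triangle}) and then use it to obtain item~(\ref{item:triangle-pty}). The recurring move is to pick an edge $vw$ with $v \in I(u,w)$ and then look for a fourth vertex $x$ with $w \in I(v,x)$. The property then gives $d(u,x) \ge d(u,v)+d(w,x)$, so a path that bends at the edge $vw$ is almost geodesic.

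For item~(\ref{item:alpha1:metric-triangle}), let $x,y,z$ be a metric triangle and suppose $d(x,z) \ge 2$. Let $a$ be the neighbour of $z$ on a shortest $zx$-path.
\begin{itemize}
\item Since $a \in I^o(z,x)$ and the interiors of the three intervals are disjoint, $a \notin I(z,y)$. So $d(a,y) \ne d(z,y)-1$, and hence $d(a,y) \in \{d(z,y), d(z,y)+1\}$.
\item If $d(a,y) = d(z,y)+1$, then $z \in I(a,y)$. Applying $\alpha_1$ to $x,a,z,y$ gives $d(x,y) \ge d(x,z)-1+d(z,y)$.
\item Combined with the symmetric inequalities obtained at the other corners, this should force every side to have length at most $2$. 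The reason is that a metric triangle whose sides are nearly geodesic through a corner must be degenerate once a side reaches length $3$.
\end{itemize}
The remaining case is $d(a,y)=d(z,y)$ at every corner. There I would iterate along the sides: take neighbours $a$ of $z$ towards $x$ and $b$ of $z$ towards $y$, and apply $\alpha_1$ to suitable quadruples. The goal is to show that a side of length $\ge 3$ produces a vertex lying in two interiors, contradicting the definition of a metric triangle. Listing the admissible side-length triples then yields exactly the five types in the statement.

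For item~(\ref{item:triangle-pty}), let $x,y \in S_k(u,v)$ be adjacent. Since $S_k(u,v) = S_{d(u,v)-k}(v,u)$, it suffices to find a common neighbour in $S_{k-1}(u,v)$; the case of $S_{k+1}(u,v)$ is symmetric.
\begin{itemize}
\item Consider a quasi-median of the triple $u,x,y$. Because $xy$ is an edge and $d(u,x)=d(u,y)=k$, it has the form $u',x,y$ with $d(u',x)=d(u',y)$.
\item By item~(\ref{item:alpha1:metric-triangle}), this metric triangle has type $(1,1,1)$ or $(1,2,2)$.
\item In type $(1,1,1)$, the vertex $u'$ is a common neighbour of $x$ and $y$. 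Moreover $d(u,u')=k-1$, and $u' \in I(u,v)$ because $d(u',v) \le d(x,v)+1$. So $u' \in S_{k-1}(u,v)$, as required.
\end{itemize}

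The main obstacle is excluding the type $(1,2,2)$ configuration, where $d(u',x)=d(u',y)=2$ and no common neighbour exists near $u'$. Naive applications of $\alpha_1$ with $u$ (for instance, to the neighbour $a$ of $u'$ towards $x$) only give $d(u,y) \ge k$, which is no contradiction. I would therefore bring in the other endpoint $v$. Take a common neighbour of $x,y$ in the direction of $v$ (by induction from the $v$ side), or a neighbour $x''$ of $x$ in $S_{k+1}(u,v)$. Then apply $\alpha_1$ to the quadruple $u,a,x,x''$ or to $a,x,\cdot,v$. The aim is to show that $d(a,v)$ would have to exceed $d(a,u')+d(u',v)$, forcing $a$ (or $u'$) to be adjacent to $y$. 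I expect this two-sided argument, rather than a one-sided one from $u$, to be needed to close the proof.
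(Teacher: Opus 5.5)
The paper gives no argument for this lemma: both items are imported from \cite{BaCh2003} and \cite{alpha-hyperb}. Taken as a self-contained proof, your proposal has genuine gaps in both items.

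\emph{Item (1).} You derive only one inequality, and the claim that the corner inequalities ``force every side to have length at most $2$'' does not hold. Take $C_7$ with vertices $0,\dots,6$. The triple $0,1,4$ is a metric triangle of type $(1,3,3)$. Its two bent corners give $d(1,4)\ge d(4,6)+d(0,1)=3$ and $d(0,4)\ge d(4,2)+d(1,0)=3$, and both are true. At the corner $4$ both directions are flat, and the neighbours $3,5$ are at distance $2$, so nothing can be extracted there either. What excludes such a triangle in an $\alpha_1$-metric graph is a quadruple that reaches into a second side. For instance, $6\in I(4,0)$ and $0\in I(6,2)$ would force $d(4,2)\ge d(4,6)+d(0,2)=4$, whereas $d(4,2)=2$. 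Producing such quadruples in general is exactly the ``iterate along the sides'' step you postpone, and that step is the whole content of item~(1). A smaller point: your step $a\notin I(z,y)$ needs the standard definition $I(x,y)\cap I(y,z)=\{y\}$, etc. Under the paper's ``disjoint interiors'' wording, $a=y$ is not excluded, and a path $x,y,z$ would even count as a metric triangle of type $(1,1,2)$.

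\emph{Item (2).} Your reduction to the quasi-median $u',x,y$ is correct, and so is the case $d(u',x)=d(u',y)=1$. The case $d(u',x)=d(u',y)=2$ is the only nontrivial one, and it is left open. The quadruples you suggest give only trivial bounds: $u,a,x,x''$ yields $d(u,x'')\ge k$, and $a,x,w,v$ yields $d(a,v)\ge d(u,v)-k$. No induction from the $v$-side is needed; the fix is to apply $\alpha_1$ at the apex $u'$.

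Let $a\in N(u')\cap N(x)$ and $b\in N(u')\cap N(y)$. Since $I(u',x)\cap I(u',y)=\{u'\}$, we get $a\ne b$, $a$ is not adjacent to $y$, and $b$ is not adjacent to $x$. Also $a$ and $b$ are nonadjacent, since otherwise $\alpha_1$ applied to $a,x,y,b$ gives $d(a,b)\ge 2$. Hence $u'\in I(a,b)$. Since $d(u,u')=k-2$,
\[
d(u',v)\ \ge\ d(u,v)-k+2\ =\ d(y,v)+2\ \ge\ d(b,v)+1 ,
\]
so $b\in I(u',v)$ and $d(b,v)=d(u,v)-k+1$. Now $\alpha_1$ applied to $a,u',b,v$ gives
\[
d(a,v)\ \ge\ d(a,u')+d(b,v)\ =\ d(u,v)-k+2 .
\]
This contradicts $d(a,v)\le d(a,x)+d(x,v)=d(u,v)-k+1$. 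So item~(2) does follow from item~(1) in a few lines, but only once item~(1) is actually proved.
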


\medskip
\noindent
{\bf $k$-Outergated sets.}\label{ssec:k-outergated} 
Let $G=(V,E)$ be a graph.
We call $X \subseteq V$ a \emph{$k$-outergated} set if for every vertex $z$ such that $d(z,X) > k$, $\bigcap\{ S_k(x,z) : x \in \proj_z(X) \} \ne \emptyset$.
Furthermore, every vertex $z^*$ in this intersection is called a $k$-outergate of $z$ with respect to $X$.
For $k=1$, we simply refer to outergated sets, and to outergates.

\begin{mylemma}[\cite{Du_Helly}]\label{lem:compute-outergates}
    For a set $X$ of a graph $G$, in $\cO(m)$ time one can map every $z \notin X$ to $z^* \in B_{d(z,X)-1}(z) \cap B_1(X)$ maximizing $|N(z^*) \cap X|$. 
    Furthermore, if $X$ is outergated, then $z^*$ is an outergate of $z$.            
\end{mylemma}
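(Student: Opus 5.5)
The statement to prove is Lemma~\ref{lem:compute-outergates}. The plan is to compute, by one multi-source BFS from $X$, the value $d(z,X)$ for every vertex $z$. Then, by a second pass over the BFS layers in increasing order of distance, I propagate a candidate $z^*$ together with its score $c(z^*) = |N(z^*)\cap X|$.

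\textbf{Layer $1$ and initial scores.} First, for every vertex $w \in N(X)$, I compute $c(w) = |N(w)\cap X|$ by scanning its adjacency list; in total this costs $\cO(m)$. For $z$ with $d(z,X)=1$, the set $B_0(z)\cap B_1(X)$ is $\{z\}$, so I set $z^*=z$. (For $z \in N(X)$ the only admissible choice is $z$ itself.)

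\textbf{The recursion for $d(z,X)\ge 2$.} Here $B_{d(z,X)-1}(z)\cap B_1(X)$ consists exactly of the vertices $w$ with $d(w,X)=1$ lying on a shortest path from $z$ to $X$. Indeed, any $w$ in it satisfies $d(z,X)\le d(z,w)+d(w,X)\le d(z,X)$, which forces $d(w,X)=1$ (vertices of $X$ are excluded by the radius bound) and equality throughout. Such vertices are precisely the union, over neighbours $y$ of $z$ with $d(y,X)=d(z,X)-1$, of the analogous sets for $y$. So I define $z^*$ as the candidate $y^*$ of maximum score among those neighbours $y$. Processing layers in BFS order, each edge is inspected a constant number of times, which gives $\cO(m)$ overall. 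Correctness of the maximization follows by induction on $d(z,X)$ from this union decomposition: a maximum over a union equals the maximum of the maxima.

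\textbf{Outergated case.} Assume $X$ is outergated and $d(z,X)\ge 2$. Let $w$ be any outergate of $z$, which exists by definition. Then $w \in \bigcap_{x\in\proj_z(X)} S_1(x,z)$, so $w$ is adjacent to all of $\proj_z(X)$, and it lies in the candidate set, so $c(w)\ge|\proj_z(X)|$. Conversely, every candidate $w'$ has $N(w')\cap X\subseteq \proj_z(X)$. This is because any $x\in X$ adjacent to $w'$ satisfies $d(z,x)\le d(z,w')+1=d(z,X)$. Hence the maximum score equals $|\proj_z(X)|$, and the maximizer $z^*$ is adjacent to every vertex of $\proj_z(X)$ while sitting at distance $d(z,X)-1$ from $z$. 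It therefore belongs to every $S_1(x,z)$, i.e.\ it is an outergate.

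\textbf{Main obstacle.} The delicate point is the case $d(z,X)=1$, where the outergate notion is vacuous; the definition only requires $d(z,X)>1$. The remaining work is routine bookkeeping to keep the layer-by-layer propagation truly linear.
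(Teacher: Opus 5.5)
Your proof is correct and follows essentially the same argument as the cited source \cite{Du_Helly}; the paper itself only cites this lemma. You run a multi-source BFS from $X$, propagate the best-scoring candidate layer by layer using the decomposition of $B_{d(z,X)-1}(z)\cap B_1(X)$ over the predecessors of $z$ (as in Chepoi--Dragan's distance-to-clique procedure), and observe that every candidate $w'$ satisfies $N(w')\cap X\subseteq \proj_z(X)$, so a maximizer must be adjacent to all of $\proj_z(X)$ whenever an outergate exists.
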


Finally, the following recent results on cliques in a CB-graph are used in our analysis:
\begin{mylemma}[\cite{Gpunimodal-ecc}]\label{lem:cb-graph:cliques}
    If $K$ is a clique of a CB-graph $G=(V,E)$, then
    \begin{enumerate}
        \item\label{item:cb:cliques-2outergated} $K$ is $2$-outergated. More precisely, for any $z \in V \setminus K$, either $z$ has an outergate or $\proj_z(K) = K$ and $z$ has a $2$-outergate;
        \item\label{item:cb:cliques-vertex-classification} in $\cO(m)$ time we can compute the set of vertices without an outergate with respect to $K$. 
    \end{enumerate}
\end{mylemma}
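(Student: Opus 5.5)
The plan is to handle the two items separately. Item~(\ref{item:cb:cliques-2outergated}) is a purely metric statement proved from the convexity of balls. Item~(\ref{item:cb:cliques-vertex-classification}) is then an algorithmic consequence of item~(\ref{item:cb:cliques-2outergated}) combined with Lemma~\ref{lem:compute-outergates}.

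For item~(\ref{item:cb:cliques-2outergated}), fix $z \notin K$ with $k = d(z,K) \ge 2$ and let $P = \proj_z(K)$. For $x \in P$, let $A_x = N(x) \cap B_{k-1}(z)$.

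\emph{Triangle condition.} First I would show that $A_x \cap A_y \neq \emptyset$ for all distinct $x,y \in P$. Take $a \in A_x$ and $b \in A_y$.
\begin{itemize}
\item If $d(a,b)=3$, then $x,y \in I(a,b)$. Since $a,b \in B_{k-1}(z)$, convexity of $B_{k-1}(z)$ puts $x$ into that ball, contradicting $d(z,x)=k$. Hence $d(a,b) \le 2$.
\item Iterating this with $B_{k-1}(z)$, and using convexity of unit balls around $x$ and $y$, I expect to extract a common neighbour of $x$ and $y$ at distance $k-1$ from $z$. This is the usual argument that CB-graphs satisfy the triangle condition.
\end{itemize}

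\emph{From pairs to all of $P$.} Next I would pick $z^* \in B_{k-1}(z) \cap N(P)$ maximizing $|N(z^*) \cap P|$, as in Lemma~\ref{lem:compute-outergates}. Suppose $z^*$ misses some $x \in P$, and assume some $w \in K \setminus P$ exists, so that $d(z,w)=k+1$. Let $y \in N(z^*) \cap P$ and let $t \in A_x \cap A_y$.
\begin{itemize}
\item Consider the vertices $z^*$, $t$ and their neighbours in $P$, together with $w$.
\item Since $w$ is adjacent to all of $P$, the set $B_1(w) \supseteq P$ is convex. I would use the convexity of $B_1(w)$ and of $B_{k-1}(z)$ on intervals between $z^*$, $t$ and $x$ to force a vertex of $B_{k-1}(z)$ adjacent to strictly more vertices of $P$ than $z^*$. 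That contradicts the maximality of $z^*$.
\end{itemize}
This yields the dichotomy: either $z^*$ is adjacent to all of $P$, so $z$ has an outergate, or $P = K$.

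\emph{The case $P=K$.} In this case I would repeat the argument one level further out, at distance $k-2$ from $z$, with $B_2(K)$ in place of $B_1(K)$. Now the whole clique is equidistant from $z$, which supplies the missing symmetry. The goal is a vertex in $\bigcap_{x\in K} S_2(x,z)$, which is a $2$-outergate of $z$.

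\emph{Main obstacle.} I expect the main difficulty to be the passage from pairwise intersection of the sets $A_x$ to their common intersection. This is a Helly-type step, and it genuinely fails in the case $P=K$, as the dichotomy already indicates. What I would try first is a minimal counterexample: choose $x$ and $z^*$ so that $d(z^*,x)=2$, and look for a $4$-cycle or an interval violating convexity of a unit ball.

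For item~(\ref{item:cb:cliques-vertex-classification}), I would run Lemma~\ref{lem:compute-outergates} once on $X=K$ to get each $z^*$ in $\cO(m)$ total time. I would then decide whether $N(z^*) \cap K \supseteq \proj_z(K)$ by comparing $|N(z^*) \cap K|$ with $|\proj_z(K)|$.

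The counts $|N(\cdot)\cap K|$ come from one scan of the edges incident to $K$. For $|\proj_z(K)|$, I would propagate along a BFS from $K$, using the structure from item~(\ref{item:cb:cliques-2outergated}): the projection of $z$ is either the projection of its outergate or all of $K$. A BFS-order dynamic program should therefore keep one integer per vertex and flag $z$ whenever the union over its BFS parents becomes all of $K$ without any parent's candidate $z^*$ covering it. Making this propagation provably linear is the second, lesser, obstacle.
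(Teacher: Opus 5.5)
\paragraph{A false first step.} The paper does not prove this lemma; it imports it from \cite{Gpunimodal-ecc}. Your sketch therefore has to stand on its own, and it breaks at the first step.

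CB-graphs do \emph{not} satisfy the triangle condition. Take $C_5=v_0v_1v_2v_3v_4$, which is a CB-graph and is the paper's own example of a non-outergated edge. Let $K=\{v_2,v_3\}$ and $z=v_0$. Then $k=2$ and $P=K$, while $A_{v_2}=\{v_1\}$ and $A_{v_3}=\{v_4\}$ are disjoint. So ``let $t\in A_x\cap A_y$'' is unjustified.

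\paragraph{You have put the difficulty in the wrong place.} Below, $\sim$ means adjacency.
\begin{itemize}
\item \emph{The Helly step is routine.} $A_y=N(y)\cap B_{k-1}(z)$ is a clique, since two non-adjacent members would put $y$ in the convex set $B_{k-1}(z)$. CB-graphs also have no induced $C_4$. So if $t\in A_x\cap A_y$ with $y\in N(z^*)$ and $x\notin N(z^*)$, then $t\sim z^*$. Each $4$-cycle $x\,t\,z^*\,y'$ with $y'\in N(z^*)\cap P$ then forces $t\sim y'$, so $t$ beats $z^*$. This step needs no $w$.
\item \emph{The pairwise step is where $w\in K\setminus P$ is needed}, and it works as follows.
\begin{itemize}
\item Suppose $a\in A_x$, $b\in A_y$ and $A_x\cap A_y=\emptyset$. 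Then $d(a,b)=2$: distance $1$ gives an induced $C_4$, and distance $3$ is excluded by your argument.
\item A middle vertex $t$ of $a,b$ is adjacent to neither $x$ nor $y$; otherwise $C_4$-freeness puts $t$ in $A_x\cap A_y$.
\item Convexity of $B_{k-1}(z)$ and $B_2(w)$, together with $d(z,w)=k+1$, gives $d(z,t)=k-1$ and $d(w,t)=2$.
\item A common neighbour $s$ of $w$ and $t$ has $d(z,s)=k$. Since $N(w)\cap B_k(z)$ is a clique, $s$ is adjacent to $x$ and $y$.
\item The $4$-cycles $x\,a\,t\,s$ and $y\,b\,t\,s$ force $s\sim a$ and $s\sim b$. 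This contradicts $N(s)\cap B_{k-1}(z)$ being a clique.
\end{itemize}
\end{itemize}
With this repair you obtain the dichotomy ``outergate or $\proj_z(K)=K$''.

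\paragraph{The $2$-outergate case is not addressed.} Suppose $\proj_z(K)=K$, $z$ has no outergate and $d(z,K)\ge 3$; for $d(z,K)=2$ the $2$-outergate is $z$ itself. One must show some vertex at distance $k-2$ from $z$ is at distance $2$ from all of $K$. This is the real content of $2$-outergatedness beyond the dichotomy.

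``Repeat the argument one level further out'' does not work here, for two reasons.
\begin{itemize}
\item This is exactly the case where the level-$(k-1)$ pairwise step fails, as in $C_5$.
\item At level $k-2$ the sets $S_2(x,z)$ are in general no longer cliques, so the clique-plus-no-$C_4$ mechanism behind the Helly step is unavailable.
\end{itemize}
A new idea is needed, and your proposal does not supply one.

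\paragraph{Item (2) is also left open.} Your criterion $|N(z^*)\cap K|=|\proj_z(K)|$ is correct, but you never obtain $|\proj_z(K)|$ in linear time. Given item (1), this can be done. Write $p^*$ for the vertex Lemma~\ref{lem:compute-outergates} assigns to $p$, with $p^*=p$ if $p\in N(K)$. Suppose no BFS-parent $p$ of $z$ has $\proj_p(K)=K$. Then $\proj_z(K)\neq K$ if and only if both of the following hold:
\begin{itemize}
\item $|N(z^*)\cap K|<|K|$;
\item every $p^*$ lies in $N(z^*)\cup\{z^*\}$.
\end{itemize}
This follows from the same clique and no-$C_4$ arguments. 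The resulting $\cO(m)$ adjacency queries can be answered offline by radix sorting in $\cO(n+m)$ time.
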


\section{An algorithm for the center problem}\label{sec:alg}

The following section is devoted to the proof of Theorem~\ref{thm:center:half-hyp}.
For that, we analyze Algorithm~\ref{alg:center}.

    \subsection{Intermediate computations: shadows, potentials and eccentricities}\label{ssec:ecc-cliques}

Shadows were implicitly used in~\cite{ChDr}, where a linear-time distance-to-clique procedure is introduced in order to compute the shadows for the vertices of a clique of a chordal graph.
We present a novel version of the distance-to-clique procedure from~\cite{ChDr}, which is tailored for the $\frac 1 2$-hyperbolic graphs. 
Indeed, the initial procedure from~\cite{ChDr} was based on the assumption that cliques are outergated, which is true for chordal graphs but not for the $\frac 1 2$-hyperbolic graphs.
For example, the cycle $C_5$ is $\frac 1 2$-hyperbolic, however the edges of $C_5$ are not outergated.

\begin{mylemma}\label{lem:intermediate-computations}
    Let $K,M$ be disjoint vertex-subsets in a $\frac 1 2$-hyperbolic graph $G$ such that $K$ is a clique.
    The following values, for every $w \in K$, can be computed in $\cO(n+m)$ time:
    \begin{enumerate}
        \item\label{item:compute-shadows} $\psi_K(w,M)$ ({\it shadows});
        \item\label{item:compute-potentials} $\phi_r(w,M)$, for a fixed natural number $r$ ({\it potentials});
        \item\label{item:compute-ecc-clique} and $\ecc(w)$ ({\it eccentricities}).
    \end{enumerate}
\end{mylemma}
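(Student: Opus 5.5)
The whole computation rests on one multi-source BFS from $K$, combined with the structural description of cliques in CB-graphs (Lemma~\ref{lem:cb-graph:cliques}). In $\cO(m)$ time we compute $d(z,K)$ for every $z$. Lemma~\ref{lem:cb-graph:cliques}(\ref{item:cb:cliques-vertex-classification}) then gives, also in $\cO(m)$ time, the set $Z_2$ of vertices with no outergate with respect to $K$. By Lemma~\ref{lem:cb-graph:cliques}(\ref{item:cb:cliques-2outergated}), every $z \in Z_2$ has $\proj_z(K)=K$. For every $z \notin Z_2 \cup K$, Lemma~\ref{lem:compute-outergates} returns a vertex $z^*\in B_1(K)$ maximizing $|N(z^*)\cap K|$, and this $z^*$ is an outergate. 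Hence $\proj_z(K)=N(z^*)\cap K$, because an outergate lies on shortest paths to all of $\proj_z(K)$ and its $K$-neighbours lie at distance $d(z,K)$ from $z$. Every vertex of $V\setminus Z_2$ thus has its projection encoded by a neighbourhood trace $N(y)\cap K$ with $y\in N(K)$, and these traces have total size $\cO(m)$.

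\emph{Shadows.} For each $y\in N(K)$ let $c(y)$ be the number of $z\in M\setminus Z_2$ with $z^*=y$ (vertices of $K$ that lie in $M$ are handled separately; such a vertex $z$ contributes only to $w=z$). Then
\[
\psi_K(w,M)=|M\cap Z_2| + \sum_{y\in N(w)\setminus K} c(y),
\]
and this sum is computed by scanning the edges between $K$ and $N(K)$.

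\emph{Eccentricities.} For $w\in K$ and any $z$, we have $d(w,z)=d(z,K)$ if $w\in\proj_z(K)$, and otherwise $d(w,z)=d(z,K)+1$, since $K$ is a clique. So $\ecc(w)$ equals $\ecc(K)$ if $w$ belongs to the projection of every vertex of $F(K)$, and $\ecc(K)+1$ otherwise. Equivalently, $\ecc(w)=\ecc(K)$ exactly when $\psi_K(w,F(K))=|F(K)|$. This follows from the shadow computation with $M=F(K)$, which we may take since $F(K)$ is disjoint from $K$ unless $\ecc(K)=0$, a trivial case.

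\emph{Potentials.} The same dichotomy gives $z\in B_r(w)$ iff either $d(z,K)\le r-1$, or $d(z,K)=r$ and $w\in\proj_z(K)$. Therefore
\[
\phi_r(w,M)=|\{z\in M: d(z,K)\le r-1\}|+\psi_K(w,M_r),\qquad M_r=\{z\in M: d(z,K)=r\},
\]
and the second term is again a shadow computation.

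The main obstacle is the first step: making sure that, for $z$ with an outergate, the particular $z^*$ returned by Lemma~\ref{lem:compute-outergates} really satisfies $N(z^*)\cap K=\proj_z(K)$. The inclusion $\supseteq$ comes from the definition of an outergate. For $\subseteq$, note that $z^*\in B_{d(z,K)-1}(z)$, so any $K$-neighbour of $z^*$ is at distance at most $d(z,K)$ from $z$, hence in the projection. We also must handle $Z_2$, whose vertices see all of $K$; they simply add a constant to every shadow and require no outergate. Only parts (1) and (2) of Lemma~\ref{lem:cb-graph:cliques} are needed for this, so no $\frac12$-hyperbolicity beyond the CB property is used. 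All steps are single BFS or edge scans, giving $\cO(n+m)$ time overall.
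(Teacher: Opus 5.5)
Your proposal is correct and follows essentially the same route as the paper. You split $M$ according to whether a vertex has an outergate with respect to $K$, count vertices per outergate using Lemma~\ref{lem:compute-outergates}, and add $|M\cap Z_2|$ uniformly via Lemma~\ref{lem:cb-graph:cliques}; potentials (through the layer $d(z,K)=r$) and eccentricities (through $M=F(K)$) then reduce to shadow computations. You also justify a step the paper uses implicitly: the maximizer $z^*$ returned by Lemma~\ref{lem:compute-outergates} is an outergate for every vertex that individually has one, even though $K$ need not be outergated. Strictly, the inclusion $\proj_z(K)\subseteq N(z^*)$ comes from maximality: any outergate $g$ is itself a candidate with $|N(g)\cap K|\ge|\proj_z(K)|$, and this combines with the reverse inclusion you proved.
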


       \begin{proof}
            We give separate proofs for (\ref{item:compute-shadows}), (\ref{item:compute-potentials}), and (\ref{item:compute-ecc-clique}).

            \smallskip
            (\ref{item:compute-shadows}): We bi-partition the vertices of $M$ according to the existence of an outergate in $K$. In particular, let $M_1$ contain every vertex of $M$ that has an outergate with respect to $K$, and let $M_2 = M \setminus M_1$.
            By Lemma~\ref{lem:cb-graph:cliques}.\ref{item:cb:cliques-vertex-classification}, this partitioning can be done in $\cO(n+m)$ time.
            For the vertices with an outergate, we adapt the distance-to-clique procedure from~\cite{ChDr} for shadow computations.
            More specifically, we compute outergates $g_K(z)$, for every $z \in M_1$, which can be done in $\cO(n+m)$ time by Lemma~\ref{lem:compute-outergates}.
            By doing so, we obtain for every $s \in N(K)$ the value $\gamma(s) = \left|\{z \in M_1 : s = g_K(z)\}\right|$.
            Furthermore, by Lemma~\ref{lem:cb-graph:cliques}.\ref{item:cb:cliques-2outergated}, $M_2 \subseteq \Psi_K(w,M)$ for every $w \in K$.
            Then, for every $w \in K$, $\psi_K(w,M) = |M_2| + \sum\{ \gamma(s) : s \in N(K) \cap N(w) \}$, which can be computed in $\cO(d(w))$ time.
            The overall running-time is in $\cO(n+m)$.

            In what follows, the computation of $r$-potentials (\ref{item:compute-potentials}), and of eccentricities (\ref{item:compute-ecc-clique}) is reduced to that of some shadows (\ref{item:compute-shadows}) for different subsets $M'$.

            \smallskip
            (\ref{item:compute-potentials}): We execute a BFS with start subset $K$.
        By doing so, we compute the following subsets: $M_{< r} = \{ z \in M : d(z,M) < r\}$; and $M_r = \{z \in M : d(z,M) = r\}$.
        Since $K$ is a clique, for every $w \in K$, $\phi_r(w,M) = |M_{< r}| + \psi_K(w,M_r)$.
        In particular, computing all values $\phi_r(w,M)$ can be reduced in $\cO(n+m)$ time to computing all values $\psi_K(w,M_r)$, which can be done in $\cO(n+m)$ time by (\ref{item:compute-shadows}).

            \smallskip
            (\ref{item:compute-ecc-clique}): We may assume that $K$ and $F(K)$ are disjoint: else, all vertices of $G$ are in $K$, and so, their eccentricities are equal to $1$.
        Apply (\ref{item:compute-shadows}) for $M=F(K)$.
        For every $w \in K$, $\ecc(w) = \ecc(K)$ if $\psi_K(w) = |F(K)|$, otherwise $\ecc(w) = \ecc(K)+1$ because $K$ is a clique.
        \end{proof}

Note that Lemma~\ref{lem:intermediate-computations}.\ref{item:compute-ecc-clique} is also a special case of~\cite[Prop. 11.10]{Gpunimodal-ecc}.

\subsection{Lower and upper bounds on eccentricities}\label{ssec:ecc-bounds}

We next provide upper bounds on the eccentricities of the vertices in a middle slice. 

    \begin{mylemma}\label{lem:ecc-ub}
        If $u,v$ are mutually distant vertices in a $\frac 1 2$-hyperbolic graph $G=(V,E)$, then $\ecc(w) \le \left\lceil \frac{d(u,v)} 2 \right\rceil + 1$ for every $w \in S_{\left\lfloor \frac{d(u,v)} 2 \right\rfloor}(u,v)$ (resp. for every $w \in S_{\left\lceil \frac{d(u,v)} 2 \right\rceil}(u,v)$). 

        \smallskip
        In particular,
        $$\left\lceil \frac{d(u,v)} 2 \right\rceil \le \rad(G) \le \left\lceil \frac{d(u,v)} 2 \right\rceil + 1$$
    \end{mylemma}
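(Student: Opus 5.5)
Set $D = d(u,v)$, pick $w \in S_k(u,v)$ with $k \in \{\lfloor D/2\rfloor, \lceil D/2\rceil\}$, and let $x$ be an arbitrary vertex. The goal is $d(w,x) \le \lceil D/2\rceil + 1$, and the tool is the $\frac 1 2$-hyperbolicity four-point condition applied to the quadruple $u,v,w,x$:
$$d(u,v) + d(w,x) \le \max\{d(u,w)+d(v,x),\ d(u,x)+d(v,w)\} + 1.$$
Since $u,v$ are mutually distant, $\ecc(u)=\ecc(v)=D$, so $d(v,x) \le D$ and $d(u,x)\le D$. Also $d(u,w)=k$ and $d(v,w)=D-k$, hence both terms of the maximum are at most $D + \max\{k, D-k\} = D + \lceil D/2\rceil$. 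Substituting and cancelling $d(u,v)=D$ yields $d(w,x) \le \lceil D/2\rceil + 1$. Because $x$ was arbitrary, $\ecc(w) \le \lceil D/2\rceil + 1$. The argument is the same for both choices of slice, as only $\max\{k,D-k\}=\lceil D/2\rceil$ is used.

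The upper bound on $\rad(G)$ now follows at once, since the middle slice is nonempty: any vertex of a shortest $uv$-path at distance $k$ from $u$ belongs to it, and its eccentricity is at most $\lceil D/2\rceil+1$. The lower bound holds in any graph. For any vertex $c$, the triangle inequality gives $D = d(u,v) \le d(u,c)+d(c,v) \le 2\ecc(c)$, so $\ecc(c) \ge D/2$. As eccentricities are integers, $\ecc(c) \ge \lceil D/2\rceil$, and therefore $\rad(G) \ge \lceil D/2 \rceil$.

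I expect no serious obstacle here. The only point needing care is to use both eccentricity bounds $d(u,x), d(v,x) \le D$, which rely on mutual distance and not merely on $\diam(G)$, and to check that the additive constant $2\delta = 1$ leaves exactly the ``$+1$''. No appeal to Lemma~\ref{lem:hyp-pties} or to the $\alpha_1$-metric property is needed.
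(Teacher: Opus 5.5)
Your proof is correct, and it takes a more direct route than the paper's.

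\textbf{The paper's route.} The paper does not apply the four-point condition to $u,v,w,x$. It passes to the injective hull $\mathcal{H}(G)$ instead:
\begin{enumerate}
\item Since $d(u,z),d(v,z)\le d(u,v)$, the balls $B_{\lfloor d(u,v)/2\rfloor}(u)$, $B_{\lceil d(u,v)/2\rceil}(v)$ and $B_{\lceil d(u,v)/2\rceil}(z)$ pairwise intersect.
\item The Helly property of $\mathcal{H}(G)$ then gives a hull vertex $c_z$ that lies in the slice $S_{\lfloor d(u,v)/2\rfloor}(u,v)$ of $\mathcal{H}(G)$ and satisfies $d(c_z,z)\le\lceil d(u,v)/2\rceil$.
\item As $\mathcal{H}(G)$ is again $\frac12$-hyperbolic (Lemma~\ref{lem:injective-hull}), its slices are cliques (Lemma~\ref{lem:hyp-pties}.\ref{item:leanness}). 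Hence $d(w,c_z)\le 1$, and the triangle inequality finishes.
\end{enumerate}

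\textbf{What your route buys.} You need none of the hull, the Helly property, the hyperbolicity of the hull, or leanness. You use mutual distance exactly as the paper does, through $d(u,x),d(v,x)\le d(u,v)$. Your computation also gives a more general fact: $\ecc(w)\le\max\{d(u,w),d(v,w)\}+2\delta$ for every vertex $w$ of a $\delta$-hyperbolic graph, not only for $w$ in a middle slice or in $I(u,v)$.

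\textbf{What the paper's route buys.} For each $z$, it produces a single hull point $c_z$ that is within $\lceil d(u,v)/2\rceil$ of $z$ and within distance $1$ of the entire slice simultaneously. That is a slightly more structural statement, but nothing beyond the stated bound is used later.

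You also spell out the trivial lower bound $\rad(G)\ge\lceil d(u,v)/2\rceil$ and the nonemptiness of the slice, which the paper leaves implicit.
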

    \begin{proof}
        By symmetry (up to reverting the respective roles of $u$ and $v$), it suffices to prove the result for the slice $S_{\left\lfloor \frac{d(u,v)} 2 \right\rfloor}(u,v)$.
        Let $z \in V$ be arbitrary.
        It suffices to prove that $d(w,z) \le \left\lceil \frac{d(u,v)} 2 \right\rceil + 1$ for every $w \in S_{\left\lfloor \frac{d(u,v)} 2 \right\rfloor}(u,v)$.
        For that, since $u$ and $v$ are mutually distant, $d(u,z) \le d(u,v)$ and $d(v,z) \le d(u,v)$.
        In particular, the balls $B_{\left\lfloor \frac{d(u,v)} 2 \right\rfloor}(u),B_{\left\lceil \frac{d(u,v)} 2 \right\rceil}(v) \ \text{and} \ B_{\left\lceil \frac{d(u,v)} 2 \right\rceil}(z)$ pairwise intersect.
        Let $\mathcal{H}(G)$ be the injective hull of $G$.
        Since $\mathcal{H}(G)$ is a Helly graph, there is a vertex $c_z$ of $\mathcal{H}(G)$ such that $d(c_z,u) = \left\lfloor \frac{d(u,v)} 2 \right\rfloor$, $d(c_z,v) = \left\lceil \frac{d(u,v)} 2 \right\rceil$, and $d(c_z,z) \le \left\lceil \frac{d(u,v)} 2 \right\rceil$.
        By Lemma~\ref{lem:injective-hull}, $\mathcal{H}(G)$ is $\frac 1 2$-hyperbolic.
        Therefore, by Lemma~\ref{lem:hyp-pties}.\ref{item:leanness}, $d(w,c_z) \le 1$ for every $w \in S_{\left\lfloor \frac{d(u,v)} 2 \right\rfloor}(u,v)$.
        It implies $d(z,w) \le d(z,c_z)+d(c_z,w) \le \left\lceil \frac{d(u,v)} 2 \right\rceil + 1$.
    \end{proof}

    Lemma~\ref{lem:ecc-ub} can be directly used in order to solve the even case of our algorithm (case of mutually distant vertices at an even distance).

    \begin{mycorollary}\label{cor:even-case}
         If $u,v$ are mutually distant vertices in a $\frac 1 2$-hyperbolic graph $G$ such that $d(u,v) = 2r$ is even, then every vertex of minimum eccentricity in $S_r(u,v)$ is central.
    \end{mycorollary}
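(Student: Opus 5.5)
By Lemma~\ref{lem:ecc-ub} with $d(u,v)=2r$, we get $r \le \rad(G) \le r+1$, and every $w \in S_r(u,v)$ has $\ecc(w) \le r+1$. The plan is therefore a short case analysis on the value of $\rad(G)$. We show that whenever $\rad(G)=r$, some vertex of $S_r(u,v)$ has eccentricity $r$. Once this is established, a vertex of minimum eccentricity in $S_r(u,v)$ is central in both cases.

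If $\rad(G)=r+1$, then every vertex of $S_r(u,v)$ has eccentricity at least $r+1$, since that is the radius. By the upper bound, every such vertex has eccentricity exactly $r+1=\rad(G)$. Hence every vertex of the slice, in particular one of minimum eccentricity, is central.

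If $\rad(G)=r$, let $c$ be a central vertex, so $\ecc(c)=r$. Then $d(c,u)\le r$ and $d(c,v)\le r$. Since $d(u,v)=2r \le d(u,c)+d(c,v) \le 2r$, equality holds throughout. Thus $d(u,c)=d(c,v)=r$, which gives $c \in I(u,v)$ and, more precisely, $c \in S_r(u,v)$. So the slice contains a vertex of eccentricity $r=\rad(G)$. Any vertex of minimum eccentricity in $S_r(u,v)$ has eccentricity at most $r$, and therefore exactly $r$ because $r$ is the radius, so it is central.

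Combining the two cases proves the corollary. There is no real obstacle here: the only nontrivial ingredient is the upper bound $\ecc(w)\le r+1$ on the middle slice, which is supplied by Lemma~\ref{lem:ecc-ub}. The triangle-inequality step in the second case is routine and uses only that $c$ has eccentricity $r$. Mutual distance of $u,v$ is needed only through Lemma~\ref{lem:ecc-ub}.
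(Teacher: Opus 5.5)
Your proposal is correct and follows essentially the same argument as the paper. If $\rad(G)=r$, every central vertex lies in $S_r(u,v)$ by the triangle inequality; otherwise Lemma~\ref{lem:ecc-ub} forces $\rad(G)=r+1$ and makes every vertex of the slice central.
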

    \begin{proof}
        If $\rad(G) = r$, then every central vertex is in $S_r(u,v)$.
        Otherwise, by Lemma~\ref{lem:ecc-ub}, $\rad(G)=r+1$, and every vertex of $S_r(u,v)$ is central.
    \end{proof}

    For the odd case, our first intent was to reuse the strategy from~\cite{ChDr}: either we extract a central vertex from a middle slice, or we find a new pair of mutually distant vertices at a larger distance.
For that, the following procedure can be used:

\begin{mylemma}\label{lem:larger-x}
    Let $K,M$ be disjoint vertex-subsets in an $\alpha_1$-metric graph $G$ such that $K$ is a clique, and for every $z \in M$, $d(z,K) \ge r$.
    If the vertex $c$ maximizes $\psi_K(c,M)$, but $\psi_K(c,M) < |M|$, then for every $x \in M \setminus \Psi_K(c,M)$, $\ecc(x) \ge \max\{d(x,y) : y \in M\} \ge 2r$.
\end{mylemma}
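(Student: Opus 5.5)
The idea is to compare the shadow of $c$ with the shadow of a projection of $x$ on $K$, use the maximality of $\psi_K(c,M)$ to find a vertex $y \in M$ that is "separated" from $x$, and then apply the $\alpha_1$-metric property to the edge between the two projections.

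Fix $x \in M \setminus \Psi_K(c,M)$ and write $k = d(x,K) \ge r$. Pick any $c' \in \proj_x(K)$. Since $x \notin \Psi_K(c,M)$, the vertex $c$ is not in $\proj_x(K)$. As $K$ is a clique, this gives $d(x,c) = k+1$ and $c' \neq c$, so $c$ and $c'$ are adjacent.

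Next, use the choice of $c$. By definition $x \in \Psi_K(c',M)$ but $x \notin \Psi_K(c,M)$. Since $\psi_K(c',M) \le \psi_K(c,M)$, the set $\Psi_K(c,M)$ cannot be contained in $\Psi_K(c',M)$: otherwise $\Psi_K(c,M) \subsetneq \Psi_K(c',M)$, which would contradict the maximality of $\psi_K(c,M)$. Hence there is some $y \in \Psi_K(c,M) \setminus \Psi_K(c',M)$. Write $l = d(y,K) \ge r$. Then $d(y,c) = l$, and again because $K$ is a clique, $d(y,c') = l+1$.

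Now apply the $\alpha_1$-metric property to the quadruple $(y,c,c',x)$ with the adjacent pair $c,c'$.
\begin{itemize}
\item We have $c \in I(y,c')$, since $d(y,c') = l+1 = d(y,c) + d(c,c')$.
\item We have $c' \in I(c,x)$, since $d(c,x) = k+1 = d(c,c') + d(c',x)$.
\end{itemize}
The property therefore yields
\[
d(y,x) \ge d(y,c) + d(c',x) = l + k \ge 2r .
\]
Since $y \in M$, this gives $\max\{d(x,y) : y \in M\} \ge 2r$. The inequality $\ecc(x) \ge \max\{d(x,y) : y \in M\}$ holds trivially.

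The only step that needs care is producing $y$ from the maximality of $\psi_K(c,M)$, which is the set-containment argument above. The interval memberships required by the $\alpha_1$-metric property are then immediate because $K$ is a clique. The hypothesis $\psi_K(c,M) < |M|$ only serves to guarantee that $M \setminus \Psi_K(c,M)$ is nonempty.
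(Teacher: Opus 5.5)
Your proof is correct and follows the same route as the paper. You take a projection $c'$ of $x$ on $K$, use the maximality of $\psi_K(c,M)$ to obtain $y \in \Psi_K(c,M)\setminus\Psi_K(c',M)$, and apply the $\alpha_1$-metric property to the edge $cc'$; you also spell out the set-containment argument and the interval memberships that the paper leaves implicit.
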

\begin{proof}
    Let $w \in \proj_x(K)$.
    By the maximality of $\psi_K(c,M)$, there is an $y \in M$ such that $c \in \proj_y(K)$ and $w \notin \proj_y(K)$.
    Then, by the $\alpha_1$-metric property, $\ecc(x) \ge d(x,y) \ge d(x,w)+d(c,y) \ge 2r$.
\end{proof}

\begin{mycorollary}\label{cor:larger-x}
    If $K$ is a clique in an $\alpha_1$-metric graph $G$, the vertex $c$ maximizes $\psi_K(c)$, and $\ecc(c) = \ecc(K)+1$, then for every $x \in F(c)$, $\ecc(x) \ge 2\ecc(K)$.    
\end{mycorollary}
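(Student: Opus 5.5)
We plan to derive the corollary directly from Lemma~\ref{lem:larger-x}, applied with $M = F(K)$ and $r = \ecc(K)$. By definition every $z \in F(K)$ satisfies $d(z,K) = \ecc(K)$, so the distance hypothesis of Lemma~\ref{lem:larger-x} holds. We may also assume that $K$ and $F(K)$ are disjoint, since otherwise $\ecc(K)=0$ and there is nothing to prove. In this notation $\psi_K(c) = \psi_K(c,F(K))$, so $c$ maximizes $\psi_K(c,M)$ as the lemma requires.

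The remaining hypothesis is $\psi_K(c,M) < |M|$, which we read off from $\ecc(c) = \ecc(K)+1$. If instead $\psi_K(c) = |F(K)|$, then $c \in \proj_z(K)$ for every $z \in F(K)$, so $d(c,z) = \ecc(K)$ for these $z$. For any $z \notin F(K)$ we have $d(z,K) \le \ecc(K)-1$, and since $K$ is a clique this gives $d(c,z) \le \ecc(K)$. Together these would yield $\ecc(c) = \ecc(K)$, a contradiction. Hence $\psi_K(c) < |F(K)|$, and Lemma~\ref{lem:larger-x} shows that every $x \in F(K)\setminus\Psi_K(c)$ satisfies $\ecc(x) \ge 2\ecc(K)$.

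The main step is to pass from $F(K)\setminus \Psi_K(c)$ to $F(c)$. Let $x \in F(c)$, so $d(c,x) = \ecc(c) = \ecc(K)+1$. Because $K$ is a clique, $d(x,K) \ge d(c,x)-1 = \ecc(K)$, and therefore $x \in F(K)$. Also $c \notin \proj_x(K)$, since $d(c,x) > d(x,K)$. So $x \in F(K)\setminus\Psi_K(c)$, and the bound from the previous paragraph applies, giving $\ecc(x) \ge 2\ecc(K)$. We expect no real obstacle here; the only care needed is to check that $F(c) \subseteq F(K)\setminus\Psi_K(c)$, which is exactly the argument just given.
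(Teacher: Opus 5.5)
Your proposal is correct and is essentially the paper's proof: apply Lemma~\ref{lem:larger-x} with $M=F(K)$ and $r=\ecc(K)$, derive $\psi_K(c,M)<|M|$ from $\ecc(c)=\ecc(K)+1$, and use that $K$ is a clique to get $F(c)\subseteq F(K)\setminus\Psi_K(c)$. The paper states this last step as the equality $F(c)=M\setminus\Psi_K(c,M)$ without detail, whereas you check the inclusion (which is all that is needed) and the disjointness of $K$ and $F(K)$ explicitly.
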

\begin{proof}
    Apply Lemma~\ref{lem:larger-x} for $M=F(K)$ and $r=\ecc(K)$.
    Since we assume $\ecc(c) = r+1$, $\psi_K(c,M) < |M|$.
    Furthermore, since $K$ is a clique, $F(c) = M \setminus \Psi_K(c,M)$.
\end{proof}

However, the strategy sketched above can only be used in order to solve the odd case $d(u,v) = 2r-1$ if the middle slice $S_{r-1}(u,v)$ has eccentricity $r$. 
The example of Fig.~\ref{fig:odd-case} shows that it is not always the case.

\begin{figure}[!h]
    \centering
    \begin{tikzpicture}
        \filldraw (-3,0) circle (2pt) node[anchor=south]{$x$};
        \filldraw (0,0) circle (2pt) node[anchor=south]{$x'$};
        \filldraw (1,1) circle (2pt) node[anchor=south]{};
        \filldraw (1,-1) circle (2pt) node[anchor=south]{};
        \filldraw (2,1) circle (2pt) node[anchor=south]{$y'$};
        \filldraw (2,-1) circle (2pt) node[anchor=north]{$z'$};
        \filldraw (5,1) circle (2pt) node[anchor=south]{$y$};
        \filldraw (5,-1) circle (2pt) node[anchor=north]{$z$};

        \draw[dashed] (-3,0) -- (0,0);
        \draw[dashed] (2,1) -- (5,1);
        \draw[dashed] (2,-1) -- (5,-1);
        \draw (0,0) -- (1,1) -- (2,1) -- (2,-1) -- (1,-1) -- (0,0);

        \draw node at (-1.5,.25) {$r-2$};
        \draw node at (3.5,1.25) {$r-1$};
        \draw node at (3.5,-1.25) {$r-1$};
    \end{tikzpicture}
    \caption{The graph $G$, made of three paths pending to the vertices of a $C_5$, is $\frac 1 2$-hyperbolic. Vertices $x,y,z$ are pairwise at distance $2r-1$, however $d\left(y,S_{r-1}(x,z)\right) = d\left(z,S_{r-1}(x,y)\right) = r+1$.}
    \label{fig:odd-case}
\end{figure}

\subsection{The odd case}\label{ssec:odd-case}

We present in what follows the new results and techniques that are required to solve the odd case.
For starters, we use metric triangles (Lemma~\ref{lem:alpha1-pties}.\ref{item:alpha1:metric-triangle}) to prove that we can always restrict the search for a central vertex to the two middle slices and their respective neighborhoods.
\begin{mylemma}\label{lem:odd-case:locate-center}
    Let $u,v,c$ be vertices in a graph $G$ such that $d(u,c) \le r$, $d(v,c) \le r$, and $d(u,v) = 2r-1$.
    If $G$ is $\alpha_1$-metric, then $c \in B_1\left(S_{r-1}(u,v)\right) \cup B_1\left(S_r(u,v)\right)$. 
\end{mylemma}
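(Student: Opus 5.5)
Since $d(u,v)=2r-1 \le d(u,c)+d(c,v) \le 2r$, either $c\in I(u,v)$ or $d(u,c)=d(v,c)=r$. In the first case $d(u,c)\in\{r-1,r\}$, so $c$ lies in one of the two middle slices and we are done. From now on we assume $d(u,c)=d(v,c)=r$. We consider the triple $u,v,c$. By the observation of~\cite{ChDrEsHaVa}, it has either a median or a quasi-median.

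\emph{Median case.} Let $m\in I(u,v)\cap I(v,c)\cap I(c,u)$, with $a=d(u,m)$, $b=d(v,m)$ and $e=d(c,m)$. Then $a+b=2r-1$, $a+e=r$ and $b+e=r$. Adding the last two equations gives $2r-1+2e=2r$, so $2e=1$, which is impossible. Hence no median exists.

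\emph{Quasi-median case.} Let $u',v',c'$ be a quasi-median, with $a=d(u,u')$, $b=d(v,v')$ and $e=d(c,c')$. By Lemma~\ref{lem:alpha1-pties}.\ref{item:alpha1:metric-triangle}, the sides $(p,q,s)=(d(u',v'),d(v',c'),d(c',u'))$ form a triple in $\{(1,1,1),(1,2,2),(2,1,2),(2,2,1),(2,2,2)\}$. The defining equalities are
$$a+p+b=2r-1,\qquad b+q+e=r,\qquad e+s+a=r.$$
Adding the last two and subtracting the first gives $2e+q+s-p=1$. We run through the five types.
\begin{itemize}
\item For $(1,1,1)$ we get $2e=0$, so $e=0$.
\item For $(1,2,2)$ we get $2e+3=1$, which is impossible.
\item For $(2,1,2)$ and $(2,2,1)$ we get $2e+1=1$, so $e=0$.
\item For $(2,2,2)$ we get $2e+2=1$, which is impossible.
\end{itemize}
So in every remaining case $c=c'$, and $q,s\in\{1,2\}$ with $q+s-p=1$.

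It remains to show that $c$ is adjacent to a vertex of $S_{r-1}(u,v)\cup S_r(u,v)$. Since $u'\in I(u,v)$ and $v'\in I(u,v)$, any vertex on a shortest $u'v'$-path also lies in $I(u,v)$; the same holds for shortest $uu'$- and $v'v$-paths.
\begin{itemize}
\item If $(p,q,s)=(1,1,1)$, then $a+b=2r-2$ and $b+1=r=a+1$, so $a=b=r-1$. Thus $u'\in S_{r-1}(u,v)$ and $c$ is adjacent to $u'$.
\item If $(p,q,s)=(2,2,1)$, then $c$ is adjacent to $u'$, and $a=r-1$, so $u'\in S_{r-1}(u,v)$.
\item If $(p,q,s)=(2,1,2)$, then $c$ is adjacent to $v'$, and $b=r-1$. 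Hence $d(u,v')=2r-1-b=r$, so $v'\in S_r(u,v)$.
\end{itemize}
In each case $c\in B_1(S_{r-1}(u,v))\cup B_1(S_r(u,v))$.

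The only real work is this arithmetic case analysis. The key ingredient is the classification of metric triangles in $\alpha_1$-metric graphs (Lemma~\ref{lem:alpha1-pties}.\ref{item:alpha1:metric-triangle}). That classification forces $c$ to be a vertex of the quasi-median, and therefore adjacent to a vertex in a middle slice.
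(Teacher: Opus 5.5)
Your proof is correct and follows essentially the same route as the paper: you reduce to $d(u,c)=d(v,c)=r$, rule out a median, and then use the classification of metric triangles in $\alpha_1$-metric graphs to force $c=c'$, adjacent to a quasi-median vertex in a middle slice. Your identity $2e+q+s-p=1$ and your parity argument in the median case are just a more uniform way of organizing the paper's case split on whether $d(u',c')=d(v',c')$.
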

\begin{proof}
    If $d(u,c) = r-1$ then $c \in S_{r-1}(u,v)$, and similarly if $d(v,c) = r-1$ then $c \in S_r(u,v)$.
    Therefore from now on we assume that $d(u,c) = d(v,c) = r$.
  
    Suppose by contradiction there exists a median vertex $x$ for $u,v,c$.
    Without loss of generality, $d(u,x) \ge d(v,x)$.
    Since $d(u,v) = d(u,x)+d(x,v) = 2r-1$, $d(u,x) \ge r$.
    However, since $x \ne c$, the latter implies $d(u,c) > d(u,x) = r$, which is a contradiction.
    Therefore, in what follows let $u',v',c'$ be a quasi-median for $u,v,c$.

    Assume first $d(u',c') = d(v',c') = k$.
    By Lemma~\ref{lem:alpha1-pties}.\ref{item:alpha1:metric-triangle}, $k \le 2$ and $d(u',v') \le k$.
    In this situation, $d(u,u') = d(v,v') = r - k - d(c,c') \le r-k$.
    Furthermore, $2r-1 = d(u,v) = d(u,u') + d(u',v') + d(v',v) \le (r-k) + k + (r-k) = 2r-k$.
    Hence, $d(u',c') = d(v',c') = d(u',v') = 1$.
    It implies $d(u,u') = d(v,v') = r-1$, and so, $c'=c$.
    In particular, $c$ is both adjacent to some $u' \in S_{r-1}(u,v)$ and to some $v' \in S_r(u,v)$.

    Finally, assume $d(u',c') \ne d(v',c')$.
    Without loss of generality, $d(u',c') < d(v',c')$.
    By Lemma~\ref{lem:alpha1-pties}.\ref{item:alpha1:metric-triangle}, $d(u',c') = 1$ and $d(v',c') = d(u',v') = 2$.
    In this situation, $d(u,u') = d(v,v')+1$.
    Then, $2r-1 = d(u,v) = d(u,u') + d(u',v') + d(v',v) = d(u,u') + 2 + (d(u,u')-1) = 2d(u,u')+1$.
    It implies $d(u,u') = r-1$.
    Hence, $c=c'$.
    Furthermore, vertex $c$ is adjacent to $u' \in S_{r-1}(u,v)$.
\end{proof}

The remainder of this part is devoted to the proof of the following result:

\begin{myproposition}\label{prop:odd-case}
    Let $K$ be a clique in a $\frac 1 2$-hyperbolic graph $G$ such that $\ecc(K) = r+1 \ge 3$, and $\ecc(w) = \ecc(K) = r+1$ for every $w \in K$.    
    There is an $\cO(n+m)$-time procedure that outputs a vertex $a$ such that either $\ecc(a) \le r$, $\ecc(a) = r+1$, or $\ecc(a) \ge 2r$.
    Furthermore, if $\ecc(a) =r+1$ then every vertex of $B_1(K)$ also has eccentricity at least $r+1$.
\end{myproposition}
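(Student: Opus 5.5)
We will work in the ball $B_1(K)$ and compute, for every $s \in N(K)$, how many far vertices $s$ covers. Set $M = F(K)$; every $z \in M$ has $d(z,K) = r+1$. By hypothesis every $w \in K$ has $\ecc(w) = r+1$, so $\psi_K(w) = |M|$ for all $w \in K$. Equivalently, $\proj_z(K) = K$ for every $z \in M$. A vertex $s \in N(K)$ satisfies $\ecc(s) \le r$ if and only if $d(s,z) \le r$ for all $z \in M$, and also $d(s,y) \le r$ for the other vertices $y$. The latter are at distance at most $r$ from $K$, so at most $r+1$ from $s$. The main quantity is therefore $\phi_r(s,M)$, together with the vertices $y$ with $d(y,K)=r$ and $d(y,s) = r+1$.

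\textbf{Reduction to outergates.} By Lemma~\ref{lem:cb-graph:cliques}, each $z \in M$ either has an outergate $g(z) \in N(K)$, or has no outergate and only a $2$-outergate. For $s \in N(K)$, we have $d(s,z) = r$ if and only if $s \in I(z,w)$ for some $w \in K$. Using the $\alpha_1$-metric property and the clique structure, we expect to show that this holds if and only if $s$ is adjacent to (or equal to) the outergate $g(z)$ when it exists, and is adjacent to a common neighbour of $K$ lying toward $z$ otherwise. The routine part is then computing, for each $s \in N(K)$, the count $\phi_r(s,M)$ and the analogous count for the layer $L = \{y : d(y,K) = r\}$ restricted to $\Psi$-type projections. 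We would do this in $\cO(n+m)$ time by the same aggregation used in Lemma~\ref{lem:intermediate-computations}: accumulate $\gamma(g)$ values on $N(K)$ and sum them over $N(s)\cap N(K)$. Vertices without an outergate are handled separately via Lemma~\ref{lem:compute-outergates} applied to $B_1(K)$, which is where the $2$-outergates become $1$-outergates.

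\textbf{Output.} Let $a \in N(K)$ maximize the number of vertices of $M \cup L$ within distance $r$ (computed exactly, or via Lemma~\ref{lem:intermediate-computations} applied to the clique $K' = \proj$-neighbourhood if needed). If $a$ covers everything, then $\ecc(a) \le r$. Otherwise $\ecc(a) = r+1$ or larger, and we apply Lemma~\ref{lem:larger-x}-type reasoning. If some uncovered $x$ together with some $y$ witnesses the maximality failure, then by the $\alpha_1$ property $d(x,y) \ge 2r$, and we output $x$ instead, which gives $\ecc(x) \ge 2r$. Otherwise every $s \in B_1(K)$ misses some vertex at distance $r+1$, so $\ecc(s) \ge r+1$ throughout $B_1(K)$, and we output $a$ (or any vertex of $K$).

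\textbf{Main obstacle.} The crux is the dichotomy in the last step. If no single neighbour $s$ of $K$ covers $M$ at distance $r$, then either two far vertices $x,y$ are at distance at least $2r$, or there is no such $s$ at all. Unlike in Lemma~\ref{lem:larger-x}, the projections onto $K$ are all of $K$, so the $\alpha_1$ argument must instead be run on the second layer, on pairs $s,s' \in N(K)$ that are outergates of $x,y$. The trouble is that $s,s'$ need not be adjacent: if they are non-adjacent, they are joined via $K$ at distance $2$. We would first try a Helly-type argument in the injective hull, as in the proof of Lemma~\ref{lem:ecc-ub}, combined with the metric triangle types of Lemma~\ref{lem:alpha1-pties}. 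The aim is to show that if the covering sets of all $s$ pairwise intersect, then some $s$ covers everything, using that $\frac12$-hyperbolicity makes $N(z)$-toward-$K$ sets cliques. If this fails, we would look for a $C_5$ obstruction of the kind shown in Fig.~\ref{fig:odd-case} and handle it by taking $a$ at distance up to $3$ from $K$.
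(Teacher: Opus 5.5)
There is a genuine gap, and it is in the step you call routine. Your set-up is sound: $\proj_z(K)=K$ for $z\in F(K)$, and for $s\in N(K)$, $\ecc(s)\le r$ iff $s$ is within distance $r$ of $F(K)$ and of the layer $\{y:d(y,K)=r\}$. But if these coverage counts were available for all $s\in N(K)$ in linear time, the proposition would be trivial: output a full-count $s$, or else any vertex of $K$. The branch $\ecc(a)\ge 2r$ would never be needed. The paper needs that branch precisely because no such count over the non-clique $N(K)$ is available, and your mechanism does not supply one.

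Specifically, the characterization ``$d(s,z)=r$ iff $s\in\{g(z)\}\cup N(g(z))$'' holds only in the ``only if'' direction, because $S_1(w,z)$ is a clique containing $g(z)$. A common neighbour of $g(z)$ and $K$ can be at distance $r+1$ from $z$: take $K=\{w\}$, a triangle $wsg$, and a path of length $r$ from $g$ to $z$. For $z$ without an outergate, a common neighbour of $K$ on shortest paths to $z$ does not exist, by definition. The $\gamma$-aggregation of Lemma~\ref{lem:intermediate-computations} works only because the evaluated vertices form the clique $K$, where $w\in\proj_z(K)$ iff $w\in N(g_K(z))$; nothing of the sort holds on $N(K)$. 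Finally, the $\alpha_1$ exchange of Lemma~\ref{lem:larger-x} needs the two competing vertices to be adjacent. You flag this as the obstacle but leave it open, listing only things you would try.

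The paper closes this with three ingredients absent from your plan.
\begin{enumerate}
\item The layer $F'(K)=\{y:d(y,K)=r\}$ is handled on $K$ itself, where projections need not be all of $K$, so Lemma~\ref{lem:larger-x} applies. Either a vertex of eccentricity $\ge 2r$ is output, or $K$ is shrunk so that every vertex outside $F(K)$ is within distance $r$ of all of $K$. No candidate is lost, by Lemma~\ref{lem:restrict-search}.\ref{item:lower-restrict}. After this, one needs only \emph{some} $v\in N(K)$ with $F(K)\subseteq B_r(v)$, not a maximizer of any count. Then either $\ecc(v)\le r$, or every $a\in F(v)$ has $\ecc(a)\ge 2r$ by $\alpha_1$ applied to an edge $vw$ with $w\in K$.
\item $B_2(K)$ is outergated (Lemma~\ref{lem:clique-balls-outergated}). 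With $M$ the outergates of $F(K)$ with respect to $B_2(K)$, the condition $F(K)\subseteq B_r(v)$ becomes the radius-$2$ condition $M\subseteq B_2(v)$.
\item A case analysis on $\diam(M)$ confines the search to cliques. If $\diam(M)\ge 5$ there is no candidate. If $\diam(M)=4$ one tests the clique $S_2(g_K(x_0),g_K(x))$. If $\diam(M)\le 3$, the $L$, $c_L$, $W_L$ steps, using Lemma~\ref{lem:technical-distance-cond}, reduce to the clique $W_L=S_1(w_L,x_0)$. On such cliques, $2$-potentials are computable by Lemma~\ref{lem:intermediate-computations}.\ref{item:compute-potentials}, and the $\alpha_1$ exchange argument is valid because the candidates are pairwise adjacent.
\end{enumerate}
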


We remark that since we assume $r \ge 2$, $r < r+1 < 2r$.
Before proving Proposition~\ref{prop:odd-case}, we need to prove a few intermediate results.

\begin{mylemma}\label{lem:slices-clique}
    Let $K$ be a clique of a graph $G$, and let $x$ be such that $d(x,K) \ge 3$.   
    If $G$ is $\frac 1 2$-hyperbolic then the slices $S_2(w,x)$, for $w \in \proj_x(K)$, are comparable for inclusion.
\end{mylemma}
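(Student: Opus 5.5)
The plan is a proof by contradiction using only the four-point condition on two well-chosen quadruples. Let $k = d(x,K) \ge 3$ and fix $w,w' \in \proj_x(K)$, so that $d(w,x) = d(w',x) = k$ and $d(w,w') \le 1$. Suppose the slices $S_2(w,x)$ and $S_2(w',x)$ are not comparable for inclusion. Then we may pick a vertex $a \in S_2(w,x) \setminus S_2(w',x)$ and a vertex $b \in S_2(w',x) \setminus S_2(w,x)$. By definition of the slices, $d(a,x) = d(b,x) = k-2$, $d(w,a) = 2$ and $d(w',b) = 2$.

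The first step is to compute the ``cross'' distances exactly. Since $a \notin I(w',x)$ and $d(a,x) = k-2$, we have $d(w',a) + (k-2) > k$, so $d(w',a) \ge 3$. On the other hand, $d(w',a) \le d(w',w) + d(w,a) \le 3$. Hence $d(w',a) = 3$, and symmetrically $d(w,b) = 3$.

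The second step is to show that $a$ and $b$ are close, namely $d(a,b) \le 2$. For this I apply the $\frac 1 2$-hyperbolicity condition to the quadruple $a,b,w,x$. The three pairwise sums are:
$$d(a,b)+d(w,x) = d(a,b)+k, \quad d(w,a)+d(b,x) = 2+(k-2) = k, \quad d(a,x)+d(w,b) = (k-2)+3 = k+1.$$
The condition says the largest sum exceeds the second largest by at most $1$. If $d(a,b)+k$ is the largest sum, this gives $d(a,b)+k \le (k+1)+1$, i.e., $d(a,b) \le 2$. Otherwise $d(a,b)+k \le k+1$ anyway. In both cases $d(a,b) \le 2$.

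The final step applies the four-point condition to the quadruple $w,w',a,b$. Here the sums are:
$$d(w,b)+d(w',a) = 6, \quad d(w,a)+d(w',b) = 4, \quad d(w,w')+d(a,b) \le 1+2 = 3.$$
The largest sum is $6$, while the second largest is only $4$. This violates $\frac 1 2$-hyperbolicity, which requires the largest to be at most the second largest plus $1$. This contradiction proves the lemma.

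The only real difficulty was choosing the right quadruples. The natural first attempt, the quadruple $w,w',a,b$ on its own, needs an a priori bound on $d(a,b)$. The auxiliary quadruple $a,b,w,x$, which uses the common target $x$, supplies exactly this bound. I do not expect any other obstacle: the hypothesis $k \ge 3$ is only used so that the slices $S_2(\cdot,x)$ are proper and the distances above are well defined.
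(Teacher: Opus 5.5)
Your proof is correct. It follows the same skeleton as the paper's: take $a \in S_2(w,x)\setminus S_2(w',x)$ and $b \in S_2(w',x)\setminus S_2(w,x)$, show $d(a,b)\le 2$, then get a contradiction from the configuration around the edge $ww'$. The difference is in the tools used for the two halves.

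The paper gets $d(a,b)\le 2$ in two moves. First, it invokes the external fact that a clique in a CB-graph is $2$-outergated (Lemma~\ref{lem:cb-graph:cliques}), so the two slices share a vertex. Second, it uses leanness, so both slices are cliques. It then obtains $d(a,b)\ge d(a,w)+d(w',b)=4$ from the $\alpha_1$-metric property. That step silently needs your cross-distance computation $d(w',a)=d(w,b)=3$, which is what makes $w\in I(a,w')$ and $w'\in I(w,b)$.

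You instead get the upper bound from the four-point condition on $a,b,w,x$. You get the contradiction from the four-point condition on $w,w',a,b$, which is exactly the derivation of the $\alpha_1$-metric property from $\frac12$-hyperbolicity, specialized to this configuration.

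Your route is self-contained and uses only the definition of $\frac12$-hyperbolicity. It does not depend on $2$-outergatedness or on leanness, and it makes explicit a step the paper leaves implicit. The paper's version is shorter on the page because it reuses the CB-graph and $\alpha_1$-metric machinery that the rest of Section~\ref{ssec:odd-case} relies on anyway.
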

\begin{proof}
    Suppose by contradiction $S_2(w,x)$ and $S_2(w',x)$ are uncomparable for inclusion, for some $w,w' \in \proj_x(K)$.    
    Let $y \in S_2(w,x) \setminus S_2(w',x)$ and let $y' \in S_2(w',x) \setminus S_2(w,x)$.
    By Lemma~\ref{lem:hyp-pties}.\ref{item:leanness}, both $S_2(w,x)$ and $S_2(w',x)$ are cliques.
    Since $K$ is $2$-outergated (Lemma~\ref{lem:cb-graph:cliques}.\ref{item:cb:cliques-2outergated}), $S_2(w,x) \cap S_2(w',x) \ne \emptyset$.
    Therefore, $d(y,y') \le 2$.
    However, by the $\alpha_1$-metric property, $d(y,y') \ge d(y,w) + d(w',y') = 4$.
    A contradiction.
\end{proof}

We combine Lemma~\ref{lem:slices-clique} with Lemma~\ref{lem:alpha1-pties}.\ref{item:triangle-pty}, to prove the following important result:

\begin{mylemma}\label{lem:clique-balls-outergated}
    If $K$ is a clique of a $\frac 1 2$-hyperbolic graph $G$, then $B_2(K)$ is outergated.
\end{mylemma}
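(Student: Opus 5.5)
The plan is to reduce outergatedness of $B_2(K)$ to finding a common neighbour, on the side of $z$, of a single clique slice, and then to rule out the failure case with the induced-$C_4$ obstruction. Fix $z$ with $d(z,B_2(K)) > 1$. Then $d := d(z,K) \ge 4$ and $d(z,B_2(K)) = d-2$. We must exhibit a vertex in $\bigcap\{S_1(x,z) : x \in \proj_z(B_2(K))\}$.

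\textbf{Step 1: describe the projection.} Let $x \in \proj_z(B_2(K))$. Then $d(x,K)=2$ and $d(x,z)=d-2$. For any $w \in K$ with $d(w,x)=2$ we get $d(w,z) \le d$, so $w \in \proj_z(K)$ and $x \in S_2(w,z)$. Conversely, every vertex of $S_2(w,z)$ with $w \in \proj_z(K)$ lies at distance $2$ from $K$ and at distance $d-2$ from $z$. Hence
$$\proj_z(B_2(K)) = \bigcup\{S_2(w,z) : w \in \proj_z(K)\}.$$
Since $d(z,K) \ge 3$, Lemma~\ref{lem:slices-clique} says these slices are comparable for inclusion. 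So the union equals a single slice $Q := S_2(w_0,z)$ for a suitable $w_0 \in \proj_z(K)$. By Lemma~\ref{lem:hyp-pties}.\ref{item:leanness}, $Q$ is a clique, and so is the next slice $S_3(w_0,z)$. Every vertex of $S_3(w_0,z)$ is at distance $d-3$ from $z$. Therefore it suffices to find a vertex $y \in S_3(w_0,z)$ adjacent to all of $Q$: such a $y$ lies in $S_1(q,z)$ for every $q \in Q$.

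\textbf{Step 2: pairwise common neighbours.} Let $y \in S_3(w_0,z)$ maximize $|N(y) \cap Q|$. Such vertices exist because $Q \subseteq I(w_0,z)$ and $d(w_0,z)=d \ge 4$. Suppose, for contradiction, that some $q \in Q$ is not adjacent to $y$.
\begin{itemize}
\item Pick $q_1 \in N(y) \cap Q$. If $|Q| = 1$ there is nothing to prove, so we may assume such a $q_1$ exists, with $q_1 \ne q$.
\item Since $q,q_1$ are adjacent vertices of the slice $S_2(w_0,z)$, Lemma~\ref{lem:alpha1-pties}.\ref{item:triangle-pty} gives a common neighbour $y' \in S_3(w_0,z)$ of $q$ and $q_1$.
\item By the maximality of $y$, and because $y'$ is adjacent to $q \notin N(y)$, the vertex $y'$ must miss some $q_2 \in N(y) \cap Q$.
\end{itemize}

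\textbf{Step 3: the contradiction.} Consider the four vertices $q_2,y,y',q$.
\begin{itemize}
\item $q_2y$ is an edge by the choice of $q_2$.
\item $yy'$ is an edge because $S_3(w_0,z)$ is a clique and $y \ne y'$.
\item $y'q$ is an edge by the choice of $y'$.
\item $qq_2$ is an edge because $Q$ is a clique. Here $q_2 \ne q$, since $q_2 \in N(y)$ and $q \notin N(y)$.
\item $q_2y'$ and $yq$ are non-edges.
\end{itemize}
So these four vertices induce a $C_4$. For an induced $C_4$ in $G$, the four-point condition reads $d(q_2,y')+d(y,q) = 4 > 2+1 = \max\{\dots\}+1$, contradicting $\frac 1 2$-hyperbolicity. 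Hence $y$ is adjacent to all of $Q$, and $y$ is an outergate of $z$ with respect to $B_2(K)$.

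The main obstacle is Step~1. We must verify carefully that the projection of $z$ onto $B_2(K)$ is exactly a union of slices $S_2(w,z)$, so that Lemma~\ref{lem:slices-clique} collapses it to one clique slice. This is where the hypothesis $d(z,K)\ge 3$ of that lemma is needed; it is guaranteed because $d(z,B_2(K))>1$ forces $d(z,K) \ge 4$. Once $Q$ is a single clique slice, Steps~2 and~3 are a Helly-type argument that only uses the triangle property for adjacent pairs together with $C_4$-freeness.
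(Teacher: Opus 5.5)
Your proof is correct and follows essentially the same route as the paper. Both collapse $\proj_z(B_2(K))$ to a single clique slice $S_2(w_0,z)$ via Lemma~\ref{lem:slices-clique}, pick a vertex of $S_3(w_0,z)$ with the most neighbours in that slice, and use the triangle property to force an induced $C_4$; your Step~1 simply spells out the projection identity that the paper asserts without proof. One small fix: $q_1$ exists because every vertex of $S_3(w_0,z)$ has a neighbour in $S_2(w_0,z)$, not because of any case distinction on $|Q|$.
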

\begin{proof}
    Let $x$ be satisfying $d(x,K) \ge 3$.
    By Lemma~\ref{lem:slices-clique}, there exists a $w_x \in \proj_x(K)$ such that $S_2(w,x) \subseteq S_2(w_x,x)$ for every $w \in \proj_x(K)$.
    Then, $\proj_x\left(B_2(K)\right) = S_2(w_x,x)$.
    Let $x^* \in S_3(w_x,x)$ have a maximum number of neighbors in $S_2(w_x,x)$.
    We claim that $x^*$ is an outergate of $x$ with respect to $B_2(K)$.
    Indeed, suppose by contradiction there is a $y \in S_2(w_x,x)$ such that $x^*$ and $y$ are nonadjacent.
    Let $y' \in S_2(w_x,x) \cap N(x^*)$ be arbitrary.
    By Lemma~\ref{lem:hyp-pties}.\ref{item:leanness}, $S_2(w_x,x)$ is a clique.
    Therefore, by Lemma~\ref{lem:alpha1-pties}.\ref{item:triangle-pty}, there exists a $z \in S_3(w_x,x)$ that is adjacent to both $y$ and $y'$.
    Since $x^*,z \in S_1(y',x)$ and $G$ is a CB-graph, $x^*$ and $z$ are adjacent.
    However, the maximality of $x^*$ implies the existence of some $y'' \in S_2(w_x,x)$ such that $y'' \in N(x^*) \setminus N(z)$.
    As a result, $x^*,y'',y,z$ induce a $C_4$, thus contradicting that $G$ is a CB-graph.
\end{proof}

We next summarize ways to restrict the search for a central vertex, using outergates and some convexity arguments:
\begin{mylemma}\label{lem:restrict-search}
    Let $K$ be a clique of a $\frac 1 2$-hyperbolic graph $G$ such that $\ecc(K) = r+1 \ge 3$.
    For every $x \in F(K)$, let $g_K(x)$ be an outergate in $B_2(K)$.
    Let $M = \{ g_K(x) : x \in F(K) \}$.
    The following must hold for any vertex $c \in N(K)$ such that $\ecc(c) \le r$:
    \begin{enumerate}
        \item\label{item:upper-restrict} For every $x \in F(K)$, $c$ has some neighbor $x_c \in \proj_x\left(B_2(K)\right)$ such that $\phi_2(x_c,M)$ is maximized.
        In particular, $d(c,g_K(x)) = 2$, and $\diam(M) \le 4$.
        \item\label{item:lower-restrict} For every $y$ such that $d(y,K) = r$, $N(c) \cap \proj_y(K) \ne \emptyset$.
        Furthermore if \\ $K' = \bigcap\left\{ \proj_y(K) : d(y,K) = r \right\} \ne \emptyset$, then $c \in N(K')$.
    \end{enumerate}
\end{mylemma}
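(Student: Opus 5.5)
Both items come from the same setup. Fix $c \in N(K)$ with $\ecc(c) \le r$, and fix some $w \in K \cap N(c)$.

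\emph{Item (\ref{item:upper-restrict}).} Take $x \in F(K)$, so $d(x,K) = r+1$. Then $d(c,x) \le r$, and $d(c,x) \ge d(x,K)-1 = r$, hence $d(c,x) = r$. It follows that $d(x,B_1(K)) = r$ and $c \in \proj_x(B_1(K))$.
\begin{itemize}
\item The neighbour $x_c$ of $c$ on a shortest $cx$-path has $d(x_c,x) = r-1 = d(x,B_2(K))$, so $x_c \in \proj_x(B_2(K))$.
\item By Lemma~\ref{lem:clique-balls-outergated}, $B_2(K)$ is outergated, so $g_K(x)$ is adjacent to every vertex of $\proj_x(B_2(K))$. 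In particular $g_K(x)x_c \in E$ and $d(c,g_K(x)) \le 2$.
\item Equality holds because $d(g_K(x),K) = 3$.
\item Since every vertex of $M$ is at distance $2$ from $c$, the triangle inequality gives $\diam(M) \le 4$.
\end{itemize}
For the potential part, I read ``such that $\phi_2(x_c,M)$ is maximized'' as: among the vertices of $\proj_x(B_2(K))$, some neighbour of $c$ attains the maximum of $\phi_2(\cdot,M)$.

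To prove this, use that $\proj_x(B_2(K)) = S_2(w_x,x)$ is a clique, by the proof of Lemma~\ref{lem:clique-balls-outergated}. Now compare $x_c$ with an arbitrary $y \in S_2(w_x,x)$. I want to show that every $m \in M$ with $d(y,m) \le 2$ also satisfies $d(x_c,m) \le 2$. Balls are convex, and $c$ lies within distance $2$ of all of $M$, so the natural tool is the convexity of $B_2(m)$ applied along the path $c - x_c - \dots$. Failing that, I would aim at the weaker statement that $\Phi_2(x_c,M) = M$. Indeed, each $m = g_K(x')$ with $x' \in F(K)$ is adjacent to all of $S_2(w_{x'},x')$. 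So it would suffice that $x_c$ is within distance $1$ of that slice, via $c$ and the $\alpha_1$-metric property: $c$ has a neighbour $x'_c$ in it, and $x_c, x'_c \in N(c) \cap B_2(K)$. I expect this to be the main obstacle. The maximality claim requires controlling $d(x_c, g_K(x'))$ for a different $x'$, and this presumably needs a $C_4$-freeness / $\alpha_1$ argument on the $4$-cycle-like configuration $x_c, c, x'_c, g_K(x')$. If those vertices form an induced $C_4$, that contradicts the CB property, so $d(x_c, g_K(x')) \le 2$ follows.

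\emph{Item (\ref{item:lower-restrict}).} Take $y$ with $d(y,K) = r$. Since $d(c,y) \le r$ and $c \notin K$, some shortest $cy$-path exists; the goal is a neighbour of $c$ in $\proj_y(K)$.
\begin{itemize}
\item Suppose $N(c) \cap \proj_y(K) = \emptyset$. Then every $w' \in N(c) \cap K$ has $d(w',y) = r+1$.
\item Now apply the $\alpha_1$-metric property to $w' \in I(\cdot,c)$ adjacent to $c$. Concretely, take $p \in \proj_y(K)$. Then $p,w'$ are adjacent, with $w' \notin I(p,y)$ because $d(w',y) = d(p,y)+1$.
\item Consider the triangle $y, c, w'$ with $d(w',y) = r+1$, $d(c,y) \le r$, so $c \in I(w',y)$ and $d(c,y) = r$.
\item Next, $p$ is adjacent to $w'$ with $d(p,y) = r$. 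The vertices $p$ and $c$ are both in $N(w') \cap S_1(w',y)$, hence adjacent, since the slice $S_1(w',y)$ is a clique by Lemma~\ref{lem:hyp-pties}.\ref{item:leanness}.
\end{itemize}
This contradicts the assumption, so $N(c) \cap \proj_y(K) \neq \emptyset$.

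For the second sentence, set $K' = \bigcap\{\proj_y(K) : d(y,K) = r\} \ne \emptyset$. Suppose $c$ had no neighbour in $K'$. Then pick $w'' \in K \cap N(c)$ outside $K'$, so some $y$ has $w'' \notin \proj_y(K)$. Repeat the slice argument above with a vertex $p \in K'$, which lies in $\proj_y(K)$, to force $pc \in E$. The same argument is applied to $w' = w''$.
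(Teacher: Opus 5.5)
\textbf{Item (2) and the secondary claims of item (1) are correct.} Your derivation that $d(c,g_K(x))=2$ for every $x\in F(K)$, and hence $\Phi_2(c,M)=M$ and $\diam(M)\le 4$, holds up.

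For item~(2) your route differs slightly from the paper's:
\begin{itemize}
\item The paper picks $w\in\proj_y(K)\setminus N(c)$ and applies convexity of $B_r(y)$ to $I(w,c)\supseteq K\cap N(c)$.
\item You pick $w'\in K\cap N(c)$ with $d(w',y)=r+1$. You then note that $c$ and every $p\in\proj_y(K)$ lie in the slice $S_1(w',y)$, which is a clique.
\end{itemize}
Both give the dichotomy ``$\proj_y(K)\subseteq N(c)$ or $K\cap N(c)\subseteq\proj_y(K)$''. Yours needs $\frac12$-hyperbolicity, while the paper's needs only convex balls. The mention of the $\alpha_1$-metric property in item~(2) plays no role and can be dropped.

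\textbf{The gap is the main assertion of item (1).} You must show that some neighbour of $c$ maximizes $\phi_2(\cdot,M)$ over $K_x=\proj_x(B_2(K))$. You name convexity of $B_2(m)$ but never apply it. Your fallback $\Phi_2(x_c,M)=M$ is not weaker but stronger, since it would make $x_c$ a maximizer trivially, and it is false.

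Take the tree with edges $wc$, $cp$, $pp'$, $p'x$, $cq$, $qq'$, $q'x'$, with $K=\{w\}$ and $r=3$. Then:
\begin{itemize}
\item $\ecc(K)=4$ and $\ecc(c)=3$;
\item $F(K)=\{x,x'\}$, $K_x=\{p\}$ and $M=\{p',q'\}$;
\item yet $d(p,q')=3$, so $\Phi_2(p,M)\ne M$.
\end{itemize}
Your $C_4$ argument fails for the same reason. The vertices $x_c,c,x'_c,g_K(x')$ need not close up into a cycle; here they form the induced path $p,c,q,q'$.

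\textbf{How to close the gap.} The missing step is short, and it is the paper's argument.
\begin{enumerate}
\item Let $b\in K_x$ maximize $\phi_2(\cdot,M)$. If $bc\in E$, take $x_c=b$.
\item Otherwise, $K_x$ is a clique containing a neighbour $x_c$ of $c$. Hence $d(b,c)=2$ and $x_c\in I(b,c)$.
\item For each $m\in\Phi_2(b,M)$, both $b$ and $c$ lie in the convex ball $B_2(m)$; for $c$ this is your own $\Phi_2(c,M)=M$. Hence $x_c\in B_2(m)$ too.
\item So $\Phi_2(b,M)\subseteq\Phi_2(x_c,M)$, and $x_c$ is also a maximizer.
\end{enumerate}
The comparison has to be made against a maximizer not adjacent to $c$, or the case $b\in N(c)$ handled separately. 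Your intended inclusion $\Phi_2(y,M)\subseteq\Phi_2(x_c,M)$ for \emph{arbitrary} $y\in K_x$ does not follow from convexity when $y\in N(c)\setminus\{x_c\}$, because then $I(y,c)=\{y,c\}$.
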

\begin{proof}
    	We give separate proofs (although following similar convexity arguments at some places) for (\ref{item:upper-restrict}) and (\ref{item:lower-restrict}).

        \smallskip
        (\ref{item:upper-restrict}): Let $x \in F(K)$ be arbitrary. 
        Let $K_x = \proj_x(B_2(K))$.
        By Lemma~\ref{lem:slices-clique}, $K_x = S_2(w_x,x)$ for some $w_x \in K$. 
        Therefore, by Lemma~\ref{lem:hyp-pties}.\ref{item:leanness}, $K_x$ is a clique.
        Since $d(x,K) = r+1$, $d(x,c) \le \ecc(c) \le r$, and $c \in N(K)$, we obtain that $c \in S_1(w,x)$ for every $w \in N(c) \cap K$.
        In particular, $c$ has some neighbor in $S_2(w,x) \subseteq K_x$.
        It implies $d(c,g_K(x)) \le 2$, which must be an equality because $d(c,x) - d(g_K(x),x) = 2$.
        Furthermore, since $x$ is arbitrary, $\Phi_2(c,M) = M$.
        Note that it implies that $\diam(M) \le 4$.
        Now, let $b \in K_x$ be maximizing $\phi_2(b,M)$.
        We assume that $b$ and $c$ are nonadjacent (for else, we are done).
        Let $b'$ be any neighbor of $c$ in $K_x$.
        Since $K_x$ is a clique, $d(b,c) = 2$, and $b' \in N(b) \cap N(c)$.
        Recall that $\Phi_2(c,M) = M$.
        Since in addition $G$ is a CB-graph, $\Phi_2(b,M) = \Phi_2(b,M) \cap \Phi_2(c,M) \subseteq \Phi_2(b',M)$.
        Hence, by the maximality of $\phi_2(b,M)$, $\phi_2(b',M) = \phi_2(b,M)$.

        \smallskip
        (\ref{item:lower-restrict}): Let $y$ be an arbitrary vertex such that $d(y,K) = r$.
        We prove that either $\proj_y(K) \subseteq N(c)$, or $K \cap N(c) \subseteq \proj_y(K)$.
        In particular, $\proj_y(K) \cap N(c) \ne \emptyset$.
        For that, let us assume that $\proj_y(K) \not\subseteq N(c)$ (else, we are done).
        Let $w \in \proj_y(K) \setminus N(c)$ be arbitrary.
        Then, $d(w,c) = 2$, and every vertex of $K \cap N(c)$ is on a shortest $wc$-path.
        Since $G$ is a CB-graph, $B_r(y)$ is convex.
        Furthermore, $w,c \in B_r(y)$, and so, $K \cap N(c) \subseteq B_r(y)$.
        Finally, let $K' = \bigcap\left\{ \proj_y(K) : d(y,K) = r \right\} \ne \emptyset$.
        We end up proving that $c \in N(K')$.
        For that, we may further assume that there is no $y$ such that $d(y,K) = r$ and $\proj_y(K) \subseteq N(c)$ (else, we are done).
        In this situation, for every $y$ such that $d(y,K) = r$, $K \cap N(c) \subseteq \proj_y(K)$.
        Since $y$ was chosen arbitrarily, $K \cap N(c) \subseteq \bigcap\left\{ \proj_y(K) : d(y,K) = r \right\} = K'$.
\end{proof}

We need one more technical lemma:
\begin{mylemma}\label{lem:technical-distance-cond}
    Let $x,y,w$ be vertices in a graph $G$ such that $d(x,w) = d(y,w) = 3$, and $d(x,y) \le 3$.
    If $G$ is $\alpha_1$-metric, and $d\left(y,S_1(x,w)\right) \ge 3$, then $S_1(w,x) \cap S_1(w,y) \ne \emptyset$.
\end{mylemma}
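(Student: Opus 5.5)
The plan is to run the argument through a median or quasi-median of the triple $x,y,w$, which exists by the observation of~\cite{ChDrEsHaVa} recalled in Sec.~\ref{sec:basics}. The idea: either the "$w$-corner" of this (quasi-)median is different from $w$, in which case a common neighbour of $w$ towards $x$ and $y$ is immediate; or it equals $w$, in which case we get either a contradiction with the distance hypotheses or a vertex of $S_1(x,w)$ close to $y$.

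\emph{Median case.} Let $m \in I(x,y)\cap I(y,w)\cap I(w,x)$.
\begin{itemize}
\item If $m = w$, then $d(x,y) = d(x,w)+d(w,y) = 6 > 3$, a contradiction.
\item If $m \neq w$, let $w_1$ be the neighbour of $w$ on a shortest $wm$-path. Since $m \in I(w,x)\cap I(w,y)$, we have $w_1 \in I(w,x)\cap I(w,y)$ with $d(w,w_1)=1$, so $w_1 \in S_1(w,x)\cap S_1(w,y)$, as required.
\end{itemize}

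\emph{Quasi-median case.} Let $x',y',w'$ be a quasi-median of $x,y,w$ (with $x'$ the corner towards $x$, and so on).
\begin{itemize}
\item If $w' \ne w$, the neighbour $w_1$ of $w$ on a shortest $ww'$-path satisfies $w_1 \in I(w,x)\cap I(w,y)$. This uses the quasi-median equalities $d(w,x)=d(w,w')+d(w',x')+d(x',x)$ and $d(w,y)=d(w,w')+d(w',y')+d(y',y)$. Hence $w_1 \in S_1(w,x)\cap S_1(w,y)$, and we are done.
\item If $w'=w$, we look for a contradiction. By Lemma~\ref{lem:alpha1-pties}.\ref{item:alpha1:metric-triangle}, the metric triangle $x',y',w$ has all sides of length at most $2$; this is the only place where the $\alpha_1$-metric property is used. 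Since $d(x,w)=3$, it follows that $d(x,x') = 3 - d(x',w) \ge 1$, so $x \ne x'$.
\end{itemize}

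In the subcase $w'=w$, let $x_1$ be the neighbour of $x$ on a shortest $xx'$-path. Because $x' \in I(x,w)$ (from the quasi-median equality for the pair $x,w$), we have $x_1 \in I(x,w)$, i.e. $x_1 \in S_1(x,w)$. On the other hand, the quasi-median equality $d(x,y)=d(x,x')+d(x',y')+d(y',y)$ shows $x' \in I(x,y)$. Hence $x_1 \in I(x,y)$ and $d(x_1,y) = d(x,y)-1 \le 2$. This contradicts the hypothesis $d\bigl(y,S_1(x,w)\bigr) \ge 3$, so this subcase cannot occur, which completes the proof.

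The only delicate step is making sure that $x \neq x'$ in the subcase $w'=w$; this is where the bound on metric triangle side lengths from Lemma~\ref{lem:alpha1-pties}.\ref{item:alpha1:metric-triangle} is needed, together with $d(x,w)=3$. Everything else consists of routine interval manipulations using the defining equalities of medians and quasi-medians.
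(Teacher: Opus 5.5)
Your proof is correct, but it splits the cases differently from the paper.

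\textbf{The paper's route.} It splits on the shape of the metric triangle:
\begin{enumerate}
\item It first rules out a median altogether: a median would be a common neighbour of $x$ and $y$ lying in $S_1(x,w)$.
\item In the quasi-median case with $d(x',w')=d(y',w')$, it forces $x=x'$, $y=y'$ and a $(2,2,2)$ triangle. It then derives a contradiction from the strong equilateral property of metric triangles in CB-graphs (Theorem~7.1 of~\cite{CBgraphs}).
\item Only in the remaining case $d(x',w')\neq d(y',w')$ does it check that $w'\neq w$ and conclude via $S_1(w,w')$.
\end{enumerate}

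\textbf{Your route.} You split on whether the $w$-corner of the median or quasi-median is $w$ itself.
\begin{enumerate}
\item If it is not, a neighbour of $w$ towards that corner lies in $S_1(w,x)\cap S_1(w,y)$ by the interval equalities. This handles the median case directly instead of excluding it.
\item If $w'=w$, the side bound of Lemma~\ref{lem:alpha1-pties}.\ref{item:alpha1:metric-triangle} gives $d(x,x')=3-d(x',w)\ge 1$. So the neighbour $x_1$ of $x$ towards $x'$ exists, and it lies in $S_1(x,w)\cap I(x,y)$. Hence $d(x_1,y)=d(x,y)-1\le 2$, a contradiction.
\end{enumerate}

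\textbf{What each buys.} Your argument is shorter and more elementary. It uses only the bound on metric-triangle side lengths and routine interval manipulations, and needs no external strong-equilateral theorem. That appeal was in fact avoidable in the paper too: in its equilateral subcase, once $x=x'$ is established, $d(w,w')=3-d(x',w')=1$, so $S_1(w,w')$ already gives the conclusion. The paper's case analysis does yield more structure, namely that neither a median nor an equilateral quasi-median can occur under the hypotheses. None of that extra information is needed for the statement.
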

        \begin{proof}
            Suppose by contradiction there is a median vertex $v$ for $x,y,w$.
            Since $d(x,w) = d(y,w)$, $d(x,v) = d(y,v)$ holds.
            Since in addition $d(x,y) \le 3$, we obtain $d(x,y) = 2$ and $v \in N(x) \cap N(y)$.
            However, the latter implies that $v \in S_1(x,w)$, thus contradicting our assumption that $d\left(y,S_1(x,w)\right) \ge 3$.
            From now on let $x',y',w'$ be a quasi-median for $x,y,w$.

            Suppose $d(x',w') =  d(y',w') = k$.
            By Lemma~\ref{lem:alpha1-pties}.\ref{item:alpha1:metric-triangle}, $k \le 2$, and $d(x',y') \le k$.
            Since $d(x,w) = d(y,w)$, $d(x,x') = d(y,y')$ holds.
            Since in addition $d(x,y) \le 3$, we get $d(x,x') \le 1$.
            Suppose by contradiction $d(x,x') = 1$.
            Then, $d(x,y) = 2 +d(x',y') \le 3$, and so, $d(x',y') = 1$.
            However, it implies that $x' \in S_1(x,w)$, and $d(y,x') = 2$, thus contradicting our assumption that $d\left(y,S_1(x,w)\right) \ge 3$.
            Therefore, $x=x'$ and $y=y'$.
            Since $d\left(y,S_1(x,w)\right) \ge 3$, we must have $d(x',y') = 2$.
            In particular, $d(x',w') =  d(y',w') = d(x',y') = 2$.
            Note that in this situation, $I^o(x',w') \subseteq S_1(x,w)$.
            However, as $G$ is a CB-graph, Theorem $7.1$ of~\cite{CBgraphs} implies the following \emph{strong equilateral} condition: $d(x',I(y',z')) = d(y',I(x',z')) = d(z',I(x',y')) = 2$.
            In particular, $d(y,S_1(x,w)) \le d(y,I^o(x',w')) = 2$. A contradiction.

            Therefore, $d(x',w') \ne d(y',w')$.
            (See Fig.~\ref{fig:odd-case} for $r=2$ and $w=z$).
            By Lemma~\ref{lem:alpha1-pties}.\ref{item:alpha1:metric-triangle}, $d(x',y') = 2$.
            Since $d(x,y) \le 3$, either $x = x'$ or $y=y'$.
            Since in addition $d(x,w) = d(y,w) = 3$, and also by Lemma~\ref{lem:alpha1-pties}.\ref{item:alpha1:metric-triangle}, $d(x',w') \le 2$ and $d(y',w') \le 2$, $w' \ne w$ holds.
            Hence, we obtain $S_1(w,w') \subseteq S_1(w,x) \cap S_1(w,y)$.
        \end{proof}

The proof of Proposition~\ref{prop:odd-case} now follows from the analysis of Algorithm~\ref{alg:odd-case}.
See also Fig.~\ref{fig:odd-case-algorithm} for an illustration.
The implementation of the algorithm can be reduced to $\cO(1)$ calls to BFS (either from a start vertex, or from a start subset), and to the procedure of Lemma~\ref{lem:intermediate-computations}.
Therefore, the running time is linear.

\setlength\columnsep{20pt}
\begin{algorithm}[!h]
	\DontPrintSemicolon
    \SetAlgoLined
    \begin{multicols}{2}
	\SetKwInOut{Input}{Input}
	\SetKwInOut{Output}{Output}
	\SetKw{Continue}{continue}
	\SetKw{Break}{break}
    \scriptsize
    
	\Input{a graph $G=(V,E)$, and a clique $K$ of $G$.}
	\Output{a vertex $a$.}

    \BlankLine
    \BlankLine

    $r := \ecc(K)-1$\;

    \BlankLine
    $F'(K) := \{ z \in V : d(z,K) = r \}$\;
    \If{$F'(K) \ne \emptyset$}
    {
         compute the values $\psi_K(w,F'(K)), \ w \in K$\;
         $w_0 := $ a vertex of $K$ s.t. $\psi_K(w_0,F'(K))$ is maximized\;
         \eIf{$\psi_K(w_0,F'(K)) < |F'(K)|$}
         {
            \Return an arbitrary vertex of $F(w_0) \cap F'(K)$ {\color{blue}\it //$\ecc(a) \ge 2r$}\;
         }
         {
            $K := \{ w \in K : \Psi_K(w,F'(K)) = F'(K) \}$\;
         }
    }

    \BlankLine
    compute outergates $g_K(x), \ x \in F(K)$ w.r.t. $B_2(K)$\;
    $M := \{ g_K(x) : x \in F(K) \}$\;
    $x_0 := $ an arbitrary vertex of $F(K)$\;

    \BlankLine
    \If{$\exists x \in F(K) \ \text{s.t.} \ d(g_K(x_0),g_K(x)) \ge 5$}
    {
        \Return an arbitrary vertex of $K$ {\color{blue}\it //$\ecc(a) = r+1$}\;
    }

    \BlankLine
    \If{$\exists x \in F(K) \ \text{s.t.} \ d(g_K(x_0),g_K(x)) = 4$}
    {
        $U := S_2(g_K(x_0),g_K(x))$\;
        $b :=$ a vertex of $U$ s.t. $\ecc(b)$ is minimized\;
        \eIf{$\ecc(b) \le r$}
        {
            \Return $b$ {\color{blue}\it //$\ecc(a) \le r$}\;
        }
        {
            \Return an arbitrary vertex of $K$ {\color{blue}\it //$\ecc(a)=r+1$}\;
        }
    }

    \BlankLine
    {\color{blue}\it /*$\forall x  \in F(K),  d(g_K(x_0),g_K(x)) \le 3$*/}\;
    $L := \{ s \in \proj_{x_0}(B_2(K)) : \phi_2(s,M) \ \text{is maximized} \}$\;
    $s_0 :=$ an arbitrary vertex of $L$\;
    $M_0' := \{ y \in M : d(y,\proj_{x_0}(B_2(K))) \le 2 \}$\;
    \If{$M_0' \not\subseteq B_2(s_0)$}
    {
        $x_0 :=$ any vertex of $F(K)$ s.t. $g_K(x_0) \in M_0' \setminus B_2(s_0)$\;
        {\bf goto} line $15$ \;
    }
    \BlankLine
    {\color{blue}\it /*$\forall s \in L, \Phi_2(s,M) = M_0'$*/}\;
    \If{$\ecc(L) < r$}
    {
        \Return an arbitrary vertex of $L$ {\color{blue}\it //$\ecc(a) \le r$}\;
    }
    \If{$\ecc(L) = r$}
    {
        $s_0 :=$ a vertex of $L$ s.t. $\psi_L(s_0)$ is maximized\;
        \eIf{$\ecc(s_0) = r$}
        {
            \Return $s_0$ {\color{blue}\it //$\ecc(a) \le r$}\;
        }
        {
            \Return an arbitrary vertex of $F(s_0)$ {\color{blue}\it //$\ecc(a) \ge 2r$}\;
        }
    }
    \BlankLine
    {\color{blue}\it /*$\ecc(L) \ge r+1$*/}\;
    $z_L :=$ an arbitrary vertex of $F(L)$\;
    \If{$d(z_L,N(L)\cap N(K)) \ge r+1$}
    {
        \Return an arbitrary vertex of $K$ {\color{blue}\it //$\ecc(a) = r+1$}\;
    }
    $c_L :=$ an arbitrary vertex of $N(L) \cap N(K) \cap B_r(z_L)$\; 
    \If{$M_0' \not\subseteq B_2(c_L)$}
    {
        \Return $z_L$ {\color{blue}\it //$\ecc(a) \ge 2r$}\;
    }
    \BlankLine
    {\color{blue}\it /*$\forall y \in M_0', d(c_L,y) = 2$*/}\;
    $w_L :=$ an arbitrary vertex of $K \cap N(c_L)$\;
    $W_L := S_1(w_L,x_0)$\;
    $v_L :=$ a vertex of $W_L$ s.t. $\phi_2(v_L,M)$ is maximized\;
    \If{$\phi_2(v_L,M) < |M|$}
    {
        $x_0 :=$ any vertex of $F(K)$ s.t. $g_K(x_0) \in M \setminus B_2(v_L)$\;
        {\bf goto} line 15\;
    }
    \BlankLine
     {\color{blue}\it /*$\forall x \in F(K), d(v_L,g_K(x)) = 2$*/}\;
     \eIf{$\ecc(v_L) = r$}
     {
        \Return $v_L$ {\color{blue}\it //$\ecc(a) \le r$}\;
     }
     {
        \Return an arbitrary vertex of $F(v_L)$ {\color{blue}\it //$\ecc(a) \ge 2r$}\;
     }
    \BlankLine
    \BlankLine
    \end{multicols}
    \caption{The \texttt{process-clique} procedure, see Prop.~\ref{prop:odd-case}.}
    \label{alg:odd-case}
\end{algorithm}

        \begin{proof}[Proof of Proposition~\ref{prop:odd-case}]
            The result follows from the analysis of Algorithm~\ref{alg:odd-case}.
            
    \medskip
    \noindent
    {\bf Correctness.}
    First, we force the clique $K$ to satisfy the following Property $(\alpha)$:
    $$\forall z \in V \setminus F(K), \forall w \in K, d(z,w) \le r$$
    The intuition goes as follows: assume that we can find a vertex $v_L \in N(K)$ such that $F(K) \subseteq B_r(v_L)$.
    Then, either $\ecc(v_L) \le r$, or $\ecc(v_L) = r+1$ and $F(v_L) \subseteq V \setminus F(K)$.
    Furthermore, in the latter situation, assuming Property $(\alpha)$ we can consider any edge between $v_L$ and $K$; by applying the $\alpha_1$-metric property to this edge, we can prove that every vertex of $F(v_L)$ has eccentricity $\ge 2r$ (lines 65--69). 
    In order to enforce Property $(\alpha)$, we use the set of vertices $F'(K) = \{ z \in V : d(z,K) = r \}$ (lines 2--11).
    If $F'(K) = \emptyset$, then Property $(\alpha)$ already holds.
    Otherwise, either some vertex of $K$ is at distance at most $r$ to every vertex of $F'(K)$, in which case we can restrict $K$ to $K' = \{ w \in K : \psi_K(w,F'(K)) = F'(K) \}$ (see Lemma~\ref{lem:restrict-search}.\ref{item:lower-restrict}); or we can output a vertex of eccentricity $\ge 2r$ (Lemma~\ref{lem:larger-x}).
    More specifically, let $w_0 \in K$ be maximizing $\psi_K(w_0,F'(K))$.
    If $\psi_K(w_0,F'(K)) < |F'(K)|$, then the algorithm outputs any vertex $a \in F(w_0) \cap F'(K)$. 
    Since $\ecc(w_0) = r+1$, we have that $F(w_0) \cap F'(K) = F'(K) \setminus \Psi_K(w_0,F'(K))$. Therefore, by Lemma~\ref{lem:larger-x}, $\ecc(a) \ge 2r$.
    Else, we replace $K$ with $K' = \{ w \in K : \psi_K(w,F'(K)) = F'(K) \}$.
    In doing so, $\ecc(K') = r+1$, and $F(K') = F(K)$.
    Furthermore, by Lemma~\ref{lem:restrict-search}.\ref{item:lower-restrict}, any vertex $c \in N(K)$ such that $\ecc(c) \le r$ must be contained in $N(K')$.
    Therefore, we assume for the remainder of the proof Property $(\alpha)$ holds.

    For every $x \in F(K)$, let $g_K(x)$ be an outergate of $x$ in $B_2(K)$, whose existence follows from Lemma~\ref{lem:clique-balls-outergated}.
    Let $M = \{ g_K(x) : x \in F(K) \}$.
    Roughly, most of the algorithm (lines 12--59) is devoted to the search for a vertex $v_L$ such that $M \subseteq B_2(v_L)$.
    Such a vertex, if any, may not be central, but it satisfies $F(K) \subseteq B_r(v_L)$.
    We start presenting a rough sketch of the sub-procedure for searching $v_L$ (each component of the sub-procedure will be detailed in the remainder of the proof).
    For that, we fix an arbitrary $x_0 \in F(K)$.
    Let $M_0'$ be the subset of vertices in $M$ that are at distance at most $2$ to $\proj_{x_0}(B_2(K))$.
    We start searching in $\proj_{x_0}(B_2(K))$ for a vertex $s_0$ such that $M_0' \subseteq B_2(s_0)$ (lines 28--34).
    Let $L$ be the subset of all such vertices $s_0$.
    Then, using an arbitrary $z_L$ such that $d(z,L) > r$, we go closer to $K$ by searching for a $c_L \in N(K) \cap N(L)$ such that $c_L \in B_r(z_L)$, and $M_0' \subseteq B_2(c_L)$ (see Lemma~\ref{lem:restrict-search}.\ref{item:upper-restrict}). 
    Doing so, let $w_L \in N(c_L) \cap K$ be arbitrary, and let $W_L = S_1(w_L,x_0)$. 
    We can prove that $d(g_K(x),W_L) \le 2$ for every $x \in F(K)$ (see Claim~\ref{claim:dist-2-clique} below).
    Therefore, we search for $v_L$ in $W_L$, which by Lemma~\ref{lem:hyp-pties}.\ref{item:leanness} is a clique.

    Note that in what follows, we can assert at various places that there is no $c \in N(K)$ such that $\ecc(c) \le r$.
    In this situation, the algorithm outputs any vertex $a \in K$, and $\ecc(a) = r+1$ (see lines 16, 24 \& 50).

    The sub-procedure for searching $v_L$ may fail if $\diam(M) \ge 4$.
    More specifically, sometimes we cannot compute an intermediate vertex with the desired properties, in which case we rather find some $x,x' \in F(K)$ such that $d(g_K(x),g_K(x')) \ge 4$.
    In these situations, we replace $x_0$ with one of $x$ or $x'$, then we apply the sub-procedure of lines 15--26.
    The analysis of this case is presented in the following claim:
    \begin{myclaim}\label{claim:x0}
        If we find an $x_0 \in F(K)$ such that for some other $x \in F(K)$, $d(g_K(x_0),g_K(x)) \ge 4$, then the algorithm either outputs a vertex $a$ such that $\ecc(a) \le r$, or it correctly asserts that there is no vertex $c \in N(K)$ such that $\ecc(c) \le r$.        
    \end{myclaim}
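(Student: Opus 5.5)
The plan is to analyze lines 15--26 of Algorithm~\ref{alg:odd-case}, where $x_0$ is fixed and some $x \in F(K)$ satisfies $d(g_K(x_0),g_K(x)) \ge 4$. The main tool is Lemma~\ref{lem:restrict-search}.\ref{item:upper-restrict}: any $c \in N(K)$ with $\ecc(c) \le r$ satisfies $d(c,g_K(y)) = 2$ for every $y \in F(K)$, so $\diam(M) \le 4$. If $d(g_K(x_0),g_K(x)) \ge 5$, no such $c$ exists. Since every vertex of $K$ has eccentricity $r+1$, returning an arbitrary vertex of $K$ at line 16 is then correct. The remaining work concerns the case $d(g_K(x_0),g_K(x)) = 4$.

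In that case, set $U = S_2(g_K(x_0),g_K(x))$, which is a clique by Lemma~\ref{lem:hyp-pties}.\ref{item:leanness}. Suppose some $c \in N(K)$ has $\ecc(c) \le r$. Then $d(c,g_K(x_0)) = d(c,g_K(x)) = 2$ and $d(g_K(x_0),g_K(x)) = 4$, so $c \in I(g_K(x_0),g_K(x))$ at distance $2$ from $g_K(x_0)$, that is, $c \in U$. So any such good $c$ lies in $U$. Let $b$ minimize the eccentricity over $U$; these eccentricities are computable in linear time by Lemma~\ref{lem:intermediate-computations}.\ref{item:compute-ecc-clique}, since $U$ is a clique. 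If a good $c$ exists, then $\ecc(b) \le \ecc(c) \le r$, and returning $b$ is correct. Conversely, if $\ecc(b) > r$, then no vertex of $U$, and hence no $c \in N(K)$, has eccentricity at most $r$, so returning a vertex of $K$ is correct. If $\ecc(b) \le r$, we return $b$ whether or not $b \in N(K)$; this is harmless because the claim only asks for some vertex $a$ with $\ecc(a) \le r$.

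I also need to handle the situations where $x_0$ has been reassigned via the gotos (lines 33 and 56). Those reassignments always lead back to line 15, and the claim is formulated for whatever $x_0$ is current. So the same analysis applies verbatim, provided the test at lines 15--26 actually fires. The claim's hypothesis assumes it does, so this part is just bookkeeping.

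The main obstacle is the step placing $c$ in $U$. It needs both that $d(c,g_K(y)) = 2$ is an exact equality for every $y \in F(K)$, and that the outergate choice is uniform. The equality comes from the proof of Lemma~\ref{lem:restrict-search}, using $d(c,x) \le r$ and $d(g_K(x),x) = r-2$. Uniformity holds because $g_K(y)$ is a fixed outergate computed once, and Lemma~\ref{lem:restrict-search} holds for any choice of outergates. Once this is verified, the rest follows from the triangle inequality and the definition of slices.
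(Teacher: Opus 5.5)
Your proposal is correct and follows essentially the paper's argument. A distance $\ge 5$ is excluded by $\diam(M)\le 4$ from Lemma~\ref{lem:restrict-search}.\ref{item:upper-restrict}. In the case of distance $4$, any $c\in N(K)$ with $\ecc(c)\le r$ must lie in the clique $U=S_2(g_K(x_0),g_K(x))$, so minimizing eccentricity over $U$ settles it. Your one addition is to spell out the step $c\in U$ via $d(c,g_K(x_0))=d(c,g_K(x))=2$, which the paper leaves implicit.
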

    If $d(g_K(x_0),g_K(x)) \ge 5$, then $\diam(M) \ge 5$.
    By Lemma~\ref{lem:restrict-search}.\ref{item:upper-restrict}, there is no $c \in N(K)$ such that $\ecc(c) \le r$.
    If $d(g_K(x_0),g_K(x)) = 4$, then let $U = S_2(g_K(x_0),g_K(x))$. 
    Either there is a vertex $a \in U$ such that $\ecc(a) \le r$, or by Lemma~\ref{lem:restrict-search}.\ref{item:upper-restrict} no such a vertex exists in $N(K)$. $\diamond$ 

    \smallskip
    Initially, we fix an arbitrary $x_0 \in F(K)$.
    By Claim~\ref{claim:x0}, we may assume $d(g_K(x_0),g_K(x)) \le 3$ for every $x \in F(K)$.
    Then, let $L = \{ s \in \proj_{x_0}(B_2(K)) : \phi_2(s,M) \ \text{is maximized} \}$.
 
    We consider the subset $M_0' = \{ y \in M : d(y,\proj_{x_0}(B_2(K))) \le 2\}$.
    For every $s \in L, \Phi_2(s,M) \subseteq M_0'$ holds.
    In particular, the following result is true for any $s_0 \in L$:
    \begin{myclaim}\label{claim:s0}
        Assume $\phi_2(s_0,M) < |M_0'|$.
        For every $x_0' \in F(K)$ with $g_K(x_0') \in M_0' \setminus B_2(s_0)$, there exists an $x_1 \in F(K)$ such that $d(g_K(x_0'),g_K(x_1)) \ge 4$.
    \end{myclaim}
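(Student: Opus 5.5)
Let $y_0 = g_K(x_0')$, with $y_0 \in M_0' \setminus B_2(s_0)$. I want a second vertex $y_1 = g_K(x_1)\in M$ at distance at least $4$ from $y_0$, built from the maximality of $\phi_2(s_0,M)$.

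First, write $P = \proj_{x_0}(B_2(K))$. By Lemma~\ref{lem:slices-clique} and Lemma~\ref{lem:hyp-pties}.\ref{item:leanness}, $P$ is a clique. Since $y_0 \in M_0'$, some $s' \in P$ satisfies $d(s',y_0) \le 2$, while $d(s_0,y_0) \ge 3$.

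Because $s_0$ maximizes $\phi_2(\cdot,M)$ over $P$ and $y_0 \in \Phi_2(s',M)\setminus\Phi_2(s_0,M)$, we have $\phi_2(s',M)\le\phi_2(s_0,M)$. Hence $\Phi_2(s_0,M)\not\subseteq\Phi_2(s',M)$, and there is $y_1 = g_K(x_1) \in M$ with $d(s_0,y_1)\le 2$ and $d(s',y_1)\ge 3$. As $P$ is a clique, this forces $d(s_0,y_0)=d(s',y_1)=3$ and $d(s',y_0)=d(s_0,y_1)=2$.

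Next, apply the $\alpha_1$-metric property to the edge $s_0s'$. We want $s' \in I(y_0,s_0)$ and $s_0 \in I(s',y_1)$. Both hold: $d(y_0,s_0)=d(y_0,s')+1$ and $d(y_1,s')=d(y_1,s_0)+1$. The property then gives $d(y_0,y_1)\ge d(y_0,s')+d(s_0,y_1)=4$, which is the claim.

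The only delicate point is the existence of $y_1$, that is, that the maximality of $s_0$ is measured over the whole of $M$ rather than over $M_0'$. This holds by the definition of $L$, so I expect no real obstacle. One should also note $s'\neq s_0$, which is immediate from the two distances to $y_0$.
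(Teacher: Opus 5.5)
Your proof is correct and follows the paper's argument essentially verbatim. You pick $s'\in\proj_{x_0}(B_2(K))$ within distance $2$ of $g_K(x_0')$, use the maximality of $\phi_2(s_0,M)$ over that clique to obtain $g_K(x_1)\in B_2(s_0)\setminus B_2(s')$, fix the distances via the clique, and conclude with the $\alpha_1$-metric property on the edge $s_0s'$ — and you are slightly more careful than the paper in explicitly checking the two interval conditions that property requires.
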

    Let $s_1 \in \proj_{x_0}(B_2(K))$ be such that $d(g_K(x_0'),s_1) \le 2$.
    As $s_0 \in L$, $\phi_2(s_0,M)$ is maximum within the vertices of $\proj_{x_0}(B_2(K))$.
    In particular, $\phi_2(s_1,M) \le \phi_2(s_0,M)$.
    Then, as $g_K(x_0') \in B_2(s_1) \setminus B_2(s_0)$, there exists an $x_1 \in F(K)$ such that $g_K(x_1) \in B_2(s_0) \setminus B_2(s_1)$.
    Furthermore, we recall that by the combination of Lemma~\ref{lem:alpha1-pties}.\ref{item:leanness} and Lemma~\ref{lem:slices-clique}, $\proj_{x_0}(B_2(K))$ is a clique.
    The latter implies $d(s_0,g_K(x_1)) = d(s_1,g_K(x_0')) = 2$.
    By the $\alpha_1$-metric property, $d(g_K(x_0'),g_K(x_1)) \ge d(g_K(x_0'),s_1)+d(s_0,g_K(x_1)) = 4$. $\diamond$
    
    In this situation, we replace $x_0$ with $x_0'$, and then we are done by Claim~\ref{claim:x0}.

    \smallskip
    Therefore, from now on we assume $\Phi_2(s,M) = M_0'$ for every $s \in L$.
    There are several cases to be distinguished from each other, which depends on the value of $\ecc(L)$.
    Note that in what follows, in a few technical situations, the search for a vertex $v_L$ may fail even if $\diam(M) \le 3$.
    However, when it happens, we can either directly output a vertex $a$ such that $\ecc(a) \le r$ ($a$ may not be in $N(K)$, see lines 37 \& 42), or $\ecc(a) \ge 2r$ (using Lemma~\ref{lem:larger-x}, see lines 44 \& 54); or we can assert that there is no $c \in N(K)$ such that $\ecc(c) \le r$.

    If $\ecc(L) < r$, then the algorithm outputs any vertex $a \in L$.
    Doing so, $\ecc(a) \le r$.

    Assume now that $\ecc(L) = r$.
    We reuse the same strategy as in~\cite{ChDr}.
    More specifically, let $s_0 \in L$ be maximizing $\psi_L(s_0)$.
    If $\ecc(s_0) = r$, then the algorithm simply outputs $a=s_0$.
    Otherwise, the algorithm outputs any vertex $a \in F(s_0)$, and by Corollary~\ref{cor:larger-x}, $\ecc(a) \ge 2r$.

    From now on, $\ecc(L) \ge r+1$.
    Let $z_L \in F(L)$ be arbitrary.
    If there is no vertex of $N(L) \cap N(K)$ at distance no more than $r$ to $z_L$, then by Lemma~\ref{lem:restrict-search}.\ref{item:upper-restrict} we can assert that there is no $c \in N(K)$ such that $\ecc(c) \le r$.
    Hence, we shall assume in what follows the existence of some $c_L \in N(L) \cap N(K) \cap B_r(z_L)$.
    Furthermore,
    \begin{myclaim}\label{claim:zLcL}
        Either $M_0' \subseteq B_2(c_L)$, or $\ecc(z_L) \ge 2r$.        
    \end{myclaim}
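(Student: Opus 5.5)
The plan is to argue by contraposition: assume some $y \in M_0'$ satisfies $d(c_L,y) \ge 3$, and exhibit a vertex far from $z_L$. We write $y = g_K(x)$ with $x \in F(K)$. Since $g_K(x)$ is an outergate of $x$ with respect to $B_2(K)$, we have $d(y,B_2(K)) = 1$, hence $d(y,K) = 3$ and $d(x,y) = r-2$.

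First I would pin down the local distances. Since $c_L \in N(L)$, fix $s \in L \cap N(c_L)$. We are in the situation $\Phi_2(s,M) = M_0'$, so $y \in M_0'$ gives $d(s,y) \le 2$. Combined with $d(c_L,y) \ge 3$ and $s c_L \in E$, this forces $d(s,y) = 2$ and $d(c_L,y) = 3$, so $s \in I(y,c_L)$. On the $z_L$ side, $z_L \in F(L)$ with $\ecc(L) \ge r+1$ gives $d(s,z_L) \ge r+1$, while $c_L \in B_r(z_L)$. Hence $d(s,z_L) = r+1$, $d(c_L,z_L) = r$, and $c_L \in I(s,z_L)$.

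Next I would apply the $\alpha_1$-metric property to the edge $s c_L$. Taking the far endpoint on the $y$-side first gives $d(y,z_L) \ge d(y,s) + d(c_L,z_L) = r+2$. To reach eccentricity $2r$ I would then push the far endpoint back to $x$. Since $c_L \in N(K)$, we have $d(x,c_L) \ge d(x,K)-1 = r$. Also $d(x,c_L) \le d(x,y) + d(y,c_L) = r+1$, and $d(x,s) \le d(x,y) + 2$. I would use these to show that $s \in I(x,c_L)$, i.e.\ that $y$ and $s$ lie on a common shortest $x c_L$-path. The $\alpha_1$-metric property for $x, s, c_L, z_L$ then yields $d(x,z_L) \ge d(x,s) + d(c_L,z_L)$, and I expect the right-hand side to be at least $(r-1) + r$ or more.

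The main obstacle is the last unit. The crude count above may only give $2r-1$, and we need $\ecc(z_L) \ge 2r$. My first attempt to close the gap is to exploit Property $(\alpha)$ together with $\ecc(w) = r+1$ for $w \in K$. The aim is to show that $s$ is actually at distance $r$ from $x$, i.e.\ that $y$ is not adjacent to the clique $\proj_{x}(B_2(K))$ in the way that would shorten $d(x,s)$. Alternatively, I would apply the $\frac 1 2$-hyperbolicity four-point condition to $x, z_L, s, w$ with $w \in K \cap N(c_L)$ to rule out $d(x,z_L) = 2r-1$, using that slices are cliques (Lemma~\ref{lem:hyp-pties}.\ref{item:leanness}). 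If neither works directly, I would replace $x$ by the vertex $x_0$ and redo the computation there, since $s \in \proj_{x_0}(B_2(K))$ gives exact control of $d(x_0,s)$.
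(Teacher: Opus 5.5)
Your local bookkeeping is correct: $d(x,y)=r-2$, $d(s,y)=2$, $d(c_L,y)=3$, $d(s,z_L)=r+1$, $d(c_L,z_L)=r$, so $c_L\in I(z_L,s)$. The step you are aiming for, the $\alpha_1$-metric property on the edge $sc_L$ with far endpoints $z_L$ and $x$, is also the right one. But there is a genuine gap: you never prove $d(x,c_L)=r+1$, and that is exactly what gives both $s\in I(x,c_L)$ and $d(x,s)=r$. The bounds $r\le d(x,c_L)\le r+1$ and $d(x,s)\le r$ do not yield $s\in I(x,c_L)$, since $d(x,c_L)=d(x,s)=r$ is compatible with them. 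Even granting it, you only reach $2r-1$, as you note yourself.

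Your proposed fixes do not supply the missing unit. Property $(\alpha)$ constrains only vertices outside $F(K)$, and $x\in F(K)$. For the four-point condition on $x,z_L,s,w$: here $d(s,w)=2$, because $s\in\proj_{x_0}(B_2(K))$ is at distance exactly $2$ from $K$, and $d(x,w)+d(s,z_L)=2r+2$. So the inequality already holds when $d(x,z_L)=2r-1$ and cannot exclude that value. Switching to $x_0$ pins $d(x_0,s)=r-1$, so that route also stops at $(r-1)+r$. Also, the aim you state in your first remedy is misphrased: $y$ \emph{is} adjacent to all of $\proj_x(B_2(K))$.

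The missing idea is to use exactly this defining property of the outergate $y=g_K(x)$: it is adjacent to \emph{every} vertex of $\proj_x(B_2(K))$. Suppose $d(x,c_L)=r$ and pick a neighbour $t$ of $c_L$ with $d(t,x)=r-1$. Then $d(t,K)\le 2$ and $d(t,x)=r-1=d(x,B_2(K))$, so $t\in\proj_x(B_2(K))$. Hence $ty\in E$, giving $d(c_L,y)\le 2$, a contradiction. Therefore $d(x,c_L)\ge r+1$. Together with $d(x,s)\le d(x,y)+d(y,s)=r$, this forces $d(x,c_L)=r+1$, $d(x,s)=r$ and $s\in I(c_L,x)$. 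The $\alpha_1$-metric property then gives $d(z_L,x)\ge d(z_L,c_L)+d(s,x)=2r$. This is the paper's argument; there it is phrased as every vertex of $S_1(w,x)$, $w\in K$, being at distance $2$ from $g_K(x)$, so that $c_L$ lies on no shortest $wx$-path.
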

    To prove the claim, assume the existence of an $x \in F(K)$ such that $g_K(x) \in M_0' \setminus B_2(c_L)$.
    As $g_K(x)$ is an outergate of $x$ with respect to $B_2(K)$, for each vertex $w \in K$, $d(c,g_K(x)) = 2$ holds for all vertices $c \in S_1(w,x)$.
    The latter implies that $c_L$ is not on a shortest $wx$-path, for any $w \in K$.
    In this situation, as $d(x,K) = r+1$ and $c_L \in N(K)$, we obtain $d(x,c_L) \ge r+1$.
    Then, let $s \in L \cap N(c_L)$ be arbitrary.
    Note that $d(s,x) \le d(s,g_K(x))+d(g_K(x),x) \le 2 + r-2 = r$.
    As $d(x,c_L) \ge r+1$, it implies $d(s,x) = d(c_L,x)-1 = r$.
    Therefore, by the $\alpha_1$-metric property, $\ecc(z_L) \ge d(z_L,x) \ge d(z_L,c_L)+d(s,x) = 2r$. $\diamond$
    
    If $\ecc(z_L) \ge 2r$, then the algorithm can output $a=z_L$.
    Hence, by Claim~\ref{claim:zLcL}, we can assume for the remainder of the proof $M_0' \subseteq B_2(c_L)$.

    \smallskip
    The following claim is the gist of our approach:
    \begin{myclaim}\label{claim:dist-2-clique}
        Let $w_L \in N(c_L) \cap K$.
        For every $x \in F(K)$, $d\left(g_K(x),S_1(w_L,x_0)\right) = 2$.
    \end{myclaim}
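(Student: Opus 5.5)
The plan is to prove the two inequalities separately. The lower bound will come for free from the geometry of outergates, and the upper bound will be split according to whether $g_K(x)\in M_0'$ or not, with Lemma~\ref{lem:technical-distance-cond} handling the harder case.

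\emph{Preliminary facts.} Since $\ecc(w)=\ecc(K)=r+1$ for all $w\in K$, every $x\in F(K)$ satisfies $d(w,x)=r+1$ for \emph{every} $w\in K$, so $\proj_x(K)=K$. The vertex $g_K(x)$ is an outergate of $x$ with respect to $B_2(K)$ (Lemma~\ref{lem:clique-balls-outergated}). Hence it is adjacent to all of $\proj_x(B_2(K))$, which contains $S_2(w,x)$ for every $w\in K$, and $d(g_K(x),x)=r-2$. Consequently $g_K(x)\in I(w,x)$ and $d(w,g_K(x))=3$ for all $w\in K$ and all $x\in F(K)$.
\begin{itemize}
\item The lower bound follows: every vertex of $S_1(w_L,x_0)$ lies in $N(K)$, so its distance to $g_K(x)$ is at least $3-1=2$.
\item Since $g_K(x_0)\in I(w_L,x_0)$, we get $S_1(w_L,g_K(x_0))\subseteq S_1(w_L,x_0)$.
\item The set $S_1(g_K(x_0),w_L)$ consists of vertices at distance $2$ from $K$ lying on shortest $Kx_0$-paths. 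So $S_1(g_K(x_0),w_L)\subseteq \proj_{x_0}(B_2(K))$.
\end{itemize}

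\emph{Easy case: $g_K(x)\in M_0'$.} First I check that $c_L\in S_1(w_L,x_0)$. Take $s\in L\cap N(c_L)$. Then $d(s,x_0)=r-1$, so $d(c_L,x_0)\le r$, while $d(w_L,x_0)=r+1$ and $c_L\in N(w_L)$. So $c_L$ is on a shortest $w_Lx_0$-path. Since we may assume $M_0'\subseteq B_2(c_L)$ (Claim~\ref{claim:zLcL}), we get $d(g_K(x),c_L)\le 2$, which gives the upper bound.

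\emph{Main case: $g_K(x)\notin M_0'$, i.e.\ $d(g_K(x),\proj_{x_0}(B_2(K)))\ge 3$.} Here I apply Lemma~\ref{lem:technical-distance-cond} with the roles ``$x$''$:=g_K(x_0)$, ``$y$''$:=g_K(x)$ and ``$w$''$:=w_L$. Its hypotheses hold as follows.
\begin{itemize}
\item $d(g_K(x_0),w_L)=d(g_K(x),w_L)=3$ by the preliminary facts.
\item $d(g_K(x_0),g_K(x))\le 3$, since Claim~\ref{claim:x0} already disposed of the case where this distance is at least $4$.
\item $d(g_K(x),S_1(g_K(x_0),w_L))\ge 3$, because $S_1(g_K(x_0),w_L)\subseteq\proj_{x_0}(B_2(K))$.
\end{itemize}
The lemma then yields a vertex $v\in S_1(w_L,g_K(x_0))\cap S_1(w_L,g_K(x))$. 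The first membership puts $v$ in $S_1(w_L,x_0)$. The second gives $d(v,g_K(x))=2$, so the upper bound holds here as well.

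The step I expect to be delicate is the bookkeeping that justifies applying Lemma~\ref{lem:technical-distance-cond}. It hinges on two points. One is the observation that every $w\in K$ realizes $d(x,K)$, which forces $g_K(x)\in I(w_L,x)$ even though $w_L$ was chosen only as a neighbour of $c_L$. The other is the inclusion $S_1(g_K(x_0),w_L)\subseteq \proj_{x_0}(B_2(K))$, which is what converts ``$g_K(x)\notin M_0'$'' into the distance-$3$ hypothesis of the lemma.
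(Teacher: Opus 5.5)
Your proposal is correct and follows essentially the same argument as the paper. The lower bound comes from $d(g_K(x),K)=3$. The upper bound splits on whether $g_K(x)\in M_0'$: in the first case $c_L$ is the witness, and in the second Lemma~\ref{lem:technical-distance-cond} is applied to $g_K(x_0)$, $g_K(x)$, $w_L$. You also spell out facts the paper uses tacitly: $\proj_x(K)=K$, $S_1(w_L,g_K(x_0))\subseteq S_1(w_L,x_0)$, and $S_1(g_K(x_0),w_L)\subseteq\proj_{x_0}(B_2(K))$.
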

    Indeed, let $x \in F(K)$ be arbitrary.
    Since $S_1(w_L,x_0) \subseteq N(K)$, and $d(g_K(x),K) = 3$, $d\left(g_K(x),S_1(w_L,x_0)\right) \ge 2$. 
    In order to prove this is an equality, it suffices to prove $S_1(w_L,x) \cap S_1(w_L,x_0) \ne \emptyset$. 
    If $g_K(x) \in M_0'$ then this is true because $c_L \in S_1(w_L,x_0) \cap S_1(w_L,x)$. 
    Otherwise, $d(g_K(x),w_L) = d(g_K(x_0),w_L) = 3$, $d(g_K(x),g_K(x_0)) \le 3$, and $d\left(g_K(x),S_1(g_K(x_0),w_L)\right) \ge d(g_K(x),\proj_{x_0}(B_2(K))) \ge 3$.
    Then, the result follows from Lemma~\ref{lem:technical-distance-cond}. $\diamond$

    \smallskip
    Let $W_L = S_1(w_L,x_0)$.
    By Claim~\ref{claim:dist-2-clique}, for every $v \in W_L$, $\Phi_2(v,M) = \Psi_{W_L}(v,M)$.
    Hence, let $v_L \in W_L$ be maximizing $\phi_2(v_L,M)$.
    If $\Phi_2(v_L,M) \ne M$ then let $x_0' \in F(K)$ be satisfying $d(g_K(x_0'),v_L) \ge 3$.
    By Lemma~\ref{lem:larger-x}, there is an $x \in F(K)$ such that $d(g_K(x),g_K(x_0')) \ge 4$. Therefore, by Claim~\ref{claim:x0}, we are done replacing $x_0$ with $x_0'$.
    Otherwise, $M \subseteq B_2(v_L)$, and so, $F(K) \subseteq B_r(v_L)$.
    Assume further $\ecc(v_L) > r$ (else, the algorithm outputs $a=v_L$).
    Let $a \in F(v_L)$ be arbitrary, and let $w \in N(v_L) \cap K$.
    Since $F(K) \subseteq B_r(v_L)$, we get $a \in V \setminus F(K)$.
    Furthermore, by Property $(\alpha)$, $d(w,a) \le r$.
    By the $\alpha_1$-metric property, $\ecc(a) \ge d(a,x_0) \ge d(a,w)+d(v_L,x_0) = 2r$.
    As a result, we are done outputting $a$.

    \medskip
    \noindent
    {\bf Runtime analysis.}
    We can compute $r,F'(K) \ \text{and} \ F(K)$ by running a BFS with start subset $K$.
    Assume further $F'(K) \ne \emptyset$ (lines 3--11).
    Then, we can apply the procedure of Lemma~\ref{lem:intermediate-computations}.\ref{item:compute-shadows} to compute all values $\psi_K(w,F'(K)), \ w \in K$.
    During this process, we can also compute the vertex $w_0$ and the sub-clique $K'$.
    If $\psi_K(w_0,F'(K)) < |F'(K)|$, then we compute $F'(K) \cap F(w_0)$ by running a BFS from $w_0$. 
    Otherwise, we can apply the procedure of Lemma~\ref{lem:compute-outergates} to compute the outergates $g_K(x), \ x \in F(K)$ with respect to $B_2(K)$; let $M = \{g_K(x) : x \in F(K)\}$ be the resulting subset.
    
    For an arbitrary $x_0 \in F(K)$, we can compute some $x \in F(K)$ such that $d(g_K(x_0),g_K(x))$ is maximized by running a BFS from $g_K(x_0)$.
    If $d(g_K(x_0),g_K(x)) = 4$, then we compute $U=S_2(g_K(x_0),g_K(x))$ by running another BFS from $g_K(x)$; by Lemma~\ref{lem:hyp-pties}.\ref{item:leanness}, $U$ is a clique, and therefore, we can compute a vertex $b \in U$ such that $\ecc(b)$ is minimized by applying the procedure of Lemma~\ref{lem:intermediate-computations}.\ref{item:compute-ecc-clique}.
    Let us assume $d(g_K(x_0),g_K(x)) \le 3$ for the remainder of the analysis (line 27).
    Note that at an ulterior step of the algorithm, we may replace $x_0$ with some other $x_0' \in F(K)$ (lines 32 \& 61).
    However, our prior correctness analysis implies that in such cases, there exists some $x' \in F(K)$ such that $d(g_K(x_0'),g_K(x')) \ge 4$.
    In this situation, we repeat once lines 15--26 of the algorithm, and then we halt.
   
    Since we performed a BFS with start subset $K$, the ball $B_2(K)$ is known to us.
    In particular, we can compute $\proj_{x_0}(B_2(K))$ by scanning the neighbors of $g_K(x_0)$, and taking their intersection with $B_2(K)$.
    Since by the combination of Lemma~\ref{lem:hyp-pties}.\ref{item:leanness} and Lemma~\ref{lem:slices-clique}, $\proj_{x_0}(B_2(K))$ is a clique, the subset $L$ can be computed by applying the procedure of Lemma~\ref{lem:intermediate-computations}.\ref{item:compute-potentials}.
    Furthermore, we can compute the subset $M_0'$ by running a BFS with start subset $\proj_{x_0}(B_2(K))$.
    For an arbitrary $s_0 \in L$, we can compute $\Phi_2(s_0,M)$, and compare it to $M_0'$, by running a BFS from $s_0$.
    Then, let us analyze lines 35--46 of the algorithm.
    The value $\ecc(L)$ can be computed by running a BFS with start subset $L$.
    If $\ecc(L) = r$ then since $L$ is a clique, a vertex  $s_0$ such that $\psi_L(s_0)$ is maximized can be computed by applying the procedure of Lemma~\ref{lem:intermediate-computations}.\ref{item:compute-shadows}.
    We can compute $\ecc(s_0)$, and a vertex of $F(s_0)$, by running another BFS from $s_0$.
    We now analyze lines 47--55 of the algorithm.
    The vertex $z_L \in F(L)$ was obtained as a byproduct of the BFS with start subset $L$.
    Then, by running a BFS with start subset $N(L) \cap N(K)$, we can either compute a vertex $c_L \in N(L) \cap N(K) \cap B_r(z_L)$, or conclude that no such vertex exists.
    Running another BFS from $c_L$, we can compute $\Phi_2(c_L,M)$, and compare it with $M_0'$.

    We compute $w_L \in N(c_L) \cap K$ by scanning $K$ and $N(c_L)$.
    Afterwards, we compute $W_L = S_1(w_L,x_0)$ by scanning the neighbors of $w_L$ and the neighbors of every vertex of $\proj_{x_0}(B_2(K))$.
    Since $G$ is a CB-graph, $W_L$ is a clique.
    Hence, to compute a vertex $v_L \in W_L$ such that $\phi_2(v_L,M)$ is maximized, we can apply the procedure of Lemma~\ref{lem:intermediate-computations}.\ref{item:compute-potentials}.
    Finally, we run a BFS from $v_L$ in order to compute $\Phi_2(v_L,M)$, $\ecc(v_L)$, and a vertex of $F(v_L)$.

    Overall, the total runtime is in $\cO(n+m)$.
        \end{proof}

      \begin{figure}[!h]
    \centering
    \begin{tikzpicture}
        \draw (0,0) ellipse(1cm and .5cm);
        \draw (0,0) circle(.5cm) node{$K'$};
        \draw (1.25,0) node[anchor=west]{$K$};
        \draw (0,2) ellipse(1cm and .5cm);
        \draw (1.25,2) node[anchor=west]{$\proj_{x_0}(B_2(K))$};
        \draw (0,2) ellipse(.5cm and .25cm) node{$L$};
        \draw[rotate=15] (-2,2) ellipse(1cm and .5cm) node{};
        \draw (-.4,1) circle(.35cm);
        \draw (-.9,.6) node{$W_L$};
        \draw (-3.5,1.2) node[anchor=east]{$\proj_x(B_2(K))$};
        \filldraw (-.3,.2) circle (2pt) node{};
        \filldraw (.2,.35) circle (2pt) node{};
        \filldraw (-.35,2) circle (2pt) node{};
        \filldraw (-.4,1) circle (2pt) node[anchor=west]{$c_L$};
        \filldraw (.15,.85) circle (2pt) node{};
        \filldraw (.5,1.1) circle (2pt) node{};
        \filldraw (-2.6,1) circle (2pt) node{};
        \filldraw (0,3) circle (2pt) node[anchor=south]{$g_K(x_0)$};
        \filldraw (-2.8,2.2) circle (2pt) node[anchor=south]{$g_K(x)$};
        \draw (-.3,.2) -- (.2,.35);
        \draw (.15,.85) -- (.5,1.1);
        \draw (-.3,.2) -- (-.4,1);
        \draw (.15,.85) -- (.2,.35) -- (.5,1.1);
        \draw (.15,.85) -- (-.35,2) -- (.5,1.1);
        \draw (-.35,2) -- (-.4,1);
        \draw (-.4,1) -- (-2.6,1);
        \draw (0,3) -- (-.35,2);
        \draw (-2.6,1) -- (-2.8,2.2);
    \end{tikzpicture}
    \caption{An illustration of the proof of Prop.~\ref{prop:odd-case}.}
    \label{fig:odd-case-algorithm}
    \end{figure}

\subsection{The algorithm}\label{ssec:alg-analysis}

\begin{proof}[Proof of Theorem~\ref{thm:center:half-hyp}]
     The result follows from the analysis of Algorithm~\ref{alg:center}.

            \begin{algorithm}[!h]
	\DontPrintSemicolon
    \SetAlgoLined
     \begin{multicols}{2}
	\SetKwInOut{Input}{Input}
	\SetKwInOut{Output}{Output}
	\SetKw{Continue}{continue}
	\SetKw{Break}{break}
    \scriptsize

	\Input{a $\frac 1 2$-hyperbolic graph $G=(V,E)$.}
	\Output{a central vertex of $G$.}

	\BlankLine
    $u,v := $ a pair of mutually distant vertices\;
    $r := \left\lceil \frac {d(u,v)} 2 \right\rceil$\;

    \BlankLine
    \BlankLine
    {\it\color{blue} /**even case**/}\;
    \BlankLine
    \If{$d(u,v) = 2r$}
    {
        $K := S_r(u,v)$\;
        $c :=$ a vertex of $K$ s.t. $\ecc(c)$ is minimized\;
        \Return $c$\;
    }

    \BlankLine
    \BlankLine
    {\it\color{blue}/**odd case: $d(u,v) = 2r-1$**/}\;

        \BlankLine
        \If{$r=1$}
        {
            \Return $u$\;
        }

        \BlankLine
        {\it\color{blue}/*$r \ge 2$*/}\;
        $K := S_{r-1}(u,v)$\;
        first\_slice\_scanned $:=$ TRUE\; 

        \If{$e(K) = r$}
        {
            $c :=$ a vertex of $K$ s.t. $\psi_K(c)$ is maximized\;
            \eIf{$e(c) = r$}
            {
                \Return $c$\;
            }
            {
                $u :=$ an arbitrary vertex of $F(c)$\;
                $v :=$ an arbitrary vertex of $F(u)$\;
                {\bf goto} line 3\;
            }
        }
        \BlankLine
        {\it\color{blue}/*$\ecc(K) = r+1$*/}\;
        $a :=$ {\tt process-clique}$(G,K)$\;
        \BlankLine
        \If{$\ecc(a) = r$}
        {
            \Return $a$\;
        }
        \BlankLine
        \If{$\ecc(a) = 2r$}
        {
            $u := a$\;
            $v :=$ an arbitrary vertex of $F(a)$\;
            {\bf goto} line 3\;
        }

        \BlankLine
        {\it\color{blue}/*$\ecc(a) = r+1$*/}\;
        \eIf{first\_slice\_scanned}
        {
            $K := S_r(u,v)$\;
            first\_slice\_scanned $:=$ FALSE\;
            {\bf goto} line 16\;
        }
        {
            \Return $a$\;
        }
	 \BlankLine
	 \BlankLine
  \end{multicols}
	 \caption{Computing a central vertex in a $\frac 1 2$-hyperbolic graph.}
  \label{alg:center}
\end{algorithm}

     \medskip
     \noindent
     {\bf Correctness.}
     Let $u,v$ be mutually distant vertices.
     If $d(u,v) = 2r$ (even case), by Corollary~\ref{cor:even-case}, every vertex of minimum eccentricity within $S_r(u,v)$ is central.
     Hence, we shall assume for the remainder of the proof $d(u,v) = 2r-1$ (odd case).
     By Lemma~\ref{lem:ecc-ub}, we are left deciding whether $rad(G) = r$ or $rad(G) = r+1$, and returning a central vertex.
     If $r=1$, then both $u$ and $v$ are universal. In particular, both $u$ and $v$ are central, and so we can return any of those vertices.
     From now on, we shall assume $r \ge 2$.
     By Lemma~\ref{lem:odd-case:locate-center}, a vertex of eccentricity at most $r$, if any, must be contained either in $B_1\left(S_{r-1}(u,v)\right)$, or in $B_1\left(S_r(u,v)\right)$.
     We first consider $K = S_{r-1}(u,v)$, which by Lemma~\ref{lem:hyp-pties}.\ref{item:leanness} is a clique.
     There are two cases:

     If $\ecc(K) = r$, then we follow the same strategy as in~\cite{ChDr}. More specifically, let $c \in K$ be maximizing $\psi_K(c)$. If $\ecc(c) = r$, then $c$ is central. Otherwise, let $u' \in F(c)$. By Corollary~\ref{cor:larger-x}, $\ecc(u') \ge 2r$.
     Furthermore, by Lemma~\ref{lem:hyp-pties}.\ref{item:hyp:furthest-vertex}, $\ecc(u) = \ecc(v)  = 2r-1 \ge \diam(G)-1$.
     Therefore, $\ecc(u') = \diam(G) = 2r$.
     Let $v' \in F(u')$ be arbitrary.
     We replace $u,v$ with the diametral pair $u',v'$. By doing so, we are back to the even case.

    Otherwise, by Lemma~\ref{lem:ecc-ub}, $\ecc(K) = r+1$; furthermore, also by Lemma~\ref{lem:ecc-ub}, $\ecc(w) = r+1$ for every $w \in K$.
    Let $a$ be the vertex returned by the {\tt process-clique} procedure of Proposition~\ref{prop:odd-case} (Algorithm~\ref{alg:odd-case}).
    If $\ecc(a) = r$, then $a$ is central.
    Else, if $\ecc(a) \ge 2r$, then we conclude as before $\ecc(a) = 2r = \diam(G)$.
    We replace $u$ and $v$ with $a$ and an arbitrary vertex of $F(a)$. By doing so, we are back to the even case.
    Otherwise, $\ecc(a) = r+1$, and by Proposition~\ref{prop:odd-case}, there is no vertex of eccentricity $r$ within $B_1(K)$.
    In the latter sub-case, we next consider $K = S_r(u,v)$.
    We proceed exactly as before, with the exception of the last sub-case: when $\ecc(K) = r+1$, and the vertex $a$ returned by the {\tt process-clique} procedure has eccentricity $r+1$. 
    Then, by Proposition~\ref{prop:odd-case} (applied twice), there is no vertex of eccentricity $r$ within $B_1\left(S_{r-1}(u,v)\right) \cup B_1\left(S_r(u,v)\right)$.
    By Lemma~\ref{lem:odd-case:locate-center}, $\rad(G) = r+1$, and $a$ is central.

     \medskip
     \noindent
     {\bf Runtime analysis.}
     We apply Lemma~\ref{lem:hyp-pties}.\ref{item:mutually-distant_vertices} to compute a pair $u,v$ of mutually distant vertices.
     Then, we run two BFSs: from $u$ and $v$, respectively.
     In the even case, since by Lemma~\ref{lem:hyp-pties}.\ref{item:leanness} $S_r(u,v)$ is a clique, a vertex of minimum eccentricity within the slice can be computed by applying the procedure of Lemma~\ref{lem:intermediate-computations}.\ref{item:compute-ecc-clique}.
     We now focus on the odd case, and we further assume $r \ge 2$ (the sub-case $r=1$ is trivial).
     Since we apply the same procedure to $S_{r-1}(u,v)$ and $S_r(u,v)$, it suffices to detail the implementation of this procedure for a fixed clique $K$.
     We run a BFS with start subset $K$ to compute $\ecc(K)$ and $F(K)$.
     If $\ecc(K) = r$, then we compute the vertex $c$ by applying Lemma~\ref{lem:intermediate-computations}.\ref{item:compute-shadows};
     then, we compute $\ecc(c)$ and $F(c)$ by running a BFS from $c$.
     If $\ecc(K) = r+1$, then we apply Proposition~\ref{prop:odd-case}; 
     the eccentricity of its output $a$ can be computed from a final BFS.
     The overall running-time is therefore in $\cO(n+m)$.
\end{proof}

\section{Hardness results}\label{sec:hard}

We complete Sec.~\ref{sec:alg} with two hardness results: one for the recognition of $\frac 1 2$-hyperbolic graphs, and another for the {\sc Center} problem on $1$-hyperbolic graphs.

\smallskip
\noindent
{\bf Recognition of $\frac 1 2$-hyperbolic graphs.} (Theorem~\ref{thm:hardness:half-hyp}, see Sec.~\ref{sec:hardness:half-hyp} below).
The Strong Exponential-Time Hypothesis (SETH) posits that for any $\varepsilon > 0$, there is a $k$ such that $k$-{\sc SAT} on $n$ variables cannot be solved in $\cO\left((2-\varepsilon)^n\right)$ time~\cite{SETH}.
The Orthogonal-Vector problem ({\sc OV}) takes as input two families $A$ and $B$ of $n$ sets over some universe $U$, and it asks whether there exist $a \in A$, $b \in B$ s.t. $a \cap b = \emptyset$.
By~\cite{OV}, under SETH we cannot solve {\sc OV} in $\cO(n^{2-\varepsilon})$ time, for any $\varepsilon > 0$.

In~\cite{hypChordal}, a characterization of the $\frac 1 2$-hyperbolic chordal graphs is proved, via two forbidden isometric subgraphs. For split graphs, the characterization can be reduced to one such subgraph $H_2$. 
To see the connection with {\sc OV}, we stress that $H_2$ contains two vertices $u$ and $v$ at distance three, or equivalently such that $B_1(u)$ and $B_1(v)$ are disjoint.
Roughly, we construct split graphs $G$ s.t. conversely, if $\diam(G) =3$, then $H_2$ is an isometric subgraph.

\smallskip
\noindent
{\bf Central vertices in $1$-hyperbolic graphs.} (Theorem~\ref{thm:hardness:1-hyp}, see Sec.~\ref{sec:hardness:one-hyp} below).
The {\em Hitting Set Conjecture} posits that for any $\varepsilon > 0$, there is no algorithm that for two lists $A, B$ of $n$ subsets, can decide in $\cO(n^{2-\varepsilon})$ time if there is a set in the first list that intersects every set in the second list~\cite{AVW16}.
The authors of~\cite{AVW16} introduced the so-called HS-graphs, which they used in order to prove that distinguishing graphs with radius at most two from those of radius at least three cannot be done in truly subquadratic time, under the Hitting Set conjecture.
Furthermore, it was observed in~\cite{Chepoi2018FastEA} that every HS-graph is $2$-hyperbolic.
We refine the hyperbolicity bound on HS-graphs, using two pre-processing rules, to prove the final result of the paper.

\subsection{Recognition of $\frac 1 2$-hyperbolic graphs}\label{sec:hardness:half-hyp}

\begin{mylemma}[\cite{OV}]\label{lem:seth-ov}
    Under SETH, for any $\varepsilon > 0$, there exists a constant $\gamma > 0$ such that we cannot solve {\sc OV} in $\cO(n^{2-\varepsilon})$ time, even if $|U| \le \gamma \cdot \log{n}$.    
\end{mylemma}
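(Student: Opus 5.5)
This lemma is a cited result from~\cite{OV}, so rather than building a proof from scratch I would recall the standard reduction from $k$-{\sc SAT} to {\sc OV} via the sparsification lemma. The plan runs in four steps.

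\textbf{Fix $k$ and sparsify.} Fix $\varepsilon > 0$ and suppose {\sc OV} on families of size $n$ over universes of size $\gamma\log n$ is solvable in $\cO(n^{2-\varepsilon})$ time for every constant $\gamma$. Under SETH there is a $k$ such that $k$-{\sc SAT} on $N$ variables has no $\cO((2-\varepsilon')^N)$-time algorithm, where $\varepsilon'$ depends only on $\varepsilon$. By the sparsification lemma (Impagliazzo--Paturi--Zane), any $k$-CNF on $N$ variables is a disjunction of $2^{\eta N}$ $k$-CNFs, each having at most $c(k,\eta)\cdot N$ clauses, and these can be enumerated in $2^{\eta N}\mathrm{poly}(N)$ time. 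Here $\eta>0$ is arbitrarily small.

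\textbf{Split-and-list each sparse formula.} Divide the variables into two halves $X_1, X_2$. For each partial assignment $\alpha$ of $X_1$, form the set $a_\alpha$ of clauses that $\alpha$ does \emph{not} satisfy; define $b_\beta$ the same way for assignments $\beta$ of $X_2$. Then $\alpha\cup\beta$ satisfies the formula if and only if $a_\alpha\cap b_\beta=\emptyset$. This gives an {\sc OV} instance with $n=2^{N/2}$ sets per family over a universe of $cN = (2c)\log n$ clauses. So $\gamma = 2c(k,\eta)$ is a constant.

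\textbf{Combine the running times.} Solving each of the $2^{\eta N}$ {\sc OV} instances costs $\cO(2^{(N/2)(2-\varepsilon)}\mathrm{poly}(N)) = \cO(2^{(1-\varepsilon/2)N}\mathrm{poly}(N))$. Multiplying by the number of instances and choosing $\eta<\varepsilon/4$ gives a total of $\cO(2^{(1-\varepsilon/4)N})$, which contradicts SETH.

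\textbf{Main obstacle.} The delicate point is the order of quantifiers. The constant $\gamma$ must depend only on $\varepsilon$, through $k$ and $\eta$, and not on $n$. This is exactly why sparsification is needed: without it the number of clauses could be superlinear in $N$, and the universe would no longer be logarithmic in $n$.
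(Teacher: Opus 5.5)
Your proposal is correct. It is the standard sparsification-plus-split-and-list reduction behind the cited result, and the paper itself gives no proof, only importing the lemma from that reference. The details you leave implicit are routine: take $\varepsilon' = 2-2^{1-\varepsilon/4}$ before invoking SETH to obtain $k$, then fix $\eta$ and $c(k,\eta)$, and pad to equal family sizes when $N$ is odd.
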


Recall that a graph $G$ is called a {\em split graph} if its vertex-set can be bipartitioned in a clique and an independent set.
Split graphs are chordal graphs.
Furthermore, the hyperbolicity of chordal graphs was characterized in~\cite{hypChordal}, as follows:

\begin{figure}[!h]
    \centering
    \begin{subfigure}[b]{.4\textwidth}
        \centering
        \begin{tikzpicture}
            \filldraw (0,2) circle (2pt) node[anchor=south]{$x$};
            \filldraw (-1,1) circle (2pt) node{};
            \filldraw (1,1) circle (2pt) node{};
            \filldraw (-1,-1) circle (2pt) node{};
            \filldraw (1,-1) circle (2pt) node{};
            \filldraw (0,-2) circle (2pt) node[anchor=north]{$y$};
            \filldraw (-2,0) circle (2pt) node[anchor=east]{$u$};
            \filldraw (2,0) circle (2pt) node[anchor=west]{$v$};
            \draw (0,2) -- (-1,1) -- (-2,0) -- (-1,-1) -- (0,-2) -- (1,-1) -- (2,0) -- (1,1) -- (0,2);
            \draw (-1,1) -- (1,1) -- (1,-1) -- (-1,-1) -- (-1,1) -- (1,-1);
        \end{tikzpicture}
        \caption{$H_1$}
        \label{fig:h1}
    \end{subfigure}
    \begin{subfigure}[b]{.4\textwidth}
        \centering
        \begin{tikzpicture}
            \filldraw (0,2) circle (2pt) node[anchor=south]{$x$};
            \filldraw (-1,1) circle (2pt) node{};
            \filldraw (1,1) circle (2pt) node{};
            \filldraw (-1,-1) circle (2pt) node{};
            \filldraw (1,-1) circle (2pt) node{};
            \filldraw (0,-2) circle (2pt) node[anchor=north]{$y$};
            \filldraw (-2,0) circle (2pt) node[anchor=east]{$u$};
            \filldraw (2,0) circle (2pt) node[anchor=west]{$v$};
            \draw (0,2) -- (-1,1) -- (-2,0) -- (-1,-1) -- (0,-2) -- (1,-1) -- (2,0) -- (1,1) -- (0,2);
            \draw (-1,1) -- (1,1) -- (1,-1) -- (-1,-1) -- (-1,1) -- (1,-1);
            \draw (1,1) -- (-1,-1);
        \end{tikzpicture}
        \caption{$H_2$}
        \label{fig:h2}
    \end{subfigure}
    \caption{Some $1$-hyperbolic chordal graphs~\cite{hypChordal}.}
    \label{fig:chordal-hyp}
\end{figure}
\begin{mylemma}[\cite{hypChordal}]\label{lem:hyp-chordal}
    A chordal graph is $\frac 1 2$-hyperbolic if and only if it does not contain any of $H_1,H_2$ in Fig.~\ref{fig:chordal-hyp} as an isometric subgraph (otherwise, it is $1$-hyperbolic).    
\end{mylemma}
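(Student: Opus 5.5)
The forward direction is a direct verification, and the converse is a minimal-counterexample argument. For the \emph{only if} part, I would check on Fig.~\ref{fig:chordal-hyp} that in both $H_1$ and $H_2$ we have $d(x,y)=d(u,v)=3$ and $d(x,u)=d(x,v)=d(y,u)=d(y,v)=2$. Then $d(x,y)+d(u,v)=6=\max\{d(x,u)+d(y,v),d(x,v)+d(y,u)\}+2$, so the quadruple $x,y,u,v$ witnesses hyperbolicity at least $1$. An isometric subgraph preserves all four-point sums, so any chordal $G$ containing $H_1$ or $H_2$ isometrically is not $\frac12$-hyperbolic. For the parenthetical claim, I would invoke the known fact that chordal graphs are $1$-hyperbolic (chordal graphs are $\alpha_1$-metric, and the four-point condition for $\alpha_1$-metric graphs gives $\delta\le 1$). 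Since $\delta$ is a half-integer, a chordal graph that is not $\frac12$-hyperbolic then has hyperbolicity exactly $1$.

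For the \emph{if} part, I would argue by contraposition. Assume $G$ is chordal and not $\frac12$-hyperbolic. Choose a quadruple $u,v,x,y$ with $d(u,v)+d(x,y)=\max\{d(u,x)+d(v,y),d(u,y)+d(v,x)\}+2$, minimizing $\sum$ of the six pairwise distances. The first step is a shrinking lemma. If some pair is far apart, for instance $d(x,u)\ge 3$, I replace $x$ by its neighbour $x'$ on a shortest path towards $u$ (or towards the farther of $u,v$). Using that slices in a $1$-hyperbolic chordal graph have diameter at most $2$, together with the $\alpha_1$-metric property, I would show the defect $d(u,v)+d(x,y)-\max\{\cdot\}$ does not drop. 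This contradicts minimality. The conclusion I aim for is that all pairwise distances are at most $3$, with $d(x,y)=d(u,v)=3$ and the four mixed distances equal to $2$, so the pattern is exactly that of $H_1$ and $H_2$.

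The second step reconstructs the subgraph. Pick common neighbours $a\in S_1(x,u)$, $b\in S_1(x,v)$, $c\in S_1(y,u)$ and $d'\in S_1(y,v)$. These eight vertices form a closed walk $x,a,u,c,y,d',v,b,x$ of length $8$. Chordality forces chords among $a,b,c,d'$: the cycle has no chord through $x,y,u,v$ (those would shorten the distances), so triangulating it forces $a,b,c,d'$ to form a $K_4$ minus at most one diagonal. The cases are as follows: if exactly one diagonal is missing we get $H_1$, and if no diagonal is missing we get $H_2$. I would also check that the distances $d(x,y)=d(u,v)=3$ make $a,b,c,d'$ pairwise distinct, and that the induced subgraph on the eight vertices is isometric. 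Isometry only needs checking for pairs at distance $3$ in $H_i$, and those are exactly $\{x,y\}$ and $\{u,v\}$.

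I expect the main obstacle to be the shrinking lemma in the first step, namely controlling the defect when moving one point one step along a geodesic. If the direct $\alpha_1$ argument is messy, I would instead use quasi-medians via Lemma~\ref{lem:alpha1-pties}: replace the quadruple by vertices of the metric triangles, whose types are bounded by $2$, to localize the violation in a ball of radius about $2$.
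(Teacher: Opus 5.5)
Your \emph{only if} direction and your Step~2 are correct. Once $d(x,y)=d(u,v)=3$ and the four mixed distances equal $2$, each of $x,u,y,v$ has exactly two neighbours among $a,b,c,d'$. Chordality then forces the $4$-cycle $a\,b\,d'\,c$ plus at least one diagonal, which gives $H_1$ or $H_2$, and isometry only has to be checked on $\{x,y\}$ and $\{u,v\}$. (The paper gives no proof of this lemma; it quotes it from \cite{hypChordal}.)

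The genuine gap is Step~1. It carries the entire content of the theorem, and you only assert it. Let $x'$ be a neighbour of $x$ with $d(x',u)=d(x,u)-1$. Then $d(x',v)=d(x,v)+\epsilon_v$ and $d(x',y)=d(x,y)+\epsilon_y$ for some $\epsilon_v,\epsilon_y\in\{-1,0,1\}$ that you do not control. Write $A=d(u,v)+d(x,y)$, $B=d(u,x)+d(v,y)$, $C=d(u,y)+d(v,x)$, with $A=\max\{B,C\}+2$. The new defect is $A+\epsilon_y-\max\{B-1,C+\epsilon_v\}$.

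Already for $B=C$ the violation survives only for $(\epsilon_v,\epsilon_y)\in\{(1,1),(0,0),(-1,-1)\}$. In the cases $(1,0)$, $(1,-1)$, $(0,-1)$ it is lost. In the case $(1,1)$ it survives, but the sum of the six distances goes \emph{up} by one, so a sum-minimal quadruple yields no contradiction. Nothing you invoke (slices of diameter at most $2$, the $\alpha_1$-metric property) is shown to exclude the bad cases, or to tell you which vertex to move and in which direction. The quasi-median fallback is likewise only a slogan. Localizing a violation to a configuration with $d(u,v)+d(x,y)=6$ and $\max\{B,C\}=4$ is exactly what has to be proved, not a technical detail.

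Two further points. First, even if you reach a violating quadruple with all pairwise distances at most $3$, the pattern ``$d(x,y)=d(u,v)=3$, all mixed distances $2$'' does not follow by itself. For instance, $d(u,v)=d(x,y)=3$, $d(u,x)=1$, $d(v,y)=3$, $d(u,y)=d(v,x)=2$ is also a violating pattern, and you must exclude it. Here it is excluded by the $\alpha_1$-metric property applied to the edge $xu$: since $x\in I(v,u)$ and $u\in I(x,y)$, you get $d(v,y)\ge d(v,x)+d(u,y)=4$, a contradiction.

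Second, for the parenthetical claim, do not rely on an unproved general bound $\delta\le 1$ for $\alpha_1$-metric graphs. The $1$-hyperbolicity of chordal graphs is a known theorem, proved in the same paper as this lemma, and should simply be cited.
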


Since $H_1$ is not a split graph, the characterization can be simplified for split graphs:
\begin{mycorollary}\label{cor:hyp-split}
    A split graph is $\frac 1 2$-hyperbolic if and only if it does not contain $H_2$ in Fig.~\ref{fig:chordal-hyp} as an isometric subgraph (otherwise, it is $1$-hyperbolic).    
\end{mycorollary}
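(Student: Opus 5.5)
The plan is to derive Corollary~\ref{cor:hyp-split} from Lemma~\ref{lem:hyp-chordal}. Split graphs are chordal, so for a split graph $G$ the lemma already says that $G$ is $\frac 1 2$-hyperbolic if and only if it contains neither $H_1$ nor $H_2$ as an isometric subgraph, and that otherwise $G$ is $1$-hyperbolic. Hence it suffices to show that a split graph never contains $H_1$ as an isometric subgraph. Then the condition ``no isometric $H_1$ or $H_2$'' reduces to ``no isometric $H_2$'', and the parenthetical ``otherwise $1$-hyperbolic'' carries over unchanged.

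The main step is to observe that $H_1$ is not a split graph. Then I use that the class of split graphs is hereditary: an isometric subgraph is in particular a subgraph, and it is induced because distance $1$ is preserved. Split graphs are closed under induced subgraphs, since restricting the clique/independent-set bipartition still gives such a bipartition. So if $H_1$ were an isometric subgraph of a split graph, it would itself be a split graph, which is a contradiction.

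To check that $H_1$ is not split, I use the characterization of split graphs as the $(2K_2,C_4,C_5)$-free graphs. Alternatively, I argue directly. In $H_1$ the pendant-like vertices $x,u,y,v$ form an induced $C_4$ together with the inner vertices: for instance $x,\,(1,1),\,v,\,(1,-1)$ is a path, and the inner vertices adjacent to $x$ and to $y$ are disjoint pairs. A cleaner witness is the induced $2K_2$ on $x$ with its neighbour $(-1,1)$, and $y$ with its neighbour $(1,-1)$. These four vertices do induce a $2K_2$, provided $(-1,1)$ and $(1,-1)$ are nonadjacent in $H_1$. From the drawing, $H_1$ has the diagonal $(-1,1)$–$(1,-1)$, so I would choose the pair along the absent diagonal instead: $x$–$(1,1)$ and $y$–$(-1,-1)$. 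In $H_1$ the edge $(1,1)$–$(-1,-1)$ is absent; that diagonal is exactly what distinguishes $H_2$. Also $x$ is not adjacent to $y$ or to $(-1,-1)$, and $y$ is not adjacent to $(1,1)$. Thus $\{x,(1,1),y,(-1,-1)\}$ induces $2K_2$, so $H_1$ is not split.

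The only delicate point is reading off the edge set of $H_1$ correctly, namely identifying which square diagonal is missing. The rest is immediate. Note that $H_2$ is split, with clique equal to the inner $K_4$ and independent set $\{x,y,u,v\}$, so it genuinely remains in the characterization.
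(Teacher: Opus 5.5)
Your proof is correct and follows the paper's route: the paper derives the corollary from Lemma~\ref{lem:hyp-chordal} with the one-line remark that $H_1$ is not a split graph. You supply the details it omits, namely that isometric subgraphs are induced, that split graphs are hereditary, and that $\{x,(1,1),y,(-1,-1)\}$ induces a $2K_2$ in $H_1$ because the diagonal between $(1,1)$ and $(-1,-1)$ is absent. You should delete the stray aside about an induced $C_4$ in $H_1$: it is false, since $H_1$ is chordal, and your final argument does not use it.
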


\begin{proof}[Proof of Theorem~\ref{thm:hardness:half-hyp}]
        Let $\varepsilon \in (0,1)$ be arbitrary.
        In what follows, we prove that under SETH we cannot recognize the $\frac 1 2$-hyperbolic split graphs with $n$ vertices and at most $n^{1+o(1)}$ edges in $\cO(n^{2-\varepsilon})$ time.
        For that, let $\langle A,B,U \rangle$ be an {\sc OV}-instance, where $A$ and $B$ are families of $n$ sets over a common universe $U$.
        By Lemma~\ref{lem:seth-ov}, we can assume that $|U| \le \gamma \cdot \log{n}$, for some constant $\gamma$ that only depends on $\varepsilon$.
        Let $A_0,A_1$ be disjoint copies of $A$; for every $a \in A$, let $a_0 \in A_0$ and $a_1 \in A_1$ be denoting its copies.
        In the same way, let $B_0,B_1$ be disjoint copies of $B$; for every $b \in B$, let $b_0 \in B_0$ and $b_1 \in B_1$ be denoting its copies.
        Let also $w_0,w_1,z_A,z_B$ be elements that are not in $U$.

        The graph $G=(V,E)$ is constructed as follows:
        \begin{itemize}
            \item $V = A_0 \cup A_1 \cup B_0 \cup B_1 \cup U \cup \{w_0,w_1,z_A,z_B\}$;
            \item $A_0 \cup A_1 \cup B_0 \cup B_1$ is an independent set, and $U \cup \{w_0,w_1,z_A,z_B\}$ is a clique;
            \item for every $a \in A$, for every $u \in U$ such that $u \in a$, we add the two edges $ua_0,ua_1$;
            \item for every $b \in B$, for every $u \in U$ such that $u \in b$, we add the two edges $ub_0,ub_1$;
            \item for every $i \in \{0,1\}$, we add an edge between $w_i$ and every vertex of $A_i \cup B_{1-i}$;
            \item finally, we add an edge between $z_A$ and every vertex of $A_0 \cup A_1$, and we add an edge between $z_B$ and every vertex of $B_0 \cup B_1$.
        \end{itemize}
        By construction, $G$ is a split graph.
        Furthermore, we can compute $G$ from $\langle A,B,U \rangle$ in $\cO(n|U|) = \cO(n\log{n})$ time.
        To complete the reduction, we prove in what follows $G$ is $\frac 1 2$-hyperbolic if and only if for every $a \in A$, and for every $b \in B$, $a \cap b \ne \emptyset$. 
        In one direction, let $G$ be $\frac 1 2$-hyperbolic.
        Suppose for the sake of contradiction there exist $a \in A, b \in B$ such that $a \cap b = \emptyset$.
        Then, $a_0,a_1,b_0,b_1,w_0,w_1,z_A,z_B$ induces a copy of $H_2$.
        Furthermore, $N(a_0) = a \cup \{w_0,z_A\}$, $N(b_0) = b \cup \{w_1,z_B\}$, and so, $d(a_0,b_0) = 3$.
        In the same way, $d(a_1,b_1) = 3$.
        As a result, $H_2$ is an isometric subgraph of $G$, thus contradicting Corollary~\ref{cor:hyp-split}.
        In the other direction, let $G$ be having hyperbolicity at least one.
        By Corollary~\ref{cor:hyp-split}, $H_2$ is an isometric subgraph of $G$.
        In particular, there exist vertices $u$ and $v$ of $G$ such that $d(u,v) = 3$.
        By construction, the subsets $A_0 \cup A_1, B_0 \cup B_1, A_0 \cup B_1 \ \text{and} \ A_1 \cup B_0$ have diameter two in $G$.
        Therefore, there is some $i \in \{0,1\}$ such that exactly one of $u$ or $v$ is contained in $A_i$, and the other is contained in $B_i$.
        Up to symmetries, let $u \in A_0, v \in B_0$.
        Let $a \in A, b \in B$ be chosen such that $u = a_0, v = b_0$.
        Then, $N(u) = a \cup \{w_0,z_A\}$, and $N(v) = b \cup \{w_1,z_B\}$.
        Since $d(u,v) = 3$, we get $N(u) \cap N(v) = \emptyset$.
        Therefore, $a \cap b = \emptyset$.
\end{proof}

        \subsection{Central vertices in $1$-hyperbolic graphs}\label{sec:hardness:one-hyp}

We need the following more precise version of the Hitting Set Conjecture: there is no $\varepsilon > 0$ such that for all $\gamma \ge 1$, there is an algorithm that given two lists $A, B$ of $n$ subsets of a universe $U$ of size at most $\gamma \log{n}$, can decide in $\cO(n^{2-\varepsilon})$ time if there is a set in the first list that intersects every set in the second list~\cite{AVW16}.

\begin{proof}[Proof of Theorem~\ref{thm:center:half-hyp}]
    Let $\varepsilon \in (0,1)$ be arbitrary.
    In what follows, we prove that under the Hitting Set conjecture, we cannot solve the {\sc Center} problem in $\cO(n^{2-\varepsilon})$ time on $1$-hyperbolic graphs with $n$ vertices and at most $n^{1+o(1)}$ edges.
    For that, let $\langle A,B,U \rangle$ be an HS-instance such that $A$ and $B$ are families of $n$ sets over a common universe $U$, such that $|U| \le \gamma \cdot \log{n}$ for some constant $\gamma$ only depending on $\varepsilon$.

    The HS-graph $G=(V,E)$ is defined in~\cite{AVW16} as follows:
    \begin{itemize}
        \item $V = A \cup B \cup U \cup \{x,y,z\}$;
        \item for every $a \in A$, for every $u \in U$, $au \in E$ if and only if $u \in a$;
        \item for every $b \in B$, for every $u \in U$, $bu \in E$ if and only if $u \in b$;
        \item we add an edge between $x$ and every vertex of $A \cup U$;
        \item we add an edge between $y$ and every vertex of $A$;
        \item finally, we add the edge $yz$.
    \end{itemize}
    The graph $G$ can be constructed from $\langle A,B,U \rangle$ in $\cO(n|U|) = \cO(n\log{n})$ time.
    Furthermore, it was proved in~\cite{AVW16} $\rad(G) = 2$ if and only if there exists a set $a \in A$ such that, for every $b \in B$, $a \cap b \ne \emptyset$.
    By construction, $\diam(G) \ge 4$; graphs of diameter at most four are $2$-hyperbolic.
    In what follows, we pre-process the HS-instance $\langle A,B,U \rangle$ so that the resulting HS-graph is $1$-hyperbolic.
    Since $z$ is a pendant vertex, and the hyperbolicity of a graph is the maximum hyperbolicity over its biconnected components~\cite{CliqueSep}, we first observe that both the graphs $G$ and $G \setminus z$ have the same hyperbolicity value.
    Furthermore, we recall that every graph of diameter at most three is $1$-hyperbolic~\cite{KoMo}.
    Therefore, to prove that $G$ is $1$-hyperbolic, it suffices to prove $\diam(G \setminus z) \le 3$.
    For that, we pre-process the HS-instance $\langle A,B,U \rangle$ as follows:
    \begin{enumerate}
        \item Let $U_A = \{ u \in U : \exists a \in A \ \text{s.t.} \ u \in a \}$.
        Similarly, let $U_B = \{u \in U : \exists b \in B \ \text{s.t.} \ u \in b\}$.
        We replace $\langle A,B,U\rangle$ with a new HS-instance $\langle A',B',U'\rangle$, where $A' = \{ a \cap U_B : a \in A \}$, $B' = \{ b \cap U_A : b \in B \}$, and $U' = U_A \cap U_B$.
        This first transformation does not affect the existence of a hitting set in $A$.
        \item Let $u_B \notin U$ be a fresh new element. 
        We replace $\langle A,B,U\rangle$ with a new HS-instance $\langle A,B',U\cup\{u_B\}\rangle$, where $B' = \{b \cup \{u_B\} : b \in B\}$.
        In doing so, the sets of $B$ can be assumed to pairwise intersect.
        Furthermore, this second transformation does not affect the existence of a hitting set in $A$ either.
    \end{enumerate}
    Apply both transformations above sequentially.
    Then, let $G$ be the resulting HS-graph.
    Since $B_1(x) = A \cup U \cup \{x\}$, the vertices of this ball are pairwise at distance at most two.
    Furthermore, $A \subseteq N(y)$, and $B \subseteq N(U)$.
    Therefore, every vertex of $B \cup \{y\}$ is at a distance at most three to every vertex of $B_1(x)$.
    By the second transformation above, the vertices of $B$ are pairwise at distance two.
    Therefore, to prove that $\diam(G \setminus z) \le 3$, we are left proving that $y$ is at a distance at most three to every vertex of $B$.
    By the first transformation above, every vertex of $U \setminus \{u_B\}$ has a neighbour both in $A$ and $B$.
    In particular, every vertex of $B$ must be at distance two to some vertex in $A \subseteq N(y)$.
    The latter indeed proves that $d(y,b) \le 3$ for every $b \in B$.
\end{proof}

Let $u_0,u_1,u_2$ be vertices in a graph $G$.
We can define a {\em geodesic triangle} $\Delta(u_0,u_1,u_2)$ as the union of three shortest-paths $P_0,P_1,P_2$, where for each $i$ $P_i$ is a shortest $u_iu_{i+1}$-path (indices are taken modulo $3$).
Furthermore, $P_0,P_1,P_2$ are called the sides of the triangle.
The geodesic triangle is called {\em $k$-slim} if for each $i$, for every vertex $v_i \in V(P_i)$, $d\left(v_i,P_{i+1}\cup P_{i+2}\right) \le k$.
Finally, the least $k$ such that every geodesic triangle is $k$-slim is called the {\em slimness} of $G$.
By definition, the slimness of a graph must be a natural number.
Graphs with slimness $0$ are exactly the block graphs~\cite{Slimness}.
However, we can prove that our construction for Theorem~\ref{thm:hardness:1-hyp} outputs some HS-graphs of slimness at most $1$.
For that, we use the following observations:
\begin{itemize}
    \item If $z$ is a pendant vertex of a graph $G$, then the slimness constants for $G$ and $G \setminus z$ are the same.
    Indeed, since $G \setminus z$ is an isometric subgraph of $G$, its slimness is at most that of $G$.
    Furthermore, every geodesic triangle $\Delta(u_0,u_1,u_2)$ of $G$ that is not a geodesic triangle of $G \setminus z$ satisfies $z \in \{u_0,u_1,u_2\}$.
    By symmetry, let $u_0 = z$.
    Then, both shortest paths $P_0,P_2$ contain the extremal edge $zy$, where $y$ is the unique neighbor of $z$ in $G$.
    It implies that $\Delta(u_0,u_1,u_2)$ is the union of the edge $yz$ with a geodesic triangle $\Delta'(y,u_1,u_2)$ of $G \setminus z$.
    Hence, the slimness of $G$ is at most that of $G \setminus z$.
    \item The slimness of a graph $G$ is at most $\left\lfloor \frac{\diam(G)}2\right\rfloor$.
    Indeed, this is because for every geodesic triangle $\Delta(u_0,u_1,u_2) = P_0 \cup P_1 \cup P_2$, for each $i$, for every vertex $v_i \in V(P_i)$,
    $$d\left(v_i,P_{i+1}\cup P_{i+2}\right) \le \min\{d(v_i,u_i),d(v_i,u_{i+1})\} \le \left\lfloor \frac{d(u_i,u_{i+1})}2 \right\rfloor \le \left\lfloor \frac{\diam(G)}2\right\rfloor$$
\end{itemize}
Overall, under the Hitting Set conjecture, the {\sc Center} problem requires essentially quadratic time on graphs with slimness at least one.


\bibliographystyle{amsplain}
\bibliography{biblio.bib}

\end{document}